\documentclass[aps,pra,twocolumn,superscriptaddress,longbibliography]{revtex4-2} 

\usepackage{thmtools}
\usepackage{amsmath}
\usepackage{amsfonts}
\usepackage{amssymb}
\usepackage{graphicx}
\usepackage{color}
\usepackage{accents}
\usepackage[colorlinks=true, citecolor=cyan]{hyperref}
\usepackage{enumitem}

\newcommand{\ket}[1]{|#1\rangle}
\newcommand{\bra}[1]{\langle #1|}

\usepackage{tikz}

\makeatletter 
\newsavebox{\@brx}
\newcommand{\llangle}[1][]{\savebox{\@brx}{\(\m@th{#1\langle}\)}%
  \mathopen{\copy\@brx\kern-0.5\wd\@brx\usebox{\@brx}}}
\newcommand{\rrangle}[1][]{\savebox{\@brx}{\(\m@th{#1\rangle}\)}%
  \mathclose{\copy\@brx\kern-0.5\wd\@brx\usebox{\@brx}}}
\makeatother

\newlength{\dhatheight} 

\newtheorem{theorem}{Theorem} 
\newtheorem{lemma}[theorem]{Lemma}
\newtheorem{proposition}[theorem]{Proposition}

\newcommand{\qed}{\nobreak \ifvmode \relax \else
      \ifdim\lastskip<1.5em \hskip-\lastskip
      \hskip1.5em plus0em minus0.5em \fi \nobreak
      \vrule height0.75em width0.5em depth0.25em\fi}

\usepackage{amsmath}
\usepackage{enumitem}
\usepackage{tikz}

\begin{document}

\title{Dynamical signatures of conventional and asymptotic quantum many-body scars \\ on a trapped ion simulator}

\author{Leonard Logari\'c} 
\email[]{logaricl@tcd.ie}
\affiliation{Department of Physics, Trinity College Dublin, Dublin 2, Ireland} 
\affiliation{Trinity Quantum Alliance, Unit 16, Trinity Technology and Enterprise Centre, Pearse Street, Dublin 2, D02 YN67, Ireland}
\author{John Goold} 
\email[]{gooldj@tcd.ie}
\affiliation{Department of Physics, Trinity College Dublin, Dublin 2, Ireland}
\affiliation{Trinity Quantum Alliance, Unit 16, Trinity Technology and Enterprise Centre, Pearse Street, Dublin 2, D02 YN67, Ireland}
\affiliation{Algorithmiq Limited, Kanavakatu 3C 00160 Helsinki, Finland}
\author{Shane Dooley} 
\email[]{dooleysh@gmail.com}
\affiliation{Dublin Institute for Advanced Studies, School of Theoretical Physics,
10 Burlington Road, Dublin, D04 C932, Ireland}
\affiliation{Trinity Quantum Alliance, Unit 16, Trinity Technology and Enterprise Centre, Pearse Street, Dublin 2, D02 YN67, Ireland}

\date{\today}
 
\begin{abstract}
One of the promising applications of digital quantum processors is the simulation of many-body quantum systems. They have been already used to investigate several ergodicity violating mechanisms, such as many-body localisation, Hilbert space fragmentation and quantum many-body scars (QMBS). In addition to conventional QMBS, a recently discovered mechanism for ergodicity violation are the so-called asymptotic quantum many-body scars (AQMBS). These become more stable as system size is increased, leading to progressively longer thermalisation timescales. In this work, we show a connection between gapless excitations of certain ``reference'' Hamiltonians and AQMBS in a related set of Hamiltonian and circuit models. We use this connection to construct a 2-local model hosting both conventional and asymptotic scars. By exploiting the structure of the AQMBS states and the all-to-all connectivity of the Quantinuum H1-1 quantum processor, we prepare these states in logarithmic circuit depth, and probe their thermalisation under Floquet circuit dynamics. Performing simulations with up to $418$ entangling $ZZ$ gates, we find slower thermalisation times as the system size is increased, providing the first experimental signatures of AQMBS.

\end{abstract}

\maketitle

\section{Introduction}

Rapid advances in the laboratory have made it possible to experimentally explore the unitary dynamics of many-body quantum systems. In particular, state-of-the-art quantum simulators, including platforms based on Rydberg atoms \cite{Ada-19a}, superconducting circuits \cite{Kja-20a} and trapped ions \cite{Fos-24a-arxiv}, now routinely achieve high-fidelity preparation of entangled states, programmable interactions, and dynamics with relatively large circuit depth. These capabilities make it possible to investigate fundamental questions, such as the mechanisms responsible for thermalisation, or failure to thermalise, in isolated many-body systems systems \cite{Mor-18,Fau-24a,Fis-26a,Yan-25a-arxiv}. 

A phenomenon which has garnered significant attention is weak ergodicity breaking, in which a non-integrable system thermalises for almost all initial states but exhibits long-lived nonthermal dynamics for a restricted set of initial states. This is due the presence of quantum many-body scars (QMBS): a small number of atypical eigenstates embedded within an otherwise thermalising many-body spectrum \cite{Tur-18a, Doo-21a,Mou-22a,Cha-23b,Logaric-24}. When an initial state overlaps significantly with these anomalous eigenstates, the ensuing time evolution shows dynamical signatures of the QMBS that can be measured experimentally, such as long-lived coherent oscillations or anomalously slow decay of observables \cite{Ber-17,Blu-21,Su-23a,Zha-25a,Zha-23a}. Very recently, the existence of so-called asymptotic QMBS (AQMBS) was also proven theoretically \cite{Got-23}. These are special states, which are not eigenstates of a system at finite size, but emerge as QMBS only in the thermodynamic limit. As a consequence, AQMBS have the unusual dynamical property that they have relaxation times that \emph{diverge} with increasing system size.
 
Both QMBS and AQMBS have been shown theoretically to arise in a wide variety of many-body models \cite{Kun-25, Gio-26}. Nevertheless, experimental realisations of QMBS remain confined to a narrow class of systems \cite{Ber-17,Blu-21,Su-23a,Zha-25a,Zha-23a}, most notably those effectively described by PXP-like Hamiltonians \cite{Ber-17,Blu-21,Su-23a,Zha-25a}, while no experiment to date has reported observations of AQMBS. Bridging this gap between theory and experiment is challenging because many known constructions of QMBS rely either on local Hilbert spaces of dimension greater than two (such as spin-1 models including the $XY$ magnet \cite{Sch-19, Cha-20, Cha-23b, Doo-23a}) or on interaction terms that extend beyond simple two-body couplings (such as the multi-body constraints underlying the PXP-like models \cite{Tur-18a, Doo-22a, Zha-25a}). By contrast, the most widely available ingredients across current quantum simulation platforms are local Hilbert spaces of dimension two (i.e., qubits) with two-body interactions. An experimental protocol for realising a single conventional QMBS in such models has been proposed in \cite{Lar-26a}. 

In this work, we outline a general theoretical construction for models with QMBS and AQMBS as a deformation of a ``reference'' Hamiltonian. Using this construction, we introduce a model on a chain of qubits with experimentally realisable nearest-neighbour two-body interactions that hosts both QMBS and AQMBS. The model is parametrised by a complex variable $g$, and hosts a QMBS that is an edge-localised state: for $|g| \gg 1$ it is confined to the left boundary of the chain, for $|g| \ll 1$ it localises at the right boundary, and at the critical point $|g|=1$ it becomes fully delocalised across the chain. At this critical point we also observe AQMBS, whose existence can be directly linked to gapless excitations of the reference Hamiltonian, following a connection first identified in Ref. \cite{Kun-25}.
A key advantage of our model is its straightforward implementation on current digital quantum simulators. We demonstrate this explicitly using the Quantinuum H1-1 trapped-ion quantum computer and probe the dynamics of the total magnetisation starting from several classes of initial states: conventional QMBS, non-QMBS, asymptotic QMBS, and local edge excitations. As expected, non-QMBS initial states exhibit rapid thermalisation, while dynamics initiated in a conventional QMBS display a markedly slower decay that is essentially independent of system size. By contrast, when the system is initialised in an asymptotic QMBS, the relaxation time increases with system size as the initial state approaches a stationary eigenstate, providing the first experimental evidence for asymptotic QMBS.

\section{Results} \label{sec:results}

\subsection{General theoretical framework}
\label{subsec: general framework}

In this section, we present a framework for identifying models that host conventional as well as asymptotic QMBS. Consider a system of $N$ qudits (i.e., particles with Hilbert space dimension $d$), labelled $n \in \{0,1,\hdots,N-1 \}$, arranged on a $D$-dimensional lattice $\Lambda$. We also introduce a set of local interactions $\{ \hat{H}_{X_n} \}_{n=0}^{N-1}$, where each $\hat{H}_{X_n}$ is a Hermitian operator acting non-trivially only on a finite number of qudits in a local neighbourhood $X_n \subset \Lambda$ of the qudit $n$. These local interactions serve as the fundamental building block for both circuit and time-independent Hamiltonian models: in the circuit setting, they generate local unitary gates $\hat{U}_{X_n} = \exp(i \hat{H}_{X_n})$, which are composed into a many-body circuit unitary $\hat{\mathbb{U}}$, while in the Hamiltonian setting they define a many-body Hamiltonian $\hat{\mathbb{H}} = \sum_{n} \hat{H}_{X_n}$.

To construct models with QMBS and AQMBS it is convenient to restrict attention to interactions of the form: \begin{equation} \hat{H}_{X_n} = \hat{P}_{X_n} \hat{h}_{X_n} \hat{P}_{X_n} , \end{equation} where $\hat{P}_{X_n} \neq \hat{I}^{\otimes N}$ is a local projector and $\hat{h}_{X_n}$ is an arbitrary Hermitian operator, both acting non-trivially only on the qudits in $X_n$. We also define a local Hamiltonian: \begin{equation} \hat{\mathbb{H}}_+ = \sum_{n} \hat{P}_{X_n} , \end{equation} which we call the ``reference'' Hamiltonian. This allows us to write the following result, variants of which are widely used in the literature to construct models with conventional QMBS \cite{Shi-17,Mou-20a,Lar-24,Lar-26a}:

\begin{theorem} \label{theorem:QMBS}
  If the reference Hamiltonian $\hat{\mathbb{H}}_+ \equiv \sum_{n} \hat{P}_{X_n}$ is \emph{frustration-free}, that is, if a ground state $\ket{\mathcal{G}}$ is simultaneously a ground state of each individual projector $\hat{P}_{X_n}$, then $\ket{\mathcal{G}}$ remains an exact eigenstate of any corresponding circuit or Hamiltonian model constructed with the more general local interactions $\hat{H}_{X_n} = \hat{P}_{X_n} \hat{h}_{X_n} \hat{P}_{X_n}$.
\end{theorem}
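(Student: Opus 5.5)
The plan is to reduce everything to the local annihilation property $\hat{P}_{X_n}\ket{\mathcal{G}}=0$ for every $n$, and then propagate it through the Hamiltonian and through each gate of the circuit.

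\emph{Step 1 (frustration-freeness gives local annihilation).} Each $\hat{P}_{X_n}$ is a projector, so its spectrum lies in $\{0,1\}$ and its minimal eigenvalue is $0$, since $\hat{P}_{X_n}\neq \hat{I}$. Hence $\hat{\mathbb{H}}_+\geq 0$. By assumption $\ket{\mathcal{G}}$ is simultaneously a ground state of every $\hat{P}_{X_n}$, which means $\hat{P}_{X_n}\ket{\mathcal{G}}=0$ for all $n$, and consequently $\hat{\mathbb{H}}_+\ket{\mathcal{G}}=0$. A small point to address is the case where "ground state of each projector" is read with a nonzero minimal eigenvalue. This cannot happen for a nontrivial projector, so the reading above is the only one, and it is the standard frustration-free condition. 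Conversely, any state with $\hat{\mathbb{H}}_+\ket{\mathcal{G}}=0$ satisfies $\sum_n\bra{\mathcal{G}}\hat{P}_{X_n}\ket{\mathcal{G}}=\sum_n\|\hat{P}_{X_n}\ket{\mathcal{G}}\|^2=0$. So either formulation leads to the same annihilation property.

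\emph{Step 2 (Hamiltonian models).} Every term annihilates the state, since $\hat{H}_{X_n}\ket{\mathcal{G}}=\hat{P}_{X_n}\hat{h}_{X_n}\hat{P}_{X_n}\ket{\mathcal{G}}=0$. Therefore $\hat{\mathbb{H}}\ket{\mathcal{G}}=0$, so $\ket{\mathcal{G}}$ is an exact eigenstate with eigenvalue $0$. The same argument applies to any linear combination $\sum_n c_n\hat{H}_{X_n}$ with real $c_n$, which covers inhomogeneous couplings.

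\emph{Step 3 (circuit models).} Expanding the gate as a power series gives $\hat{U}_{X_n}=\exp(i\hat{H}_{X_n})=\hat{I}+\sum_{k\geq1}\frac{i^k}{k!}\hat{H}_{X_n}^k$. Every term with $k\geq1$ ends in $\hat{P}_{X_n}$ on the right, so $\hat{U}_{X_n}\ket{\mathcal{G}}=\ket{\mathcal{G}}$. The circuit $\hat{\mathbb{U}}$ is an arbitrary product of such gates, possibly with repeated gates and in any order or layering. Applying the gates one by one, each acts on the unchanged state $\ket{\mathcal{G}}$, so $\hat{\mathbb{U}}\ket{\mathcal{G}}=\ket{\mathcal{G}}$, an eigenstate with eigenvalue $1$.

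There is no real obstacle. The only point needing care is Step 1: "ground state" must be translated into exact annihilation by each projector, and this relies on the projectors being nontrivial. Once that is in hand, Steps 2 and 3 follow immediately. In Step 3 it is worth stating explicitly that the argument holds for any gate ordering, since gates on overlapping $X_n$ need not commute.
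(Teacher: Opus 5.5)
Your proof is correct and takes the same route as the paper: frustration-freeness gives $\hat{P}_{X_n}\ket{\mathcal{G}}=0$ for every $n$, so each $\hat{H}_{X_n}=\hat{P}_{X_n}\hat{h}_{X_n}\hat{P}_{X_n}$ annihilates $\ket{\mathcal{G}}$, and hence $\ket{\mathcal{G}}$ is an eigenstate of any Hamiltonian or circuit built from these terms. Your power-series argument that each gate $\exp(i\hat{H}_{X_n})$ fixes $\ket{\mathcal{G}}$, for any gate ordering, makes explicit a step the paper leaves implicit.
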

The proof is straightforward: A ground state $\ket{\mathcal{G}}$ of the projectors obeys $\hat{P}_{X_n} \ket{\mathcal{G}} = 0$, which implies that it is also a ground state of the reference Hamiltonian ($\hat{\mathbb{H}}_{+} \ket{\mathcal{G}} = 0$), since it is non-negative ($\hat{\mathbb{H}}_+ \geq 0$). It follows that $\hat{H}_{X_n} \ket{\mathcal{G}} = \hat{P}_{X_n} \hat{h}_{X_n} \hat{P}_{X_n} \ket{\mathcal{G}} = 0$ for any choice of $\hat{h}_{X_n}$. Therefore, any many-body circuit or Hamiltonian constructed from the interactions $\hat{H}_{X_n}$ has $\ket{\mathcal{G}}$ as an eigenstate. In such cases, $\ket{\mathcal{G}}$ typically appears as a QMBS rather than as a ground state. In other words, as we deform the reference Hamiltonian $\hat{\mathbb{H}}_+$ into either the non-positive Hamiltonian $\hat{\mathbb{H}}$ or the circuit model $\hat{\mathbb{U}}$, the state $\ket{\mathcal{G}}$ is transformed from a ground state to a QMBS. 

Suppose that the reference Hamiltonian $\hat{\mathbb{H}}_+$ has gapless excitations, which become zero-energy ground states only asymptotically, in the thermodynamic limit. One might expect that such states become QMBS in the thermodynamic limit, i.e., are AQMBS, as was conjectured in Ref. \cite{Mou-24}. This intuition is supported by the following result, which we prove in the Supplemental Material (SM), Sec. \ref{app: general_proof_aqmbs} \footnote{We note that the general theoretical framework of Theorem 2 complements the results of Refs. \cite{Ren-24, Kun-25, Gio-26}, which provided alternative constructions of models hosting AQMBS.}:

\begin{restatable}{theorem}{aqmbsthm}
\label{theorem:AQMBS_main_text}
  If the reference Hamiltonian $\hat{\mathbb{H}}_+ = \sum_{n} \hat{P}_{X_n}$ has positive-energy eigenstates $\ket{\mathcal{A}}$ for finite $N$ that become zero-energy ground states in the thermodynamic limit $\hat{\mathbb{H}}_+ \ket{\mathcal{A}} \stackrel{N\to\infty}{\longrightarrow} 0$ (e.g., gapless excitations), then the corresponding finite-depth circuit or Hamiltonian model constructed from the interactions $\hat{H}_{X_n} = \hat{P}_{X_n} \hat{h}_{X_n} \hat{P}_{X_n}$ will host AQMBS $\ket{\mathcal{A}}$.
\end{restatable}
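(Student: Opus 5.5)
The plan is to make the notion of AQMBS concrete: a family of normalised states $\ket{\mathcal{A}}$ such that the energy variance (Hamiltonian case), or the quasi-energy deviation $\|\hat{\mathbb{U}}\ket{\mathcal{A}} - e^{i\phi}\ket{\mathcal{A}}\|$ (circuit case), vanishes as $N\to\infty$. We must also show that $\ket{\mathcal{A}}$ is not an exact eigenstate at finite $N$, or at least that it is distinct from the frustration-free scar $\ket{\mathcal{G}}$ of Theorem~\ref{theorem:QMBS}. The whole argument rests on one observation. Since each $\hat{P}_{X_n}$ is a projector, $\hat{P}_{X_n} = \hat{P}_{X_n}^\dagger \hat{P}_{X_n}$. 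Hence
\begin{equation}
\bra{\mathcal{A}}\hat{\mathbb{H}}_+\ket{\mathcal{A}} = \sum_n \|\hat{P}_{X_n}\ket{\mathcal{A}}\|^2 \equiv \epsilon_N ,
\end{equation}
and the hypothesis $\hat{\mathbb{H}}_+\ket{\mathcal{A}}\to 0$ gives $\epsilon_N \to 0$. I will read the hypothesis as the statement about the energy $\epsilon_N$, which for an eigenstate is the same thing. Consequently every local term is small on $\ket{\mathcal{A}}$, and so is the sum of their squares. This is the asymptotic analogue of the exact annihilation $\hat{P}_{X_n}\ket{\mathcal{G}}=0$ used in the proof of Theorem~\ref{theorem:QMBS}.

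\textbf{Hamiltonian case.} Write $\hat{\mathbb{H}}\ket{\mathcal{A}} = \sum_n \hat{P}_{X_n}\hat{h}_{X_n}\hat{P}_{X_n}\ket{\mathcal{A}}$. Assume the local norms are uniformly bounded, $\|\hat{h}_{X_n}\|\le h$. Then by the triangle inequality and Cauchy--Schwarz,
\begin{equation}
\|\hat{\mathbb{H}}\ket{\mathcal{A}}\| \le h \sum_n \|\hat{P}_{X_n}\ket{\mathcal{A}}\| \le h\sqrt{N\epsilon_N}.
\end{equation}
This bound is too weak if $\epsilon_N\sim 1/N$, which is typical for gapless excitations. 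I would therefore not use it and instead expand
\begin{equation}
\|\hat{\mathbb{H}}\ket{\mathcal{A}}\|^2 = \sum_{n,m}\bra{\mathcal{A}}\hat{H}_{X_n}\hat{H}_{X_m}\ket{\mathcal{A}}.
\end{equation}
For $X_n\cap X_m=\emptyset$ the two terms commute. Then
\begin{equation}
\bra{\mathcal{A}}\hat{H}_{X_n}\hat{H}_{X_m}\ket{\mathcal{A}} = \bra{\mathcal{A}}\hat{P}_{X_n}\hat{P}_{X_m}\hat{h}_{X_n}\hat{h}_{X_m}\hat{P}_{X_m}\hat{P}_{X_n}\ket{\mathcal{A}},
\end{equation}
which is bounded by $h^2\|\hat{P}_{X_n}\hat{P}_{X_m}\ket{\mathcal{A}}\|^2$. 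The overlapping pairs number only $O(N)$ and are each bounded by $h^2\|\hat{P}_{X_n}\ket{\mathcal{A}}\|\,\|\hat{P}_{X_m}\ket{\mathcal{A}}\|$, which sums to $O(\epsilon_N)$ by locality. The key remaining quantity is $\sum_{n,m}\|\hat{P}_{X_n}\hat{P}_{X_m}\ket{\mathcal{A}}\|^2$. This is bounded by $\bra{\mathcal{A}}\hat{\mathbb{H}}_+^2\ket{\mathcal{A}}$ up to $O(\epsilon_N)$ corrections from overlapping terms, and equals $\epsilon_N^2$ when $\ket{\mathcal{A}}$ is an eigenstate of $\hat{\mathbb{H}}_+$. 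Hence the variance $\langle\hat{\mathbb{H}}^2\rangle-\langle\hat{\mathbb{H}}\rangle^2 \le \|\hat{\mathbb{H}}\ket{\mathcal{A}}\|^2 = O(\epsilon_N)\to 0$, which is the defining property of an AQMBS.

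\textbf{Circuit case.} A finite-depth circuit is a product of $L$ layers, each a product of commuting gates $\hat{U}_{X_n}=e^{i\hat{H}_{X_n}}$. Because $\hat{H}_{X_n}=\hat{P}_{X_n}\hat{h}_{X_n}\hat{P}_{X_n}$, we have $\hat{U}_{X_n} = \hat{I} + \hat{P}_{X_n}(e^{i\hat{P}\hat{h}\hat{P}}-\hat{I})\hat{P}_{X_n}$. Thus each layer acts as $\hat{I}+\hat{\Delta}$, where $\hat{\Delta}$ is a sum of operators each sandwiched by projectors, and the same pairwise estimate gives $\|\hat{\Delta}\ket{\mathcal{A}}\|^2=O(\epsilon_N)$. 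Telescoping over the $L$ layers gives $\|\hat{\mathbb{U}}\ket{\mathcal{A}}-\ket{\mathcal{A}}\| = O(L\sqrt{\epsilon_N})\to 0$.

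\textbf{Main obstacle.} The hard step is controlling the double sum $\sum_{n,m}\|\hat{P}_{X_n}\hat{P}_{X_m}\ket{\mathcal{A}}\|^2$ by $\langle\hat{\mathbb{H}}_+^2\rangle$ when the projectors do not commute. I would handle it using bounded overlap degree: any two non-commuting terms share support, and each site lies in $O(1)$ supports. Finally, the non-stationarity at finite $N$ (the "asymptotic" part of the claim) holds for generic $\hat{h}_{X_n}$ because $\epsilon_N>0$ means some $\hat{P}_{X_n}\ket{\mathcal{A}}\neq 0$. This should be stated as a genericity assumption rather than proved.
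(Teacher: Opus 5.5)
Your Hamiltonian argument is correct and follows the paper's proof of Proposition~\ref{proposition:hamiltonian_AQMBS}. You split $\|\hat{\mathbb{H}}\ket{\mathcal{A}}\|^2$ into diagonal, disjoint and overlapping pairs. You control the disjoint pairs through $\bra{\mathcal{A}}\hat{\mathbb{J}}_+\ket{\mathcal{A}}$, which is bounded by $\bra{\mathcal{A}}\hat{\mathbb{H}}_+^2\ket{\mathcal{A}}$ up to $O(\epsilon_N)$; this is the content of part 2 of Lemma~\ref{lemma:vanishing_quantities}. You handle the overlapping pairs with bounded overlap degree. For those pairs your bound $|\bra{\mathcal{A}}\hat{H}_{X_n}\hat{H}_{X_m}\ket{\mathcal{A}}|\le h^2\|\hat{P}_{X_n}\ket{\mathcal{A}}\|\,\|\hat{P}_{X_m}\ket{\mathcal{A}}\|$ replaces the paper's spectral decomposition and the operator $\hat{\mathbb{F}}_+$. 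It is simpler and gives the explicit rate $O(\epsilon_N)$.

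The circuit step does not go through as written. A layer is $\prod_{n\in L^{(\ell)}}(\hat{I}+\hat{D}_n)$ with $\hat{D}_n=\hat{P}_{X_n}(\hat{U}_{X_n}-\hat{I})\hat{P}_{X_n}$. So $\hat{\Delta}$ is not $\sum_n\hat{D}_n$: it is a sum over all nonempty subsets $S$ of products $\prod_{n\in S}\hat{D}_n$. Bounding the cross terms of this expansion pairwise, term by term, gives exponentially many contributions and no useful estimate.

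The missing ingredient is unitarity. Since the gates in a layer commute, the layer equals $e^{i\hat{\mathbb{H}}^{(\ell)}}$ with $\hat{\mathbb{H}}^{(\ell)}=\sum_{n\in L^{(\ell)}}\hat{P}_{X_n}\hat{h}_{X_n}\hat{P}_{X_n}$. Writing $e^{iH}-I=i\int_0^1 e^{isH}H\,ds$ gives $\|(e^{i\hat{\mathbb{H}}^{(\ell)}}-\hat{I})\ket{\mathcal{A}}\|\le\|\hat{\mathbb{H}}^{(\ell)}\ket{\mathcal{A}}\|$. Your Hamiltonian estimate applied to $\hat{\mathbb{H}}^{(\ell)}$ then gives $O(\sqrt{\epsilon_N})$. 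This is the paper's route in Proposition~\ref{proposition:circuit_AQMBS}.

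Alternatively, telescope inside the layer: $\hat{\Delta}=\sum_k\big(\prod_{j<k}\hat{U}_{X_j}\big)\hat{D}_k$. Because $\hat{P}_{X_k}$ commutes with $\hat{U}_{X_k}$ and with all gates on disjoint supports, each cross term is bounded by $4\|\hat{P}_{X_k}\hat{P}_{X_{k'}}\ket{\mathcal{A}}\|^2$. With either fix, your telescoping over the $L$ layers and the bound $O(L\sqrt{\epsilon_N})$ are correct.

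On your last point, the paper does not prove finite-$N$ non-stationarity either. It only notes that a positive-energy eigenstate of $\hat{\mathbb{H}}_+$ is automatically orthogonal to every zero-energy $\ket{\mathcal{G}}$. You get that for free and need not state it as an assumption.
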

\noindent While the conventional QMBS state $\ket{\mathcal{G}}$ will be stationary for time evolution generated by the Hamiltonian $\hat{\mathbb{H}}$ or the circuit $\hat{\mathbb{U}}$ for \emph{any} system size $N$, the AQMBS state $\ket{\mathcal{A}}$ will thermalise for finite $N$, but will approach a stationary state as the system size $N$ increases. More precisely, since the energy variance of $\ket{\mathcal{A}}$ decreases with $N$, the fidelity relaxation time will increase due to the Mandelstam-Tamm bound \cite{Man-45,Got-23}. As a consequence, the thermalisation of local observables, such as the total magnetisation, will be suppressed as the system size increases, which is an experimentally observable signature of AQMBS.

The general results in Theorems \ref{theorem:QMBS} and \ref{theorem:AQMBS_main_text} suggest a concrete strategy for constructing models with QMBS or AQMBS, with a set of desired properties:
If one can find a reference Hamiltonian $\hat{\mathbb{H}}_+$ with the desired properties that is also:
\begin{enumerate}[label=(\roman*)]
\item Frustration-free, and/or
\item Gapless or asymptotically frustration-free, 
\end{enumerate}
then the corresponding circuit or Hamiltonian model also obeys those properties and hosts (i) QMBS and/or (ii) AQMBS. 
Specifically, for our goal of constructing a model that is amenable to experiment on existing digital quantum simulators, we would like to find a reference Hamiltonian for a qubit system ($d=2$) that is built exclusively from local two-qubit projector interactions $\hat{P}_{X_n}$. In the next section, we identify a reference Hamiltonian that satisfies these conditions.

\subsection{PVBS reference Hamiltonian} \label{subsec:PVBS}

In Ref. \cite{Bac-12}, Bachmann and Nachtergaele introduced a class of Hamiltonians on an $N$ qubit chain with open boundary conditions, called product vacua with boundary state (PVBS) models. An example is given by: \begin{equation} \hat{\mathbb{H}}^{(g)}_+ = \sum_{n=0}^{N-2} \hat{P}^{(g)}_{n,n+1}, \quad \hat{P}^{(g)} = \ket{11}\bra{11} + \ket{\psi^{(g)}} \bra{\psi^{(g)}} , \label{eq:two_qubit_projector} \end{equation} where $\hat{P}_{n,n+1}^{(g)}$ is two-qubit projector acting non-trivially only on qubits $n$ and $n+1$, and: \begin{equation} |\psi^{(g)}\rangle = \frac{1}{\sqrt{1+|g|^2}} \big(g^*|01\rangle - |10\rangle \big) , \end{equation} is a two-qubit state parameterised by a complex variable $g$. Here, $\{ \ket{0}, \ket{1} \}$ is the single-qubit computational basis. The PVBS reference Hamiltonian $\hat{\mathbb{H}}^{(g)}_+$ is frustration-free, since it has two zero-energy ground states, the \emph{product vacuum} state: \begin{equation} \ket{\mathcal{V}} = \ket{0}^{\otimes N} , \end{equation} and the \emph{boundary} state: \begin{equation} \ket{\mathcal{B}^{(g)}} = \frac{1}{\mathcal{N}^{(g)}} \sum_{n=0}^{N-1} g^{-n} \ket{n} , \label{eq:BQMBS} \end{equation} which give their names to the PVBS class of models \cite{Bac-12} and are both annihilated by every projector $\hat{P}_{n,n+1}^{(g)}$ (see SM, Sec. \ref{app: edge_localised_state} for details). In Eq. \ref{eq:BQMBS}, $\mathcal{N}^{(g)}$ is a normalisation factor and $\ket{n} \equiv \ket{0_0 0_1 \hdots 0_{n-1} 1_n 0_{n+1} \hdots 0_{N-1}}$ is the computational basis state in which a single ``excitation'' is located on site $n$ of the chain. For $|g| > 1$ the complex amplitudes in Eq. \ref{eq:BQMBS} are decreasing exponentially as $n$ increases, so that the excitation is exponentially localised at the left side of the chain, while for $|g| < 1$ it is exponentially localised at the right side of the chain [see Fig. \ref{fig:circuits_and_initial_states}(b)]. The value $|g|=1$ is a critical point, at which the excitation is delocalised across the entire system. In SM, Sec. \ref{app: edge_localised_state} we highlight other critical features of the state $\ket{\mathcal{B}^{(g)}}$, including its half-chain entanglement entropy $S = -p \log p - (1-p) \log (1-p)$, $p = 1/(1+|g^{-1}|^N)$, which is singular at $|g|=1$ in the thermodynamic limit $N \to \infty$.

The PVBS reference Hamiltonian $\hat{\mathbb{H}}_+^{(g)}$ is also known to be gapless at $|g|=1$ \cite{Bra-15}. This is because one can find a set of eigenstates $\hat{\mathbb{H}}^{(|g|=1)}_+ \ket{\mathcal{A}_k} = \varepsilon_k \ket{\mathcal{A}_k}$ given by (see SM, Sec. \ref{app: aqmbs_concrete_model}):
\begin{eqnarray}
    \ket{\mathcal{A}_k} &=& \frac{1}{\mathcal{N}_k} \sum_{n=0}^{N-1} e^{in\phi} \cos\left[\left(N-n-\frac{1}{2}\right)\frac{k \pi}{N} \right] \ket{n}
    \label{eq: aqmbs_amplitudes} \\
    \varepsilon_k &=& 1 - \cos\frac{k\pi}{N}, \label{eq: aqmbs_energies}
\end{eqnarray}
where $k \in \{ 0, 1,..., N - 1\}$, $g=e^{i\phi}$, and $\mathcal{N}_k$ is a normalisation factor. The lowest energy eigenstate ($k=0$) in this set corresponds to the boundary state $\ket{\mathcal{A}_0} =  \ket{\mathcal{B}^{(|g|= 1)}}$, which as mentioned above, is a ground state with eigenvalue $\varepsilon_0 = 0$. However, if $k > 0$ is held fixed as $N$ increases, the energy gap $\varepsilon_k$ closes, signalling gapless excitations of $\hat{\mathbb{H}}_+^{(|g|=1)}$ in the thermodynamic limit.

Since the PVBS reference Hamiltonian $\hat{\mathbb{H}}_+^{(g)}$ is frustration-free, it follows from Theorem \ref{theorem:QMBS} that any circuit model or time-independent Hamiltonian model based on the two-qubit interactions $\hat{H}_{n,n+1}^{(g)} = \hat{P}^{(g)}_{n,n+1} \hat{h}_{n,n+1} \hat{P}^{(g)}_{n,n+1}$ will have the product state $\ket{\mathcal{V}}$ and the boundary state $\ket{\mathcal{B}^{(g)}}$ as conventional QMBS. Likewise, since the PVBS reference Hamiltonian has the gapless excitations $\ket{\mathcal{A}_{k>0}}$, by Theorem \ref{theorem:AQMBS_main_text}, these are AQMBS of any Hamiltonian or finite-depth circuit constructed from the two-qubit interactions $\hat{H}_{n,n+1}^{(|g|=1)} = \hat{P}_{n,n+1}^{(|g|=1)} \hat{h}_{n,n+1} \hat{P}_{n,n+1}^{(|g|=1)}$. We present numerical evidence confirming the presence of QMBS and AQMS in such models in SM, Sec. \ref{app:additional_classical_results}. However, in the next section, based on the PVBS reference Hamiltonian, we construct a Floquet circuit model hosting QMBS and AQMBS.


\begin{figure*}
    \centering
    \includegraphics[width=\linewidth]{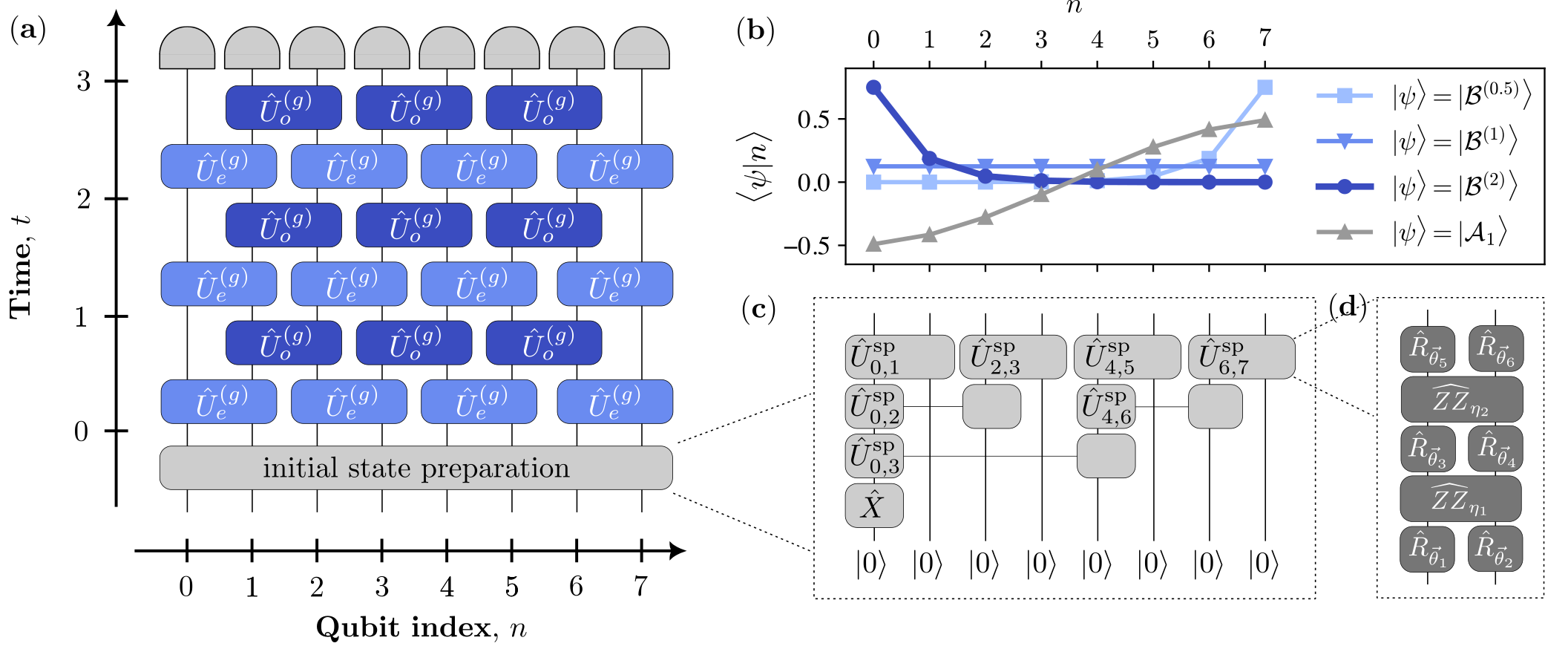}
    \caption{(a) The Floquet brickwork circuit of our experiment in the Quantinuum H1-1 processor. (b) The probability amplitudes of some examples of the boundary QMBS $\ket{\mathcal{B}^{(g)}}$ (blue lines) and the asymptotic QMBS $\ket{\mathcal{A}_1}$ (gray line), in the single-excitation basis $\{ \ket{n} \}_{n=0}^{N-1}$, $N=8$. (c) The circuit to prepare the asymptotic QMBS $\ket{\mathcal{A}_1}$ for $N=8$. (d) The decomposition of the two-qubit gates in terms of the single-qubit rotations $\hat{R}
    _{\vec{\theta}} = e^{i\theta^{(1)}\hat{Z}/2} e^{i\theta^{(2)}\hat{X}/2} e^{i\theta^{(3)}\hat{Z}/2}$ and the entangling gates $\widehat{ZZ}_\eta = e^{i\eta \hat{Z}\otimes \hat{Z}/2}$ that are native to the Quantinuum H1-1 device, where $\hat{Z} = \ket{0}\bra{0} - \ket{1}\bra{1}$. Appropriate choices of the gate parameters $\vec{\theta}_i$, $\eta_i$ give the required two-qubit gates $\hat{U}_{e/o}^{(g)}$ [in the brickwork circuit (a)] or $\hat{U}_{n,n'}^{\rm sp}$ [in the state preparation circuit (c)].}
    \label{fig:circuits_and_initial_states}
\end{figure*}

\subsection{Model and experimental setup} \label{subsec:model}

Although Theorems \ref{theorem:QMBS} and \ref{theorem:AQMBS_main_text} apply to both Hamiltonian and circuit models, throughout this work we focus primarily on circuit models because of their convenience for experimental implementation on available digital quantum simulators. Assuming that the number of qubits $N$ is even, we construct a time-periodic (Floquet) brickwork circuit, with a Floquet unitary $\hat{\mathbb{U}}^{(g)} = \hat{\mathbb{U}}_o^{(g)} \hat{\mathbb{U}}_e^{(g)}$ given by an ``even layer'' of two-qubit gates $\hat{\mathbb{U}}_e^{(g)} = [\hat{U}_e^{(g)}]^{\otimes N/2}$ followed by an ``odd layer'' $\hat{\mathbb{U}}_o^{(g)} = \hat{I} \otimes [\hat{U}_o^{(g)}]^{\otimes (N/2-1)} \otimes \hat{I}$ [see Fig. \ref{fig:circuits_and_initial_states}(a)]. Here, the two-qubit gates in the even and odd layers are given by $\hat{U}_e^{(g)} = \exp[i \hat{P}^{(g)} \hat{h}_e \hat{P}^{(g)}]$ and $\hat{U}_o^{(g)} = \exp[i \hat{P}^{(g)} \hat{h}_o \hat{P}^{(g)}]$, respectively, where $\hat{P}^{(g)}$ is the projector in Eq. \ref{eq:two_qubit_projector} and: \begin{eqnarray}  \hat{h}_{e}  &=& (1+2i)\ket{11}\bra{\psi^{(g)}} + (1-2i)\ket{\psi^{(g)}}\bra{11} ,\notag \\ \hat{h}_{o}  &=& \frac{6 + i\pi}{2} \ket{11}\bra{\psi^{(g)}} + \frac{6 - i\pi}{2} \ket{\psi^{(g)}}\bra{11}. \label{eq:h_e_h_o} \end{eqnarray}
We enforce open boundary conditions in the chain, which is necessary to observe the boundary QMBS and AQMBS. In SM Sec. \ref{app: level_spacing} we numerically compute level spacing statistics to verify that the Floquet unitary $\hat{\mathbb{U}}^{(g)}$ is non-integrable. We also numerically compute the half-system entanglement entropy of each eigenstate of the Floquet unitary, to show that the product vacuum state $\ket{\mathcal{V}}$ and the boundary state $\ket{\mathcal{B}^{(g)}}$ are low-entanglement outliers, i.e. exact QMBS, in contrast to the surrounding thermal eigenstates which obey a volume-law scaling of entanglement entropy.

We implement this circuit model on a trapped ion digital quantum computer -- the Quantinuum H1-1 processor \cite{quantinuum_h1_1_access, quantinuum_h1_datasheet_2023}. The two-qubit gates $\hat{U}_e^{(g)}$ and $\hat{U}_o^{(g)}$ that make up the even and odd layers are decomposed into single-qubit rotations and $\widehat{ZZ}_\eta = e^{i\frac{\eta}{2} \hat{Z} \otimes \hat{Z}}$, the native entangling operation of the Quantinuum H1-1 processor, as shown in Fig. \ref{fig:circuits_and_initial_states}(d). The most detrimental experimental errors during time evolution are due to the $\widehat{ZZ}_\eta$ gates. We note that a two-qubit unitary generally require 3 such entangling gates, but if $\hat{h} = \alpha \ket{11}\bra{\psi^{(g)}} + \alpha^* \ket{\psi^{(g)}}\bra{11}$ for $\alpha \in \mathbb{C}$, it can be decomposed with only 2. In order to minimise the accumulated errors, we therefore set the local generators as shown in Eq. \ref{eq:h_e_h_o}.

\begin{figure*}
    \centering
    \includegraphics[width=\linewidth]{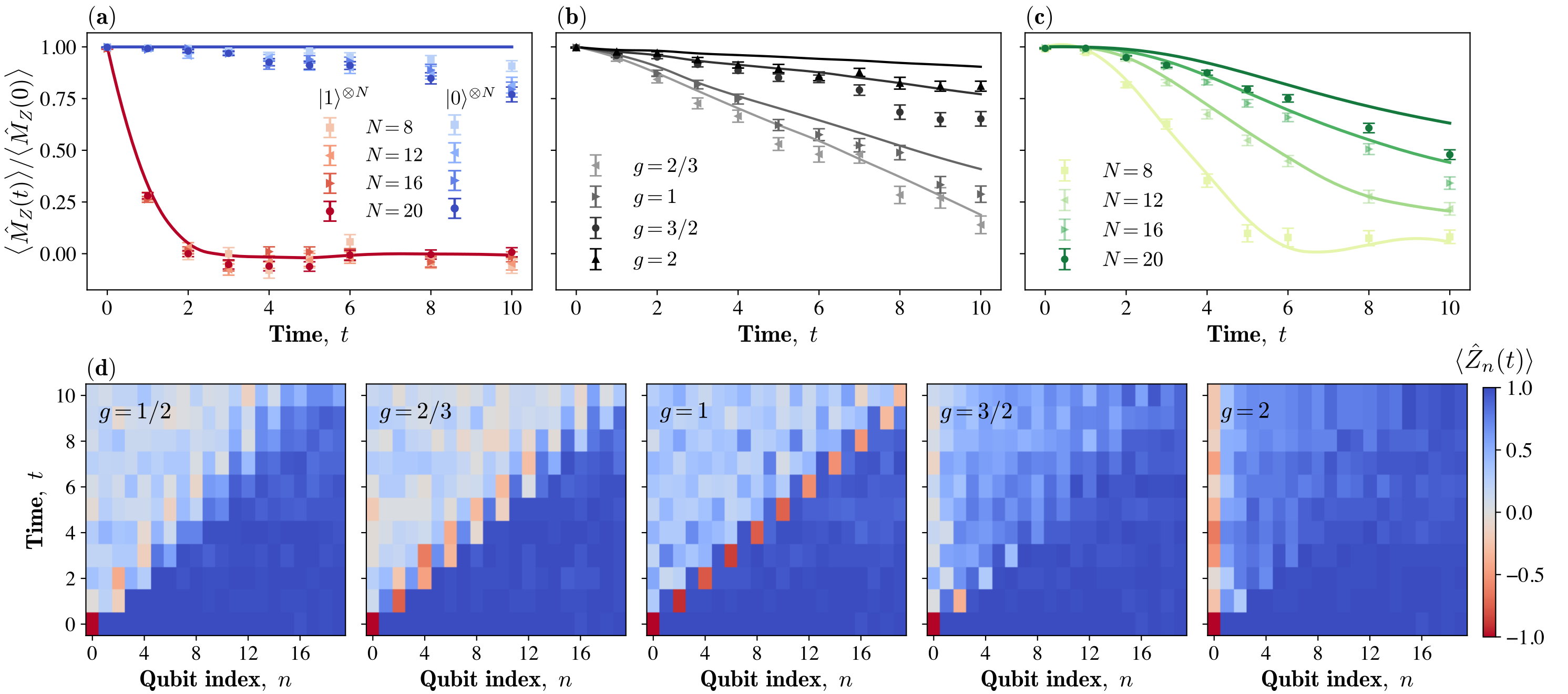}
    \caption{{\bf (a)} The time-evolved expectation value of the total magnetisation $\hat{M}_Z = \sum_{n=0}^{N-1}\hat{Z}_n$ (normalised by its initial value) for the exact QMBS initial state $\ket{\mathcal{V}} = \ket{0}^{\otimes N}$ (blue) and the non-QMBS initial state $\ket{1}^{\otimes N}$ (red). The non-QMBS initial state rapidly thermalises, whereas the exact QMBS initial state remains stationary, up to experimental imperfections. {\bf (b)} The evolution of the total magnetisation for the initial state $\ket{L}$, which has a large overlap with the boundary QMBS $\ket{\mathcal{B}^{(g)}}$ for $|g| \gg 1$. All of the results displayed in \textbf{(b)} are for simulations at system size $N = 20$. As $|g|$ increases, the overlap between $\ket{L}$ and $\ket{\mathcal{B}^{(g)}}$ grows, resulting in increasingly slow decay of the total magnetisation. {\bf (c)} The evolution of the total magnetisation for the asymptotic QMBS $\ket{\mathcal{A}_1}$ as the initial state. As the system size $N$ increases, $\ket{\mathcal{A}_1}$ the total magnetisation decays more slowly, a direct consequence of $\ket{\mathcal{A}_1}$ becoming more stable with increased system size. {\bf (d)} The spatially resolved magnetisation for the initial state $\ket{L}$. As $|g|$ increases, $\ket{L}$ approaches a stationary state.}
    \label{fig:exp_data}
\end{figure*}

\subsection{Experimental results}

First, we would like to experimentally observe the product QMBS $\ket{\mathcal{V}} = \ket{0}^{\otimes N}$ on the trapped-ion simulator. Since it is an eigenstate of our Floquet unitary, in principle it is a stationary state of the dynamics and any observable evolving from that initial state will be stationary, thus failing to thermalise. The product QMBS $\ket{\mathcal{V}}$ is easily prepared as an initial state on the trapped-ion device. In Fig. \ref{fig:exp_data}(a), we plot the experimental data, corresponding to evolution of the total magnetization $\sum_n \langle \hat{Z}_n(t) \rangle$ for this initial state, as a function of the number of Floquet cycles $t$ for system sizes $N= 8, 12, 16, 20$. For comparison, we also show results for the initial product state $\ket{1}^{\otimes N}$, which rapidly decays towards $\sum_n \langle \hat{Z}_n (t) \rangle \approx 0$, consistent with each local subsystem thermalising to its infinite-temperature density matrix. The very slow decay from the QMBS initial state, compared to the rapid thermalisation from the non-QMBS initial state is a clear signature of the conventional QMBS $\ket{\mathcal{V}} = \ket{0}^{\otimes N}$, which persists in spite of the experimental errors. We note that increasing the system size leads to greater accumulated errors, primarily due to the increased number of entangling gates, which is observed in the slightly increased rate of decay of the magnetisation.

Next, we would like to observe dynamical signatures of the boundary QMBS $\ket{\mathcal{B}^{(g)}}$ on the trapped-ion simulator. Instead of preparing the boundary QMBS as an initial state, we prepare $\ket{L} = \ket{1} \otimes \ket{0}^{\otimes (N-1)}$, which is much easier experimentally and has a large overlap with the boundary QMBS when $|g| \gg 1$ and a small overlap when $|g| \lesssim 1$. Time-evolution starting from the initial state $\ket{L}$ should therefore have a large component that is approximately stationary when $|g| \gg 1$, with the approximation improving as $|g|$ increases. We therefore investigate how the site-resolved and total magnetisation evolve for different $g$ values when evolving from the initial state $\ket{L}$, with the results shown in Fig.  \ref{fig:exp_data}(b,d). We probe the dynamics for $N = 20$ qubits, $g \in \{\frac{2}{3}, \, 1, \, \frac{3}{2}, \, 2 \}$, up to $t = 10$ time-steps. Fig. \ref{fig:exp_data}(d) shows the experimental data corresponding to the local expectation values $\langle \hat{Z}_n(t)\rangle$ as they evolve in space $n$ and time $t$. For $|g| \lesssim 1$ the excitation, initially localised at $n=0$, spreads rapidly through the system. However, for $g = 2$ the excitation remains well localised, due to its large overlap with the boundary QMBS. Interestingly, at $g=1$, where the initial state has a small overlap $|\langle L | \mathcal{B}^{(g=1)} \rangle | = 1/\sqrt{N}$ with the completely delocalised QMBS, we observe a strong signal along the ``light-cone'', showing transport of the spin excitation close to the maximum possible velocity. In Fig. \ref{fig:exp_data}(b) we also show the evolution of the total magnetization $\sum_n \langle \hat{Z}_n(t) \rangle$. The decay of the observable slows as $g$ is increased, as a result of the increasing overlap with the stationary boundary QMBS. Taken together, the experimental results in Fig. \ref{fig:exp_data}(b,d) are strong evidence of the presence of the boundary QMBS in the Floquet circuit model.

Finally, to experimentally explore AQMBS in our model, we would like to be able to prepare $\ket{\mathcal{A}_k}$ as an initial state on the trapped-ion device. While the product QMBS $\ket{\mathcal{V}}$ and state $\ket{L}$ used to probe the boundary QMBS can be constructed almost trivially on any digital simulator, the preparation of the asymptotic QMBS requires a more complex state preparation procedure. Fortunately, because they are constrained to the single excitation subspace, $\ket{\mathcal{A}_k}$ can be prepared in logarithmic circuit depth on devices with all-to-all connectivity. The state preparation protocol, for $N = 8$, is shown in Fig. \ref{fig:circuits_and_initial_states}(c). This construction can be easily generalised to any $N = 2^k$, requiring $\sim N -1$ entangling gates and a depth of $\sim \log _2 N = k$. For $N \neq 2^k$, the state preparation circuit depth will be $\sim \left \lceil{\log _2 N}\right \rceil $. These state preparation circuits are described in more detail in SM, Sec. \ref{app: state_prep}, both for general $N$ as well as the specific circuits deployed in our experiments.

In our experiment, we prepare the asymptotic QMBS $\ket{\mathcal{A}_1}$, and observe the subsequent dynamics for a range of system sizes $N \in \{8, 12, 16, 20 \}$, up to $t = 10$ time-steps. The total magnetisation is shown in Fig. \ref{fig:exp_data}(c).
For $N = 8, \; 12$ we see strong agreement between the classical simulation and experimental results. For greater system sizes, $N = 16, \; 20$, we see a more noticeable deviation due to greater accumulation of errors, as already observed for the $\ket{\mathcal{V}}$ state. Despite the experimental errors, one can still observe slower thermalisation as system size is increased, a distinct feature of AQMBS states. This should be contrasted with the lack of thermalisation for $\ket{\mathcal{V}}$ and the rapid decay in the case of $\ket{1}^{\otimes N}$, both of which are independent of system size. As $N$ is increased, the state $\ket{\mathcal{A}_1}$ approaches a stationary state. This is, to the best of our knowledge, the first direct experimental evidence for asymptotic QMBS.

\section{Conclusion} \label{sec:conclusion}

To summarise, we first showed a direct relationship between gapless excitations in a frustration-free (or, alternatively, asymptotically frustration-free) reference Hamiltonian and the existence of AQMBS states in a related family of Hamiltonian and circuit models. Based on this, we constructed a 2-local model hosting both exact and asymptotic QMBS states. One of the QMBS states displays edge localisation, and becomes fully delocalised over the system at $|g|= 1$. At this critical value, we also observe AQMBS states, corresponding to gapless excitations in the reference Hamiltonian where the exact QMBS states are embedded as ground states. 

The specific features of the model were deliberately constructed to make digital quantum simulation feasible. We exploited the freedom in the 2-local unitaries to find efficient gate decompositions, 
and the structure of the probed AQMBS state $\ket{\mathcal{A}_1}$ to prepare it in logarithmic depth. Performing simulations starting from $\ket{\mathcal{A}_1}$ at various system sizes on the Quantinuum H1-1 trapped ion simulator, we observe longer thermalisation timescales as $N$ is increased. This is the first direct experimental detection of anomalous thermalisation due to AQMBS states and one of the rare occasions where a new many-body phenomenon is observed for the first time in a digital simulator, rather than on an analogue platform. Furthermore, we also observed direct evidence of the product scar $\ket{\mathcal{V}}$ and the the edge localised scar state $\ket{\mathcal{B}^{(g)}}$. 

This work opens up several research pathways. We have considered a relatively simple model, with solely nearest neighbour interactions in one dimension, in order to facilitate the digital quantum simulation. However, we expect that the same arguments could be generalised to higher order interactions of similar form, as well as to two or three spatial dimensions \cite{Bac-15}, giving rise to a wide family of experimentally amenable models that could potentially host AQMBS states. Furthermore, for some sets of local generators we can observe an exponential degeneracy in the middle of the spectrum, which might include  additional QMBS states, as found in the spin-$1$ $XY$ model \cite{Moh-25}. On the experimental side, with more resources, it would also be possible and of great interest to investigate the fidelity decay or growth of Renyi entropies from AQMBS via randomised measurement techniques \cite{Fla-11, Bry-19}. Finally, we have demonstrated the direct link between AQMBS states and gapless excitations in a relatively simple reference Hamiltonian. This connection may also be exploited in other candidate reference Hamiltonians such as the ferromagnetic Heisenberg model, Fredkin and Motzkin spin chains \cite{Sal-16, Mov-16, Mov-18}, and the Rokhar-Kivelson Hamiltonian \cite{Rok-88, Fel-26-arxiv}.

\section{Methods} \label{sec: Methods}

\subsection{Ion Trap Processor Details}

The device specifications of the Quantinuum H1-1 device are available online at Ref. \cite{quantinuum_h1_datasheet_2023}. This quantum processing unit is based on trapped ions, and supports simulations on up to $20$ qubits. The native gate set is based two parameterised single qubit gates:
\begin{eqnarray}
    \hat{U}_{1q}(\theta, \; \phi) &=& \exp \left[-i \frac{\theta}{2} (\cos (\phi)\hat{X} +  \sin (\phi)\hat{Y}) \right], \\
    \hat{R}_z(\lambda) &=& \exp \left[-i \frac{\lambda}{2} \hat{Z} \right],
\end{eqnarray}
and one entangling 2-qubit $ZZ$ gate:
\begin{equation}
    \widehat{ZZ}_\eta = \exp \left[ -i \frac{\eta}{2} \hat{Z} \otimes \hat{Z} \right] .
\end{equation}

The device supports all-to-all connectivity, allowing the execution of up to five $2$-qubit gates in parallel. 

The typical single qubit and $2$-qubit gate error rates are $1 \times 10^{-5}$ and $2 \times 10^{-5}$, respectively. The typical state-preparation and measurement (SPAM) errors are $3 \times 10^{-5}$. For relatively shallow circuits, the SPAM errors are the dominant sources of error, while for greater circuit depth the $2$-qubit gates infidelities become the prohibitive factor.

\subsection{Additional Experimental Details}

In order to execute the model specific $2$-qubit generators $\hat{h}_{e}, \,\hat{h}_{o}$ as written in Eq. ~\eqref{eq:h_e_h_o} we use the gate decomposition depicted in Fig. \ref{fig:circuits_and_initial_states}. In Table \ref{tab: gate_params} we present the gate decomposition parameters for the five different values of $g$ used in the experiments.

\begin{table*}[]
    \centering
    \begin{tabular}{|c|c|c|c|c|c|}
    \hline 
    $g$ & $0.5$ & $\frac{2}{3}$ & $1.0$ & $1.5$ & $2.0$ \\ \hline
    
    $\eta_1^e$, $\eta_2^e$ & 1.2726, 0.82774   & 1.4509, 1.0444 & 1.3782, 1.3782 & 1.4509, 1.0444 & 1.2726, 0.8277 \\ \hline

    $\vec{\theta}_1^e$ & 3.6052, $\pi / 2$, $1.0727$ & 3.6052, $\pi / 2$, 4.1838 & 1.1528, 1.1584, 2.0236 & 2.0344, 0.8508, $3 \pi / 2$ & 2.0344, 0.7434, $3 \pi / 2$ \\ \hline

    $\vec{\theta}_2^e$ & 5.1758, $0.7434$, $\pi/2$ & 5.1758, 0.8508, $3\pi / 2$ & 4.2948, 1.1584, 5.1648 & 0.46365, $\pi / 2$, 4.1833 & 0.4636, $\pi / 2$, 4.2143 \\ \hline

    $\vec{\theta}_3^e$ & $0$, $\pi / 2$, $\pi / 2$ & $0$, $\pi / 2$, $\pi / 2$ & $0$, $\pi / 2$, $\pi / 2$ & $0$, $\pi / 2$, $\pi / 2$ & $0$, $\pi / 2$, $\pi / 2$ \\ \hline

    $\vec{\theta}_4^e$ & $0$, $\pi / 2$, $\pi / 2$ & $0$, $\pi / 2$, $\pi / 2$ & $0$, $\pi / 2$, $\pi / 2$ & $0$, $\pi / 2$, $\pi / 2$ & $0$, $\pi / 2$, $\pi / 2$ \\ \hline

    $\vec{\theta}_5^e$ & $0$, $1.0727$, 4.2488 & 0, 4.1833, 4.2488 & 2.6888, 1.1584, 5.1308 & 2.4216, $\pi / 2$, 5.8198 & 2.3142, $\pi / 2$, 5.8198 \\ \hline

    $\vec{\theta}_6^e$ & $0.8274$, $\pi / 2$, 5.8198 & 2.4216, $\pi / 2$, 2.6779 & 0.45276, 1.9832, 5.1304 & 0, 4.1838, 1.1071 & 0, 4.2138, 1.1071 \\ \hline

     $\eta_1^o$, $\eta_2^o$ & 0.95678, 0.91982  & 1.1983, 1.166 & 1.5559, 1.5559 & 1.1983, 1.166 & 0.9568, 0.9198  \\ \hline

    $\vec{\theta}_1^o$ & 1.0884, $\pi/2$, 1.4341 & 1.0884, $\pi/2$, 1.424 & 5.1088, 1.4377, 1.6084 & 5.8008, 1.3526, $\pi / 2$ & 5.8008, 1.3024, $3\pi / 2$ \\ \hline

    $\vec{\theta}_2^o$ & 2.6592, 1.3024, $\pi / 2$ & 2.6592, 1.3526, $\pi / 2$ & 1.766, 1.4622, 4.8458 & 4.23, $\pi / 2$, 1.424 & 4.2298, $\pi / 2$, 4.5758 \\ \hline

    $\vec{\theta}_3^o$ & $0$, $\pi / 2$, $\pi / 2$ & $0$, $\pi / 2$, $\pi / 2$ & $0$, $\pi / 2$, $\pi / 2$ & $0$, $\pi / 2$, $\pi / 2$ & $0$, $\pi / 2$, $\pi / 2$ \\ \hline

    $\vec{\theta}_4^o$ & $0$, $\pi / 2$, $\pi / 2$ & $0$, $\pi / 2$, $\pi / 2$ & $0$, $\pi / 2$, $\pi / 2$ & $0$, $\pi / 2$, $\pi / 2$ & $0$, $\pi / 2$, $\pi / 2$ \\ \hline
    
    $\vec{\theta}_5^o$ & 0, 1.4341, 0.4823 & 0, 1.424, 0.48235 &  3.0077, 1.4622, 1.3756 & 0.2182, $\pi / 2$, 5.1947  & 2.8732, $\pi / 2$, 2.0531 \\ \hline

    $\vec{\theta}_6^o$ & 0.2683, $\pi / 2$ & 0.2182, $\pi / 2$, 2.0531 & 0.10959, 1.7039, 1.1745 & 0, 1.424, 3.6239 & 0, 4.5757, 3.6239 \\ \hline

    \end{tabular}
    \caption{Gate parameters for the decomposition depicted in Fig. \ref{fig:circuits_and_initial_states} \textbf{d}, for the used values of $g$. Note that the superscripts $e/o$ are used to distinguish even and odd gates.}
    \label{tab: gate_params}
\end{table*}

Note that, when executing the job on the device, the single qubit operations from the different unitary blocks are converted to the native gate-set basis. There is also a small compilation error that arises due to this decomposition, as well as the finite precision of rotation angles.

Whenever performing simulations for $N = 8$ and $N = 12$ (for any of the states $\{\ket{\mathcal{V}}, \ket{1}^{\otimes N}, \ket{\mathcal{A}_1} \}$), we run them in parallel, as the system allows for the simultaneous usage of up to $20$ qubits.

By default, the qubit register is always initialised in $|0\rangle ^{\otimes N }$. Hence, when performing simulations for the QMBS state $|0\rangle ^{\otimes N }$, we do not have to perform any state preparation prior to the time evolution. The probed non-scar state $|1\rangle ^{\otimes N }$ can be prepared in a straightforward way, by applying the $\hat{X}$ (which is equivalent, up to a global phase, to $U_{1q}(\theta = \pi, \phi = 0)$) operation on all of the qubits. Similarly, in order to prepare $\ket{L}=  \ket{1} \otimes \ket{0}^{\otimes N- 1}$, we apply $\hat{X}$ on the first qubit in the chain.


Finally, in order to minimise the costs associated with performing the experiment, we have reused some data points used at $T = 0$. Since the circuit constructions are the same, we reused data for the following experimental data points:
\begin{enumerate}
    \item For the $\ket{0}^{\otimes N}$ state the data for $N = 20$ is the same as already used for $N = 8$ and $12$ (which were run in parallel), at $T = 0$
    \item For the $\ket{1}^{\otimes N}$ state the data for $N = 20$ is the same as already used for $N = 8$ and $12$ (which were run in parallel), at $T = 0$
    \item For the $\ket{1} \otimes \ket{0}^{\otimes N - 1}$ state we reused the data for all $g$ values at $T = 0$.
    
\end{enumerate}

\subsection{Classical Methods}

The results presented in the main text, obtained via classical methods, were computed either by exact brute force methods implemented in Python's numpy library \cite{harris2020numpy}, for system sizes up to $N = 12$, whereas for greater system sizes, the Julia package ITensor was used \cite{itensor}.

\newpage
\clearpage

\begin{acknowledgments}
J.G. and L.L. acknowledge financial support by Microsoft Ireland. S.D. was supported by Taighde \'{E}ireann – Research Ireland under Grant number 22/PATH-S/10812. J. G. thanks the Royal Society and Research Ireland for funding. All of the authors thank Microsoft Ireland for granting access to quantum simulators via the Microsoft Azure platform. We acknowledge helpful conversations with Federica Maria Surace, Nathan Keenan, and Sanjay Moudgalya.

\end{acknowledgments}

\bibliography{refs.bib}

@article{Doo-21a,
	author = {Dooley, Shane},
	doi = {10.1103/PRXQuantum.2.020330},
	issue = {2},
	journal = {PRX Quantum},
	month = {May},
	numpages = {12},
	pages = {020330},
	publisher = {American Physical Society},
	title = {Robust Quantum Sensing in Strongly Interacting Systems with Many-Body Scars},
	url = {https://link.aps.org/doi/10.1103/PRXQuantum.2.020330},
	volume = {2},
	year = {2021}}

@article{Doo-23a,
	author = {Dooley, Shane and Pappalardi, Silvia and Goold, John},
	doi = {10.1103/PhysRevB.107.035123},
	issue = {3},
	journal = {Phys. Rev. B},
	month = {Jan},
	numpages = {7},
	pages = {035123},
	publisher = {American Physical Society},
	title = {Entanglement enhanced metrology with quantum many-body scars},
	url = {https://link.aps.org/doi/10.1103/PhysRevB.107.035123},
	volume = {107},
	year = {2023}}

@article{Doo-22a,
	author = {Dooley, Shane and Kells, Graham},
	doi = {10.1103/PhysRevB.105.155127},
	issue = {15},
	journal = {Phys. Rev. B},
	month = {Apr},
	numpages = {9},
	pages = {155127},
	publisher = {American Physical Society},
	title = {Extreme many-body scarring in a quantum spin chain via weak dynamical constraints},
	url = {https://link.aps.org/doi/10.1103/PhysRevB.105.155127},
	volume = {105},
	year = {2022}}

@article{Mou-24,
  title = {Symmetries as Ground States of Local Superoperators: Hydrodynamic Implications},
  author = {Moudgalya, Sanjay and Motrunich, Olexei I.},
  journal = {PRX Quantum},
  volume = {5},
  issue = {4},
  pages = {040330},
  numpages = {41},
  year = {2024},
  month = {Nov},
  publisher = {American Physical Society},
  doi = {10.1103/PRXQuantum.5.040330},
  url = {https://link.aps.org/doi/10.1103/PRXQuantum.5.040330}
}

@misc{Fel-26-arxiv,
      title={Digital dissipative state preparation for frustration-free gapless quantum systems}, 
      author={Johannes Feldmeier and Yu-Jie Liu and Mikhail D. Lukin and Soonwon Choi},
      year={2026},
      eprint={2603.10119},
      archivePrefix={arXiv},
      primaryClass={quant-ph},
      url={https://arxiv.org/abs/2603.10119}, 
}

@article{Rok-88,
  title = {Superconductivity and the Quantum Hard-Core Dimer Gas},
  author = {Rokhsar, Daniel S. and Kivelson, Steven A.},
  journal = {Phys. Rev. Lett.},
  volume = {61},
  issue = {20},
  pages = {2376--2379},
  numpages = {0},
  year = {1988},
  month = {Nov},
  publisher = {American Physical Society},
  doi = {10.1103/PhysRevLett.61.2376},
  url = {https://link.aps.org/doi/10.1103/PhysRevLett.61.2376}
}

@article{Mov-16,
    author = {Ramis Movassagh  and Peter W. Shor },
    title = {Supercritical entanglement in local systems: Counterexample to the area law for quantum matter},
    journal = {Proceedings of the National Academy of Sciences},
    volume = {113},
    number = {47},
    pages = {13278-13282},
    year = {2016},
    doi = {10.1073/pnas.1605716113},
    URL = {https://www.pnas.org/doi/abs/10.1073/pnas.1605716113},
    eprint = {https://www.pnas.org/doi/pdf/10.1073/pnas.1605716113}}

@misc{Sal-16,
      title={Fredkin Spin Chain}, 
      author={Olof Salberger and Vladimir Korepin},
      year={2016},
      eprint={1605.03842},
      archivePrefix={arXiv},
      primaryClass={quant-ph},
      url={https://arxiv.org/abs/1605.03842}, 
}

@article{Mov-18,
   title={The gap of Fredkin quantum spin chain is polynomially small},
   volume={3},
   ISSN={2380-2898},
   url={http://dx.doi.org/10.4310/AMSA.2018.v3.n2.a5},
   DOI={10.4310/amsa.2018.v3.n2.a5},
   number={2},
   journal={Annals of Mathematical Sciences and Applications},
   publisher={International Press of Boston},
   author={Movassagh, Ramis},
   year={2018},
   pages={531–562} }

@article{Mou-20a,
	author = {Moudgalya, Sanjay and O'Brien, Edward and Bernevig, B. Andrei and Fendley, Paul and Regnault, Nicolas},
	doi = {10.1103/PhysRevB.102.085120},
	issue = {8},
	journal = {Phys. Rev. B},
	month = {Aug},
	numpages = {19},
	pages = {085120},
	publisher = {American Physical Society},
	title = {Large classes of quantum scarred Hamiltonians from matrix product states},
	url = {https://link.aps.org/doi/10.1103/PhysRevB.102.085120},
	volume = {102},
	year = {2020}}

@article{Fau-24a,
	author = {Fauseweh, Benedikt},
	date = {2024/03/08},
	doi = {10.1038/s41467-024-46402-9},
	id = {Fauseweh2024},
	isbn = {2041-1723},
	journal = {Nature Communications},
	number = {1},
	pages = {2123},
	title = {Quantum many-body simulations on digital quantum computers: State-of-the-art and future challenges},
	url = {https://doi.org/10.1038/s41467-024-46402-9},
	volume = {15},
	year = {2024}}

@misc{Yan-25a-arxiv,
	archiveprefix = {arXiv},
	author = {Zhiguang Yan and Zi-Yong Ge and Rui Li and Yu-Ran Zhang and Franco Nori and Yasunobu Nakamura},
	eprint = {2506.21061},
	primaryclass = {quant-ph},
	title = {Characterizing Many-body Dynamics with Projected Ensembles on a Superconducting Quantum Processor},
	url = {https://arxiv.org/abs/2506.21061},
	year = {2025}}

@article{Fis-26a,
	author = {Fischer, Laurin E. and Leahy, Matea and Eddins, Andrew and Keenan, Nathan and Ferracin, Davide and Rossi, Matteo A. C. and Kim, Youngseok and He, Andre and Pietracaprina, Francesca and Sokolov, Boris and Dooley, Shane and Zimbor{\'a}s, Zolt{\'a}n and Tacchino, Francesco and Maniscalco, Sabrina and Goold, John and Garc{\'\i}a-P{\'e}rez, Guillermo and Tavernelli, Ivano and Kandala, Abhinav and Filippov, Sergey N.},
	date = {2026/01/20},
	id = {Fischer2026},
	isbn = {1745-2481},
	journal = {Nature Physics},
	title = {Dynamical simulations of many-body quantum chaos on a quantum computer},
	url = {https://doi.org/10.1038/s41567-025-03144-9},
	year = {2026},
    }

@article{Doo-20,
  title = {Simulating quantum circuits by adiabatic computation: Improved spectral gap bounds},
  author = {Dooley, Shane and Kells, Graham and Katsura, Hosho and Dorlas, Tony C.},
  journal = {Phys. Rev. A},
  volume = {101},
  issue = {4},
  pages = {042302},
  numpages = {7},
  year = {2020},
  month = {Apr},
  publisher = {American Physical Society},
  doi = {10.1103/PhysRevA.101.042302},
  url = {https://link.aps.org/doi/10.1103/PhysRevA.101.042302}
}

@misc{Fos-24a-arxiv,
	archiveprefix = {arXiv},
	author = {Michael Foss-Feig and Guido Pagano and Andrew C. Potter and Norman Y. Yao},
	eprint = {2409.02990},
	primaryclass = {quant-ph},
	title = {Progress in Trapped-Ion Quantum Simulation},
	url = {https://arxiv.org/abs/2409.02990},
	year = {2024}}

@article{Kja-20a,
	author = {Kjaergaard, Morten and Schwartz, Mollie E. and Braum{\"u}ller, Jochen and Krantz, Philip and Wang, Joel I.-J. and Gustavsson, Simon and Oliver, William D.},
	doi = {https://doi.org/10.1146/annurev-conmatphys-031119-050605},
	issn = {1947-5462},
	journal = {Annual Review of Condensed Matter Physics},
	number = {Volume 11, 2020},
	pages = {369-395},
	publisher = {Annual Reviews},
	title = {Superconducting Qubits: Current State of Play},
	type = {Journal Article},
	url = {https://www.annualreviews.org/content/journals/10.1146/annurev-conmatphys-031119-050605},
	volume = {11},
	year = {2020}}

@article{Ada-19a,
	author = {Adams, C S and Pritchard, J D and Shaffer, J P},
	doi = {10.1088/1361-6455/ab52ef},
	journal = {Journal of Physics B: Atomic, Molecular and Optical Physics},
	month = {dec},
	number = {1},
	pages = {012002},
	publisher = {IOP Publishing},
	title = {Rydberg atom quantum technologies},
	url = {https://doi.org/10.1088/1361-6455/ab52ef},
	volume = {53},
	year = {2019}}

@article{Pan-20,
   title={Quantum East Model: Localization, Nonthermal Eigenstates, and Slow Dynamics},
   volume={10},
   ISSN={2160-3308},
   url={http://dx.doi.org/10.1103/PhysRevX.10.021051},
   number={2},
   journal={Physical Review X},
   publisher={American Physical Society (APS)},
   author={Pancotti, Nicola and Giudice, Giacomo and Cirac, J. Ignacio and Garrahan, Juan P. and Bañuls, Mari Carmen},
   year={2020},
   month=jun}

@article{Hor-15,
   title={Dynamics of many-body localization in a translation-invariant quantum glass model},
   volume={92},
   ISSN={1550-235X},
   url={http://dx.doi.org/10.1103/PhysRevB.92.100305},
   number={10},
   journal={Physical Review B},
   publisher={American Physical Society (APS)},
   author={van Horssen, Merlijn and Levi, Emanuele and Garrahan, Juan P.},
   year={2015},
   month=sep }

@article{Bri-24,
   title={Anomalous transport in the kinetically constrained quantum East-West model},
   volume={110},
   ISSN={2469-9969},
   url={http://dx.doi.org/10.1103/PhysRevB.110.L100304},
   number={10},
   journal={Physical Review B},
   publisher={American Physical Society (APS)},
   author={Brighi, Pietro and Ljubotina, Marko},
   year={2024},
   month=sep }

@article{harris2020numpy,
 title         = {Array programming with {NumPy}},
 author        = {Charles R. Harris and K. Jarrod Millman and St{\'{e}}fan J.
                 van der Walt and Ralf Gommers and Pauli Virtanen and David
                 Cournapeau and Eric Wieser and Julian Taylor and Sebastian
                 Berg and Nathaniel J. Smith and Robert Kern and Matti Picus
                 and Stephan Hoyer and Marten H. van Kerkwijk and Matthew
                 Brett and Allan Haldane and Jaime Fern{\'{a}}ndez del
                 R{\'{i}}o and Mark Wiebe and Pearu Peterson and Pierre
                 G{\'{e}}rard-Marchant and Kevin Sheppard and Tyler Reddy and
                 Warren Weckesser and Hameer Abbasi and Christoph Gohlke and
                 Travis E. Oliphant},
 year          = {2020},
 month         = sep,
 journal       = {Nature},
 volume        = {585},
 number        = {7825},
 pages         = {357--362},
 doi           = {10.1038/s41586-020-2649-2},
 publisher     = {Springer Science and Business Media {LLC}},
 url           = {https://doi.org/10.1038/s41586-020-2649-2}
}

@article{Cha-20,
   title={Quantum many-body scars from virtual entangled pairs},
   volume={101},
   ISSN={2469-9969},
   url={http://dx.doi.org/10.1103/PhysRevB.101.174308},
   number={17},
   journal={Physical Review B},
   publisher={American Physical Society (APS)},
   author={Chattopadhyay, Sambuddha and Pichler, Hannes and Lukin, Mikhail D. and Ho, Wen Wei},
   year={2020},
   month=may }

@article{Cha-23b,
	author = {Chandran, Anushya and Iadecola, Thomas and Khemani, Vedika and Moessner, Roderich},
	journal = {Annual Review of Condensed Matter Physics},
	number = {1},
	pages = {443-469},
	title = {Quantum Many-Body Scars: A Quasiparticle Perspective},
	url = {https://doi.org/10.1146/annurev-conmatphys-031620-101617},
	volume = {14},
	year = {2023},
	}

@article{Mou-22a,
	author = {Sanjay Moudgalya and B Andrei Bernevig and Nicolas Regnault},
	doi = {10.1088/1361-6633/ac73a0},
	journal = {Reports on Progress in Physics},
	month = {jul},
	number = {8},
	pages = {086501},
	publisher = {IOP Publishing},
	title = {Quantum many-body scars and Hilbert space fragmentation: a review of exact results},
	url = {https://dx.doi.org/10.1088/1361-6633/ac73a0},
	volume = {85},
	year = {2022}}

@article{Ber-17,
	author = {Bernien, Hannes and Schwartz, Sylvain and Keesling, Alexander and Levine, Harry and Omran, Ahmed and Pichler, Hannes and Choi, Soonwon and Zibrov, Alexander S. and Endres, Manuel and Greiner, Markus and Vuleti{\'c}, Vladan and Lukin, Mikhail D.},
	date = {2017/11/29/online},
	day = {29},
	journal = {Nature},
	l3 = {10.1038/nature24622;},
	m3 = {Article},
	month = {11},
	pages = {579 EP -},
	publisher = {Macmillan Publishers Limited, part of Springer Nature. All rights reserved. SN -},
	title = {Probing many-body dynamics on a 51-atom quantum simulator},
	ty = {JOUR},
	url = {https://doi.org/10.1038/nature24622},
	volume = {551},
	year = {2017}}

@article{itensor,
	title={{The ITensor Software Library for Tensor Network Calculations}},
	author={Matthew Fishman and Steven R. White and E. Miles Stoudenmire},
	journal={SciPost Phys. Codebases},
	pages={4},
	year={2022},
	publisher={SciPost},
	doi={10.21468/SciPostPhysCodeb.4},
	url={https://scipost.org/10.21468/SciPostPhysCodeb.4}
}

@article{Blu-21,
	author = {Bluvstein, D. and Omran, A. and Levine, H. and Keesling, A. and Semeghini, G. and Ebadi, S. and Wang, T. T. and Michailidis, A. A. and Maskara, N. and Ho, W. W. and Choi, S. and Serbyn, M. and Greiner, M. and Vuleti{\'c}, V. and Lukin, M. D.},
	doi = {10.1126/science.abg2530},
	elocation-id = {eabg2530},
	issn = {0036-8075},
	journal = {Science},
	number = {6536},
	pages = {1355-1359},
	publisher = {American Association for the Advancement of Science},
	title = {Controlling quantum many-body dynamics in driven {R}ydberg atom arrays},
	url = {https://science.sciencemag.org/content/early/2021/02/24/science.abg2530},
	volume = {371},
	year = {2021}}

@article{Su-23a,
	author = {Su, Guo-Xian and Sun, Hui and Hudomal, Ana and Desaules, Jean-Yves and Zhou, Zhao-Yu and Yang, Bing and Halimeh, Jad C. and Yuan, Zhen-Sheng and Papi\ifmmode \acute{c}\else \'{c}\fi{}, Zlatko and Pan, Jian-Wei},
	doi = {10.1103/PhysRevResearch.5.023010},
	issue = {2},
	journal = {Phys. Rev. Res.},
	month = {Apr},
	numpages = {13},
	pages = {023010},
	publisher = {American Physical Society},
	title = {Observation of many-body scarring in a Bose-Hubbard quantum simulator},
	url = {https://link.aps.org/doi/10.1103/PhysRevResearch.5.023010},
	volume = {5},
	year = {2023}}

@article{Zha-23a,
	author = {Zhang, Pengfei and Dong, Hang and Gao, Yu and Zhao, Liangtian and Hao, Jie and Desaules, Jean-Yves and Guo, Qiujiang and Chen, Jiachen and Deng, Jinfeng and Liu, Bobo and Ren, Wenhui and Yao, Yunyan and Zhang, Xu and Xu, Shibo and Wang, Ke and Jin, Feitong and Zhu, Xuhao and Zhang, Bing and Li, Hekang and Song, Chao and Wang, Zhen and Liu, Fangli and Papi{\'c}, Zlatko and Ying, Lei and Wang, H. and Lai, Ying-Cheng},
	date = {2023/01/01},
	doi = {10.1038/s41567-022-01784-9},
	id = {Zhang2023},
	isbn = {1745-2481},
	journal = {Nature Physics},
	number = {1},
	pages = {120--125},
	title = {Many-body Hilbert space scarring on a superconducting processor},
	url = {https://doi.org/10.1038/s41567-022-01784-9},
	volume = {19},
	year = {2023}}

@article{Zha-25a,
	author = {Zhao, Luheng and Datla, Prithvi Raj and Tian, Weikun and Aliyu, Mohammad Mujahid and Loh, Huanqian},
	doi = {10.1103/PhysRevX.15.011035},
	issue = {1},
	journal = {Phys. Rev. X},
	month = {Feb},
	numpages = {18},
	pages = {011035},
	publisher = {American Physical Society},
	title = {Observation of Quantum Thermalization Restricted to Hilbert Space Fragments and ${\mathbb{Z}}_{2k}$ Scars},
	url = {https://link.aps.org/doi/10.1103/PhysRevX.15.011035},
	volume = {15},
	year = {2025}}

@article{Lar-24,
	author = {Larsen, Peter Gr{\ae}ns and Nielsen, Anne E. B.},
	issn = {2643-1564},
	journal = {Physical Review Research},
	month = oct,
	number = {4},
	publisher = {American Physical Society (APS)},
	title = {Phase transitions in quantum many-body scars},
	url = {http://dx.doi.org/10.1103/PhysRevResearch.6.L042007},
	volume = {6},
	year = {2024},
    }

@article{Got-23,
	author = {Gotta, Lorenzo and Moudgalya, Sanjay and Mazza, Leonardo},
	issn = {1079-7114},
	journal = {Physical Review Letters},
	month = nov,
	number = {19},
	publisher = {American Physical Society (APS)},
	title = {Asymptotic Quantum Many-Body Scars},
	url = {http://dx.doi.org/10.1103/PhysRevLett.131.190401},
	volume = {131},
	year = {2023}}

@article{Ren-24,
	author = {Ren, Jie and Wang, Yu-Peng and Fang, Chen},
	doi = {10.1103/PhysRevB.110.245101},
	issue = {24},
	journal = {Phys. Rev. B},
	month = {Dec},
	numpages = {25},
	pages = {245101},
	publisher = {American Physical Society},
	title = {Quasi-Nambu-Goldstone modes in many-body scar models},
	url = {https://link.aps.org/doi/10.1103/PhysRevB.110.245101},
	volume = {110},
	year = {2024}}

@article{Kun-25,
  title = {Systematic construction of asymptotic quantum many-body scar states and their relation to supersymmetric quantum mechanics},
  author = {Kunimi, Masaya and Kato, Yusuke and Katsura, Hosho},
  journal = {Phys. Rev. Res.},
  volume = {7},
  issue = {4},
  pages = {043107},
  numpages = {22},
  year = {2025},
  month = {Oct},
  publisher = {American Physical Society},
  url = {https://link.aps.org/doi/10.1103/zstl-s1y2}
}

@article{Man-45,
  author = {Mandelstam, L. and Tamm, I.},
  title = {The uncertainty relation between energy and time in nonrelativistic quantum mechanics},
  journal = {J. Phys. (USSR)},
  volume = {9},
  pages = {249--254},
  year = {1945}
}

@misc{Gio-26,
      title={Distinct Types of Parent Hamiltonians for Quantum States: Insights from the $W$ State as a Quantum Many-Body Scar}, 
      author={Lei Gioia and Sanjay Moudgalya and Olexei I. Motrunich},
      year={2026},
      eprint={2510.24713},
      archivePrefix={arXiv},
      primaryClass={quant-ph},
      url={https://arxiv.org/abs/2510.24713}, 
}

@article{Bac-12,
	author = {Bachmann, Sven and Nachtergaele, Bruno},
	doi = {10.1103/PhysRevB.86.035149},
	issue = {3},
	journal = {Phys. Rev. B},
	month = {Jul},
	numpages = {6},
	pages = {035149},
	publisher = {American Physical Society},
	title = {Product vacua with boundary states},
	url = {https://link.aps.org/doi/10.1103/PhysRevB.86.035149},
	volume = {86},
	year = {2012}}

@article{Tur-18a,
	author = {Turner, C. J. and Michailidis, A. A. and Abanin, D. A. and Serbyn, M. and Papi{\'c}, Z.},
	da = {2018/07/01},
	doi = {10.1038/s41567-018-0137-5},
	id = {Turner2018},
	isbn = {1745-2481},
	journal = {Nat. Phys.},
	number = {7},
	pages = {745},
	title = {Weak ergodicity breaking from quantum many-body scars},
	ty = {JOUR},
	url = {https://doi.org/10.1038/s41567-018-0137-5},
	volume = {14},
	year = {2018}}

@article{Sch-19,
   title={Weak Ergodicity Breaking and Quantum Many-Body Scars in Spin-1 Magnets},
   volume={123},
   ISSN={1079-7114},
   url={http://dx.doi.org/10.1103/PhysRevLett.123.147201},
   number={14},
   journal={Physical Review Letters},
   publisher={American Physical Society (APS)},
   author={Schecter, Michael and Iadecola, Thomas},
   year={2019},
   month=oct }

@article{Bra-15,
	author = {Sergey Bravyi and David Gosset},
	journal = {Journal of Mathematical Physics},
	pages = {061902},
	title = {Gapped and gapless phases of frustration-free spin-1/2 chains},
	url = {https://api.semanticscholar.org/CorpusID:9474429},
	volume = {56},
	year = {2015}}

@article{Logaric-24,
	author = {Logari{\'c}, Leonard and Dooley, Shane and Pappalardi, Silvia and Goold, John},
	issn = {1079-7114},
	journal = {Physical Review Letters},
	month = jan,
	number = {1},
	publisher = {American Physical Society (APS)},
	title = {Quantum Many-Body Scars in Dual-Unitary Circuits},
	url = {http://dx.doi.org/10.1103/PhysRevLett.132.010401},
	volume = {132},
	year = {2024}}

@article{Shi-17,
	author = {Shiraishi, Naoto and Mori, Takashi},
	issn = {1079-7114},
	journal = {Physical Review Letters},
	month = jul,
	number = {3},
	publisher = {American Physical Society (APS)},
	title = {Systematic Construction of Counterexamples to the Eigenstate Thermalization Hypothesis},
	url = {http://dx.doi.org/10.1103/PhysRevLett.119.030601},
	volume = {119},
	year = {2017}}

@article{Mor-18,
	author = {Takashi Mori and Tatsuhiko N. Ikeda and Eriko Kaminishi and Masahito Ueda},
	doi = {10.1088/1361-6455/aabcdf},
	journal = {J. Phys. B},
	month = {May},
	number = {11},
	pages = {112001},
	publisher = {{IOP} Publishing},
	title = {Thermalization and prethermalization in isolated quantum systems: a theoretical overview},
	volume = {51},
	year = 2018}

@article{Bac-15,
	author = {Bachmann, Sven and Hamza, Eman and Nachtergaele, Bruno and Young, Amanda},
	doi = {10.1007/s10955-015-1260-7},
	issn = {1572-9613},
	journal = {Journal of Statistical Physics},
	month = may,
	number = {3},
	pages = {636--658},
	publisher = {Springer Science and Business Media LLC},
	title = {Product Vacua and Boundary State Models in $d$-Dimensions},
	url = {http://dx.doi.org/10.1007/s10955-015-1260-7},
	volume = {160},
	year = {2015}}

@misc{Moh-25,
	archiveprefix = {arXiv},
	author = {Sashikanta Mohapatra and Sanjay Moudgalya and Ajit C. Balram},
	eprint = {2511.14878},
	primaryclass = {cond-mat.str-el},
	title = {Unraveling additional quantum many-body scars of the spin-$1$ $XY$ model with Fock-space cages and commutant algebras},
	url = {https://arxiv.org/abs/2511.14878},
	year = {2025}}

@article{Fla-11,
	author = {Flammia, Steven T. and Liu, Yi-Kai},
	issn = {1079-7114},
	journal = {Physical Review Letters},
	month = jun,
	number = {23},
	publisher = {American Physical Society (APS)},
	title = {Direct Fidelity Estimation from Few Pauli Measurements},
	url = {http://dx.doi.org/10.1103/PhysRevLett.106.230501},
	volume = {106},
	year = {2011},
    }

@article{Bry-19,
	author = {Brydges, Tiff and Elben, Andreas and Jurcevic, Petar and Vermersch, Beno{\^\i}t and Maier, Christine and Lanyon, Ben P. and Zoller, Peter and Blatt, Rainer and Roos, Christian F.},
	doi = {10.1126/science.aau4963},
	issn = {1095-9203},
	journal = {Science},
	month = apr,
	number = {6437},
	pages = {260--263},
	publisher = {American Association for the Advancement of Science (AAAS)},
	title = {Probing R{\'e}nyi entanglement entropy via randomized measurements},
	url = {http://dx.doi.org/10.1126/science.aau4963},
	volume = {364},
	year = {2019}}

@misc{quantinuum_h1_datasheet_2023,
	author = {Quantinuum},
	howpublished = {\url{https://quantinuum.co.jp/assets/pdf/system_model_h1_product_data_sheet.pdf}},
	month = feb,
	note = {Version 5.20. {\copyright} 2023 by Quantinuum. All rights reserved.},
	title = {Quantinuum System Model H1 Product Data Sheet},
	urldate = {2025-12-31},
	year = {2023}}

@misc{quantinuum_h1_1_access,
	author = {{Quantinuum}},
	howpublished = {\url{https://www.quantinuum.com/}},
	month = oct,
	note = {October 3--8, 2025},
	title = {{Quantinuum H1-1}},
	year = {2025}}

@article{Lar-26a,
	author = {Peter Gr{\ae}ns Larsen and Anne E. B. Nielsen and Andr{\'e} Eckardt and Francesco Petiziol},
	doi = {10.21468/SciPostPhys.20.2.036},
	journal = {SciPost Phys.},
	pages = {036},
	publisher = {SciPost},
	title = {{Experimental protocol for observing single quantum many-body scars with transmon qubits}},
	url = {https://scipost.org/10.21468/SciPostPhys.20.2.036},
	volume = {20},
	year = {2026}}

\clearpage

\setcounter{page}{1}
\onecolumngrid 

{\large \textbf{Supplementary Material for ``Dynamical signatures of conventional and asymptotic quantum many-body scars on a trapped ion simulator''}}
{\large Leonard Logari\'c, John Goold, Shane Dooley}

\section{Proof of Theorem \ref{theorem:AQMBS_main_text}}
\label{app: general_proof_aqmbs}

\subsection{Outline}

Consider a system of $N$ qudits, labelled $n \in \{0,1,\hdots,N-1 \}$, arranged on a $D$-dimensional lattice $\Lambda$, and a set of local projectors $\{ \hat{P}_{X_n} \}_{n=0}^{N-1}$ each of which acts non-trivially only on a finite number of qudits in a local neighbourhood $X_n \subset \Lambda$ of the qudit $n$. With this set of local projectors we define the local reference Hamiltonian: \begin{equation} \hat{\mathbb{H}}_+ = \sum_n \hat{P}_{X_n} , \label{eq:H_+_app} \end{equation} which is non-negative since each term in its sum is non-negative $\hat{P}_{X_n} \geq 0$. With this setup, we will prove below, in Sec. \ref{sec:prop_ham_AQMBS} and Sec. \ref{sec:prop_circuit_AQMBS}, the following two propositions:

\begin{proposition} \label{proposition:hamiltonian_AQMBS}
If $\hat{\mathbb{H}}_+ = \sum_n \hat{P}_{X_n}$ and $\hat{\mathbb{H}}_+ \ket{\mathcal{A}} \stackrel{N \to \infty}{\longrightarrow} 0$ then we have $\hat{\mathbb{H}} \ket{\mathcal{A}} \stackrel{N \to \infty}{\longrightarrow} 0$ for any Hamiltonian of the form $\hat{\mathbb{H}} = \sum_n  \hat{P}_{X_n} \hat{h}_{X_n} \hat{P}_{X_n}$, where $\hat{h}_{X_n}$ are arbitrary Hermitian operators which act non-trivially only on the qudits in $X_n$.
\end{proposition}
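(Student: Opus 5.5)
The argument works term by term: control each local contribution $\hat{P}_{X_n}\hat{h}_{X_n}\hat{P}_{X_n}\ket{\mathcal{A}}$ using only how much weight $\ket{\mathcal{A}}$ has on the range of $\hat{P}_{X_n}$, and bound that weight by the energy $\langle \mathcal{A}|\hat{\mathbb{H}}_+|\mathcal{A}\rangle$.

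Since each $\hat{P}_{X_n}$ is a projector,
\begin{equation}
\langle \mathcal{A}|\hat{\mathbb{H}}_+|\mathcal{A}\rangle = \sum_n \langle \mathcal{A}|\hat{P}_{X_n}|\mathcal{A}\rangle = \sum_n \|\hat{P}_{X_n}\ket{\mathcal{A}}\|^2 .
\end{equation}
The left-hand side is bounded by Cauchy–Schwarz, $\langle \mathcal{A}|\hat{\mathbb{H}}_+|\mathcal{A}\rangle \le \|\hat{\mathbb{H}}_+\ket{\mathcal{A}}\|$, and this tends to zero by hypothesis. Hence $\sum_n \|\hat{P}_{X_n}\ket{\mathcal{A}}\|^2 \to 0$.

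Next, use the triangle inequality together with $\|\hat{P}_{X_n}\|\le 1$ and $\|\hat{h}_{X_n}\|\le J$. Here I assume the interactions are uniformly bounded, which is the natural reading of ``local Hermitian operators''; if $\hat{h}$ is translation invariant this is automatic. This gives
\begin{equation}
\|\hat{\mathbb{H}}\ket{\mathcal{A}}\| \le \sum_n \|\hat{P}_{X_n}\hat{h}_{X_n}\hat{P}_{X_n}\ket{\mathcal{A}}\| \le J\sum_n \|\hat{P}_{X_n}\ket{\mathcal{A}}\| .
\end{equation}
A further Cauchy–Schwarz step over the $N$ terms gives the bound $J\sqrt{N}\,\big(\sum_n\|\hat{P}_{X_n}\ket{\mathcal{A}}\|^2\big)^{1/2}$.

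The main obstacle is this factor $\sqrt{N}$: the crude triangle-inequality bound loses too much. There are two ways to avoid it.

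\textbf{First approach: local orthogonality.} Expand $\|\hat{\mathbb{H}}\ket{\mathcal{A}}\|^2$ directly as a double sum over $n,m$. By locality, only pairs with $X_n\cap X_m\neq\emptyset$ fail to commute in a useful way. For overlapping pairs, bound the cross term by $J^2\|\hat{P}_{X_n}\ket{\mathcal{A}}\|\,\|\hat{P}_{X_m}\ket{\mathcal{A}}\|$; each $n$ has only a bounded number $c$ of such partners, so these contribute at most $c J^2\sum_n\|\hat{P}_{X_n}\ket{\mathcal{A}}\|^2$. For non-overlapping pairs, rewrite the cross term as $\langle\mathcal{A}|\hat{P}_{X_n}\hat{h}_{X_n}\hat{P}_{X_n}\hat{P}_{X_m}\hat{h}_{X_m}\hat{P}_{X_m}|\mathcal{A}\rangle$. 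This is still a product of two small factors, but summing over $O(N^2)$ pairs is the difficulty.

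\textbf{Second approach: use the sharper hypothesis.} I would pass to $\|\hat{\mathbb{H}}_+\ket{\mathcal{A}}\|^2 = \sum_{n,m}\langle\mathcal{A}|\hat{P}_{X_n}\hat{P}_{X_m}|\mathcal{A}\rangle \to 0$ and compare it pair by pair with the corresponding expansion for $\hat{\mathbb{H}}$. For disjoint supports, $\hat{P}_{X_n}\hat{P}_{X_m}$ is itself a projector onto a subspace on which the operator $\hat{P}_{X_n}\hat{h}_{X_n}\hat{P}_{X_n}\hat{P}_{X_m}\hat{h}_{X_m}\hat{P}_{X_m}$ acts with norm at most $J^2$. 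The aim is therefore to show
\begin{equation}
\|\hat{\mathbb{H}}\ket{\mathcal{A}}\|^2 \le C\Big(\|\hat{\mathbb{H}}_+\ket{\mathcal{A}}\|^2 + \sum_n\|\hat{P}_{X_n}\ket{\mathcal{A}}\|^2\Big).
\end{equation}
This is not an exact operator inequality. If it fails in general, I would fall back to proving the statement in the sense actually needed for the theorem: vanishing energy variance, or the Mandelstam–Tamm fidelity timescale. Alternatively, I would prove it for the concrete PVBS single-excitation states, where $\hat{\mathbb{H}}$ acts as a bounded-bandwidth tight-binding operator on the one-particle sector. There $\|\hat{\mathbb{H}}\ket{\mathcal{A}}\| \le J'\,\|\hat{\mathbb{H}}_+\ket{\mathcal{A}}\|^{1/2}$ follows because, within that sector, the range of $\sum_n\hat{P}_{X_n}$ is spanned by the local states $\ket{\psi^{(g)}}_{n,n+1}$, which overlap only with their nearest neighbours.
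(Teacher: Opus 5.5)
Your proposal does not yet prove the proposition. The step you flag as the obstacle, summing the $O(N^2)$ disjoint-pair cross terms, is never closed. You then hedge and fall back to other statements: vanishing energy variance, a Mandelstam--Tamm timescale, or the PVBS single-excitation sector. The first two are strictly weaker than $\hat{\mathbb{H}}\ket{\mathcal{A}}\to 0$, and the third is model-specific and itself unproved. Your pointwise estimate for disjoint supports is correct: $\hat{P}_{X_n}\hat{h}_{X_n}\hat{P}_{X_n}\hat{P}_{X_m}\hat{h}_{X_m}\hat{P}_{X_m}$ is Hermitian, has norm at most $J^2$, and is compressed by the projector $\hat{P}_{X_n}\hat{P}_{X_m}$. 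Its expectation is therefore at most $J^2\langle\mathcal{A}|\hat{P}_{X_n}\hat{P}_{X_m}|\mathcal{A}\rangle$ in modulus. What is missing is a proof that $\langle\mathcal{A}|\hat{\mathbb{J}}_+|\mathcal{A}\rangle\equiv\sum_{n\neq m,\,X_n\cap X_m=\emptyset}\langle\mathcal{A}|\hat{P}_{X_n}\hat{P}_{X_m}|\mathcal{A}\rangle\to 0$. This is exactly where the hypothesis $\|\hat{\mathbb{H}}_+\ket{\mathcal{A}}\|\to 0$ is needed, rather than merely $\langle\hat{\mathbb{H}}_+\rangle\to 0$. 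It is the content of part 2 of the paper's Lemma.

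The gap closes with the estimate you already have. Expand, with expectations taken in $\ket{\mathcal{A}}$, $\|\hat{\mathbb{H}}_+\ket{\mathcal{A}}\|^2=\langle\hat{\mathbb{H}}_+\rangle+\sum_{n\neq m,\,X_n\cap X_m\neq\emptyset}\langle\hat{P}_{X_n}\hat{P}_{X_m}\rangle+\langle\hat{\mathbb{J}}_+\rangle$. The overlapping cross terms can have either sign. By Cauchy--Schwarz and your coordination bound $c$, they are at most $c\langle\hat{\mathbb{H}}_+\rangle$ in modulus. Meanwhile $\hat{\mathbb{J}}_+\geq 0$, because disjointly supported projectors commute and so their products are projectors. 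Hence $0\leq\langle\hat{\mathbb{J}}_+\rangle\leq\|\hat{\mathbb{H}}_+\ket{\mathcal{A}}\|^2+(c-1)\langle\hat{\mathbb{H}}_+\rangle\to 0$. Adding your diagonal, overlapping and disjoint bounds then gives $\|\hat{\mathbb{H}}\ket{\mathcal{A}}\|^2\leq J^2\|\hat{\mathbb{H}}_+\ket{\mathcal{A}}\|^2+2cJ^2\langle\hat{\mathbb{H}}_+\rangle$ for every vector. So, contrary to your remark, your target is an exact operator inequality, $\hat{\mathbb{H}}^2\leq J^2(\hat{\mathbb{H}}_+^2+2c\,\hat{\mathbb{H}}_+)$. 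Your uniform bound $J$ is the same implicit assumption the paper makes through $|\eta|_{\rm max}$.

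The completed argument differs from the paper only in how overlapping pairs are handled. The paper rewrites $\pi\pi'+\pi'\pi=(\pi+\pi')^2-\pi-\pi'$ using the spectral projectors and introduces the non-negative operator $\hat{\mathbb{F}}_+$. It then uses only the non-negativity of $\hat{\mathbb{J}}_+$ and $\hat{\mathbb{F}}_+$ to conclude that both expectations vanish. Your Cauchy--Schwarz route is shorter and also gives an explicit rate, $\|\hat{\mathbb{H}}\ket{\mathcal{A}}\|=O(\|\hat{\mathbb{H}}_+\ket{\mathcal{A}}\|^{1/2})$.
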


\begin{proposition} \label{proposition:circuit_AQMBS}
If $\hat{\mathbb{H}}_+ = \sum_n \hat{P}_{X_n}$ and $\hat{\mathbb{H}}_+ \ket{\mathcal{A}} \stackrel{N \to \infty}{\longrightarrow} 0$ then we have we have $\hat{\mathbb{U}} \ket{\mathcal{A}} \stackrel{N \to \infty}{\longrightarrow} \ket{\mathcal{A}}$ for any finite-depth circuit $\hat{\mathbb{U}}$ composed of local unitary gates of the form $\hat{U}_{X_n} = \exp(i\hat{P}_{X_n} \hat{h}_{X_n} \hat{P}_{X_n})$, where $\hat{h}_{X_n}$ are arbitrary Hermitian operators which act non-trivially only on the qudits in $X_n$.
\end{proposition}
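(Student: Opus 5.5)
The plan is to reduce the circuit statement to the single scalar quantity $\langle\mathcal{A}|\hat{\mathbb{H}}_+|\mathcal{A}\rangle$ and to show that each layer of the circuit moves $\ket{\mathcal{A}}$ by at most a constant times its square root. First, the hypothesis $\hat{\mathbb{H}}_+\ket{\mathcal{A}}\to 0$ (in norm, with $\ket{\mathcal{A}}$ normalised) implies $\langle\mathcal{A}|\hat{\mathbb{H}}_+|\mathcal{A}\rangle \le \|\hat{\mathbb{H}}_+\ket{\mathcal{A}}\| \to 0$. Since each $\hat{P}_{X_n}$ is a projector, this equals $\sum_n \|\hat{P}_{X_n}\ket{\mathcal{A}}\|^2$. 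The essential point is to control the circuit with this sum of \emph{squares}, and not with $\sum_n\|\hat{P}_{X_n}\ket{\mathcal{A}}\|$. The latter could be as large as $\sqrt{N}$ times bigger, so a naive gate-by-gate triangle inequality would be useless.

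I will treat one layer at a time. By definition of a finite-depth circuit, $\hat{\mathbb{U}}=\hat{\mathbb{U}}_L\cdots\hat{\mathbb{U}}_1$ with $L$ independent of $N$. Each layer is $\hat{\mathbb{U}}_l=\prod_{n\in S_l}\hat{U}_{X_n}$, where the gates in $S_l$ have pairwise disjoint supports and correspond to distinct terms of $\hat{\mathbb{H}}_+$. The projectors $\{\hat{P}_{X_n}\}_{n\in S_l}$ then commute. Define $\hat{\Pi}_l=\prod_{n\in S_l}(\hat{I}-\hat{P}_{X_n})$, which is the projector onto their joint kernel. Because $\hat{U}_{X_n}=\exp(i\hat{P}_{X_n}\hat{h}_{X_n}\hat{P}_{X_n})$ acts as the identity on the kernel of $\hat{P}_{X_n}$, and the other gates commute with $\hat{P}_{X_n}$, we get $\hat{\mathbb{U}}_l\hat{\Pi}_l=\hat{\Pi}_l$. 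Hence
\[
\|(\hat{\mathbb{U}}_l-\hat{I})\ket{\mathcal{A}}\| = \|(\hat{\mathbb{U}}_l-\hat{I})(\hat{I}-\hat{\Pi}_l)\ket{\mathcal{A}}\| \le 2\,\|(\hat{I}-\hat{\Pi}_l)\ket{\mathcal{A}}\| .
\]
For commuting projectors the union bound $\hat{I}-\hat{\Pi}_l\le\sum_{n\in S_l}\hat{P}_{X_n}\le\hat{\mathbb{H}}_+$ holds. This gives $\|(\hat{I}-\hat{\Pi}_l)\ket{\mathcal{A}}\|^2\le\langle\mathcal{A}|\hat{\mathbb{H}}_+|\mathcal{A}\rangle$.

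Next I combine the layers by telescoping:
\[
\hat{\mathbb{U}}-\hat{I}=\sum_{l=1}^{L}\hat{\mathbb{U}}_L\cdots\hat{\mathbb{U}}_{l+1}(\hat{\mathbb{U}}_l-\hat{I}) .
\]
Since each prefix is unitary, this yields $\|\hat{\mathbb{U}}\ket{\mathcal{A}}-\ket{\mathcal{A}}\|\le 2L\sqrt{\langle\mathcal{A}|\hat{\mathbb{H}}_+|\mathcal{A}\rangle}\to 0$ for fixed $L$, which is the claim.

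The main obstacle is the layer estimate: recognising that one must use the joint-kernel projector $\hat{\Pi}_l$ together with the union bound, rather than summing errors gate by gate. A secondary point needing care is the bookkeeping behind $\sum_{n\in S_l}\hat{P}_{X_n}\le\hat{\mathbb{H}}_+$. This requires each layer to use each projector at most once. If a layer contains gates with overlapping supports, or several gates built on the same projector, I would split it into a bounded number of sublayers of disjoint, distinct gates, which only changes the constant $2L$.
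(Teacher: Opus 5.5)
Your proof is correct, but it takes a different route from the paper's at the step that matters, the single-layer estimate.

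The outer structure is the same in both. Each splits $\hat{\mathbb{U}}$ into finitely many layers of disjoint, commuting gates and then combines the layers; the paper's expansion of $\hat{\mathbb{U}}\ket{\mathcal{A}}$ is effectively your telescoping sum.

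\textbf{The paper's layer estimate.} It writes each layer as $\hat{\mathbb{U}}^{(\ell)}=e^{i\hat{\mathbb{H}}^{(\ell)}}$. It then uses parts 1 and 2 of its Lemma, which come from splitting $\|\hat{\mathbb{H}}_+\ket{\mathcal{A}}\|^2$ into the $\hat{\mathbb{H}}_+$, $\hat{\mathbb{J}}_+$, $\hat{\mathbb{F}}_+$, $\hat{\mathbb{H}}'_+$ pieces, to get $\hat{\mathbb{H}}^{(\ell)}_+\ket{\mathcal{A}}\to 0$. Finally it invokes Proposition~\ref{proposition:hamiltonian_AQMBS} to obtain $\hat{\mathbb{H}}^{(\ell)}\ket{\mathcal{A}}\to 0$, and hence $e^{i\hat{\mathbb{H}}^{(\ell)}}\ket{\mathcal{A}}\to\ket{\mathcal{A}}$.

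\textbf{Your layer estimate.} You never touch the generator. Two facts do all the work: the layer fixes the joint kernel of its projectors pointwise, and $\hat{I}-\hat{\Pi}_l\le\sum_{n\in S_l}\hat{P}_{X_n}\le\hat{\mathbb{H}}_+$.

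\textbf{What your route gains.}
\begin{enumerate}
\item An explicit bound $\|\hat{\mathbb{U}}\ket{\mathcal{A}}-\ket{\mathcal{A}}\|\le 2L\sqrt{\bra{\mathcal{A}}\hat{\mathbb{H}}_+\ket{\mathcal{A}}}$. For the paper's Floquet circuit after $t$ periods this is $4t\sqrt{\varepsilon_k}=O(tk/N)$. That is a rigorous circuit version of the Mandelstam--Tamm timescale the main text only argues heuristically.
\item A weaker hypothesis: only $\bra{\mathcal{A}}\hat{\mathbb{H}}_+\ket{\mathcal{A}}\to 0$ is used.
\item No dependence on the constants $|\eta|_{\max}$, $r_n$, $c_n$ that enter the paper's estimates. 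Any gate acting as the identity on $\ker\hat{P}_{X_n}$ is covered.
\end{enumerate}

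\textbf{What the paper's route gains.} It is economical: the Hamiltonian and circuit statements both come from one Lemma. It also gives the stronger intermediate fact that each layer generator asymptotically annihilates $\ket{\mathcal{A}}$. Your joint-kernel argument is specific to layers of commuting unitaries. It cannot yield Proposition~\ref{proposition:hamiltonian_AQMBS}, which really does need the norm hypothesis $\hat{\mathbb{H}}_+\ket{\mathcal{A}}\to 0$; take $\hat{h}_{X_n}=\hat{I}$ to see this.
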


We note that if $\ket{\mathcal{G}}$ is a zero-energy ground state of $\hat{\mathbb{H}}_+$ for finite system size $N$ then Propositions \ref{proposition:hamiltonian_AQMBS} and \ref{proposition:circuit_AQMBS} can be satisfied in a relatively trivial manner by states $\ket{\mathcal{A}}$ that approach $\ket{\mathcal{G}}$ in the thermodynamic limit. In other words, if $\ket{\mathcal{A}} = \sqrt{1 - |\epsilon|^2} \ket{\mathcal{G}} + \epsilon \ket{\psi}$ for some arbitrary normalised state $\ket{\psi}$ and $\epsilon \stackrel{N \to \infty}{\longrightarrow} 0$ then by Theorem \ref{theorem:QMBS} $\ket{\mathcal{A}}$ approaches the exact QMBS $\ket{\mathcal{G}}$ of $\hat{\mathbb{H}}$ (or the circuit $\hat{\mathbb{U}}$) in the thermodynamic limit. However, Propositions \ref{proposition:hamiltonian_AQMBS} and \ref{proposition:circuit_AQMBS} can also be satisfied in a nontrivial way by states $\ket{\mathcal{A}}$ that are positive energy eigenstates of $\hat{\mathbb{H}}_+$ for any finite $N$, but only become zero-energy ground states of $\hat{\mathbb{H}}_+$ in the thermodynamic limit (e.g., gapless excitations of $\hat{\mathbb{H}}_+$, or if $\hat{\mathbb{H}}_+$ is asymptotically frustration-free). In this case, $\ket{\mathcal{A}}$ is orthogonal to any exact QMBS $\ket{\mathcal{G}}$, and Propositions \ref{proposition:hamiltonian_AQMBS} and \ref{proposition:circuit_AQMBS} lead to Theorem \ref{theorem:AQMBS_main_text} in the main text:

\aqmbsthm*

But before proving Propositions \ref{proposition:hamiltonian_AQMBS} and \ref{proposition:circuit_AQMBS} (and hence Theorem \ref{theorem:AQMBS_main_text}), it is useful to first prove the following: 

\begin{lemma} \label{lemma:vanishing_quantities}
If $\hat{\mathbb{H}}_+ = \sum_n \hat{P}_{X_n}$ and $\hat{\mathbb{H}}_+ \ket{\mathcal{A}} \stackrel{N \to \infty}{\longrightarrow} 0$ then:
\begin{enumerate}
\item $\bra{\mathcal{A}} \hat{\mathbb{H}}'_+ \ket{\mathcal{A}} \stackrel{N \to \infty}{\longrightarrow} 0$ where $\hat{\mathbb{H}}'_+ \equiv \sum_{n} \lambda_n \hat{P}_{X_n}$ for any non-negative but finite couplings $0 \leq \lambda_n < \infty$.
\item $\bra{\mathcal{A}} \hat{\mathbb{J}}'_+ \ket{\mathcal{A}} \stackrel{N \to \infty}{\longrightarrow} 0$ where $\hat{\mathbb{J}}'_+ \equiv  \sum \limits_{\substack{n \neq n' \\  X_n \cap X_{n'} = \emptyset}} \lambda_{n,n'} \hat{P}_{X_n} \hat{P}_{X_{n'}}$ for any non-negative but finite couplings $0 \leq \lambda_{n,n'} < \infty$ and the sum is over all non-overlapping pairs of regions $X_n \cap X_{n'} = \emptyset$. 
\item $\bra{\mathcal{A}} \hat{\mathbb{F}}_+ \ket{\mathcal{A}} \stackrel{N \to \infty}{\longrightarrow} 0$ where $\hat{\mathbb{F}}_+ \equiv \sum \limits_{\substack{n < n' \\  X_n \cap X_{n'} \neq \emptyset}} \sum_{\alpha, \alpha'} \Big( |\pi_{X_{n}}^{\alpha} \rangle\langle \pi_{X_{n}}^{\alpha} | \otimes \hat{\mathbb{I}}_{\bar{X}_{n}} + | \pi_{X_{n'}}^{\alpha'} \rangle \langle \pi_{X_{n'}}^{\alpha'} | \otimes \hat{\mathbb{I}}_{\bar{X}_{n'}} \Big)^2$ is a non-negative Hermitian operator and the sum is over all overlapping but non-equal pairs of regions. Here, $\hat{P}_{X_n} = \sum_{\alpha = 0}^{r_n - 1} |\pi_{X_{n}}^{\alpha} \rangle\langle \pi_{X_{n}}^{\alpha} | \otimes \hat{\mathbb{I}}_{\bar{X}_{n}}$ is a spectral decomposition of the local projector in the subspace of the qubits in $X_n$, $1 \leq r_n < \infty$ is the rank of the projector, and $\hat{\mathbb{I}}_{\bar{X}_{n}}$ is the identity operator on all qudits in the complement $\bar{X}_n$ of $X_n$.
\end{enumerate}
\end{lemma}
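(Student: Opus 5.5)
The plan is to read the hypothesis $\hat{\mathbb{H}}_+\ket{\mathcal{A}} \stackrel{N\to\infty}{\longrightarrow} 0$ as convergence in norm, $\|\hat{\mathbb{H}}_+\ket{\mathcal{A}}\|\to 0$, for normalised $\ket{\mathcal{A}}$. This gives two vanishing quantities at once. First, $\bra{\mathcal{A}}\hat{\mathbb{H}}_+^2\ket{\mathcal{A}} = \|\hat{\mathbb{H}}_+\ket{\mathcal{A}}\|^2 \to 0$. Second, by Cauchy--Schwarz, $\bra{\mathcal{A}}\hat{\mathbb{H}}_+\ket{\mathcal{A}} \le \|\hat{\mathbb{H}}_+\ket{\mathcal{A}}\| \to 0$. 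Each item will then follow from an operator inequality bounding the relevant operator by a constant combination of $\hat{\mathbb{H}}_+$ and $\hat{\mathbb{H}}_+^2$. The constants must be uniform in $N$. I will interpret ``finite couplings'' as $\lambda_{\max}\equiv\sup\lambda_n<\infty$ (and likewise for $\lambda_{n,n'}$). I will also use locality in the form that each region $X_n$ overlaps at most $c$ other regions $X_{n'}$, with $c$ independent of $N$.

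\emph{Item 1.} Since every $\hat{P}_{X_n}\ge 0$, we have $\hat{\mathbb{H}}'_+ \le \lambda_{\max}\hat{\mathbb{H}}_+$, so $\bra{\mathcal{A}}\hat{\mathbb{H}}'_+\ket{\mathcal{A}}\le\lambda_{\max}\bra{\mathcal{A}}\hat{\mathbb{H}}_+\ket{\mathcal{A}}\to0$.

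\emph{Item 3.} Here I note that $\sum_\alpha |\pi^\alpha_{X_n}\rangle\langle\pi^\alpha_{X_n}|\otimes\hat{\mathbb{I}}_{\bar X_n}=\hat{P}_{X_n}$, so each summand of $\hat{\mathbb{F}}_+$ is $(\hat{P}_{X_n}+\hat{P}_{X_{n'}})^2$; reading the double sum over $\alpha,\alpha'$ this way is the one point to check carefully against the definition. For a non-negative operator $A$ one has $A^2\le\|A\|A$, and $\|\hat{P}_{X_n}+\hat{P}_{X_{n'}}\|\le2$. Hence $(\hat{P}_{X_n}+\hat{P}_{X_{n'}})^2\le 2(\hat{P}_{X_n}+\hat{P}_{X_{n'}})$. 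Summing over overlapping pairs, each $\hat{P}_{X_n}$ appears at most $c$ times, which gives $\hat{\mathbb{F}}_+\le 2c\,\hat{\mathbb{H}}_+$, and the claim follows from the first vanishing quantity.

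\emph{Item 2.} This is the step I expect to need the most care, because it requires the second moment. I expand
\begin{equation*}
\hat{\mathbb{H}}_+^2=\hat{\mathbb{H}}_+ + \sum_{\substack{n< n'\\ X_n\cap X_{n'}\neq\emptyset}}\big(\hat{P}_{X_n}\hat{P}_{X_{n'}}+\hat{P}_{X_{n'}}\hat{P}_{X_n}\big) + \sum_{\substack{n\neq n'\\ X_n\cap X_{n'}=\emptyset}}\hat{P}_{X_n}\hat{P}_{X_{n'}} ,
\end{equation*}
using $\hat{P}_{X_n}^2=\hat{P}_{X_n}$. For disjoint supports the projectors commute, so each $\hat{P}_{X_n}\hat{P}_{X_{n'}}$ is itself a projector and is non-negative. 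Therefore $\hat{\mathbb{J}}'_+\le\lambda_{\max}\sum_{\rm disjoint}\hat{P}_{X_n}\hat{P}_{X_{n'}}$. The overlapping cross terms are not individually non-negative. However, $0\le(\hat{P}_{X_n}+\hat{P}_{X_{n'}})^2$ implies $\hat{P}_{X_n}\hat{P}_{X_{n'}}+\hat{P}_{X_{n'}}\hat{P}_{X_n}\ge-(\hat{P}_{X_n}+\hat{P}_{X_{n'}})$. Summing with the bounded overlap degree, the overlapping part is $\ge -c\,\hat{\mathbb{H}}_+$. Rearranging gives
\begin{equation*}
\sum_{\rm disjoint}\hat{P}_{X_n}\hat{P}_{X_{n'}}\le\hat{\mathbb{H}}_+^2+(c-1)\hat{\mathbb{H}}_+ .
\end{equation*}
Taking the expectation in $\ket{\mathcal{A}}$, both terms on the right vanish as $N\to\infty$, which proves item 2.

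The main obstacle is thus making the $N$-independence explicit. This needs the uniform bound $c$ on overlaps, which is guaranteed by the finite-range locality of the $X_n$ on a $D$-dimensional lattice. It also needs the norm (rather than expectation-value) form of the hypothesis, which is what controls the $\hat{\mathbb{H}}_+^2$ term.
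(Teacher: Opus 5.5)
Your proof is correct apart from one misreading in item 3, and that step is easily repaired. In $\hat{\mathbb{F}}_+$ the square sits \emph{inside} the double sum over $\alpha,\alpha'$. Write $Q^{\alpha}_{n} \equiv |\pi^{\alpha}_{X_n}\rangle\langle\pi^{\alpha}_{X_n}|\otimes\hat{\mathbb{I}}_{\bar{X}_n}$. Then a summand is $\sum_{\alpha,\alpha'}(Q^{\alpha}_{n}+Q^{\alpha'}_{n'})^2 = r_{n'}\hat{P}_{X_n}+r_n\hat{P}_{X_{n'}}+\hat{P}_{X_n}\hat{P}_{X_{n'}}+\hat{P}_{X_{n'}}\hat{P}_{X_n}$. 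This equals $(\hat{P}_{X_n}+\hat{P}_{X_{n'}})^2$ only when both ranks are one, and the PVBS projectors have rank two.

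The fix is to apply your bound $A^2\le\|A\|A$ to each pair $Q^{\alpha}_{n}+Q^{\alpha'}_{n'}$, whose norm is at most $2$. Summing gives $\hat{\mathbb{F}}_+\le 2c\,r_{\max}\hat{\mathbb{H}}_+$. So the conclusion survives provided the ranks are bounded uniformly in $N$. This is automatic, since $r_n\le d^{|X_n|}$ with $d$ the local dimension.

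With that fix, your route differs from the paper's mainly in item 3. The paper expands $\|\hat{\mathbb{H}}_+\ket{\mathcal{A}}\|^2=\bra{\mathcal{A}}\hat{\mathbb{H}}_+\ket{\mathcal{A}}+\bra{\mathcal{A}}\hat{\mathbb{J}}_+\ket{\mathcal{A}}+\bra{\mathcal{A}}\hat{\mathbb{F}}_+\ket{\mathcal{A}}-\bra{\mathcal{A}}\hat{\mathbb{H}}'_+\ket{\mathcal{A}}$. It obtains this by completing the square on the rank-one pieces of the overlapping cross terms. It then reads off items 2 and 3 together from the non-negativity of $\hat{\mathbb{J}}_+$ and $\hat{\mathbb{F}}_+$.

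Your item 2 is the same completing-the-square trick, applied to whole projectors and packaged as the operator inequality $\hat{\mathbb{J}}_+\le\hat{\mathbb{H}}_+^2+(c-1)\hat{\mathbb{H}}_+$. Your item 3 instead avoids the second moment entirely. It shows that $\bra{\mathcal{A}}\hat{\mathbb{F}}_+\ket{\mathcal{A}}\to 0$ already follows from $\bra{\mathcal{A}}\hat{\mathbb{H}}_+\ket{\mathcal{A}}\to 0$. So only item 2 genuinely needs the norm form of the hypothesis. Your version also makes explicit two things the paper uses only implicitly: the Cauchy--Schwarz step from norm to expectation, and the $N$-uniform constants $c$ and $r_{\max}$.
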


\emph{Proof of 1.} Since the operators $\hat{P}_{X_n}$ and the numbers $\lambda_n$ are non-negative, we have: \begin{equation} \bra{\mathcal{A}} \hat{\mathbb{H}}'_+ \ket{\mathcal{A}} = \sum_n \lambda_n \bra{\mathcal{A}}\hat{P}_{X_n} \ket{\mathcal{A}} \leq \lambda_{\rm max} \sum_n \bra{\mathcal{A}}\hat{P}_{X_n} \ket{\mathcal{A}} = \lambda_{\rm max} \bra{\mathcal{A}}\hat{\mathbb{H}}_+ \ket{\mathcal{A}} , \label{eq:H_prime_+} \end{equation} where $\lambda_{\rm max} = \max_n \{ \lambda_n \} < \infty$. But since $\hat{\mathbb{H}}_+ \ket{\mathcal{A}} \stackrel{N \to \infty}{\longrightarrow} 0$ we have $\bra{\mathcal{A}}\hat{\mathbb{H}}_+ \ket{\mathcal{A}} \stackrel{N \to \infty}{\longrightarrow} 0$, and so the right hand side of Eq. \ref{eq:H_prime_+} vanishes in the thermodynamic limit, which implies that $\bra{\mathcal{A}} \hat{\mathbb{H}}'_+ \ket{\mathcal{A}} \stackrel{N \to \infty}{\longrightarrow} 0$. \\

\emph{Proof of 2 and 3.} Since $\hat{\mathbb{H}}_+ \ket{\mathcal{A}} \stackrel{N \to \infty}{\longrightarrow} 0$ its norm vanishes in the thermodynamic limit, i.e.: \begin{equation} \| \hat{\mathbb{H}}_+ \ket{\mathcal{A}} \|^2 \equiv \bra{\mathcal{A}} \hat{\mathbb{H}}_+ \hat{\mathbb{H}}_+ \ket{\mathcal{A}} = \sum_{n,n'} \bra{\mathcal{A}} \hat{P}_{X_n} \hat{P}_{X_{n'}} \ket{\mathcal{A}} \stackrel{N \to \infty}{\longrightarrow} 0 . \end{equation}
Let us break up the sum $\sum_{n,n'}$ into three groups of terms: those for which $n=n'$, those for which $n \neq n'$ and $X_n$ does not overlap with $X_{n'}$, and those for which $n \neq n'$ and $X_n$ overlaps with $X_{n'}$. This gives:
\begin{eqnarray} \| \hat{\mathbb{H}}_+ \ket{\mathcal{A}} \|^2 &=& \sum_{n, n'} \bra{\mathcal{A}} \hat{P}_{X_n} \hat{P}_{X_{n'}} \ket{\mathcal{A}} \\ &=& \sum_{n} \bra{\mathcal{A}} \hat{P}_{X_n} \ket{\mathcal{A}} + \sum_{\substack{n \neq n' \\  X_n \cap X_{n'} = \emptyset}} \bra{\mathcal{A}} \hat{P}_{X_n} \hat{P}_{X_{n'}} \ket{\mathcal{A}} + \sum_{\substack{n \neq n' \\  X_n \cap X_{n'} \neq \emptyset}} \bra{\mathcal{A}} \hat{P}_{X_n} \hat{P}_{X_{n'}} \ket{\mathcal{A}} \\ &=& \bra{\mathcal{A}} \hat{\mathbb{H}}_{+} \ket{\mathcal{A}}  + \bra{\mathcal{A}} \hat{\mathbb{J}}_+ \ket{\mathcal{A}} + \sum_{\substack{n < n' \\  X_n \cap X_{n'} \neq \emptyset}} \bra{\mathcal{A}} (\hat{P}_{X_n} \hat{P}_{X_{n'}} + \hat{P}_{X_{n'}} \hat{P}_{X_{n}}) \ket{\mathcal{A}}
\\ &\stackrel{N \to \infty}{\longrightarrow}& 0 , \end{eqnarray} where in the first group of terms we have used $\sum_{n} \bra{\mathcal{A}} \hat{P}_{X_n} \hat{P}_{X_n} \ket{\mathcal{A}} = \sum_{n} \bra{\mathcal{A}} \hat{P}_{X_n} \ket{\mathcal{A}} = \bra{\mathcal{A}} \hat{\mathbb{H}}_{+} \ket{\mathcal{A}}$, and in the second group we have defined: \begin{equation} \hat{\mathbb{J}}_+ \equiv  \sum \limits_{\substack{n \neq n' \\  X_n \cap X_{n'} = \emptyset}} \hat{P}_{X_n}  \hat{P}_{X_{n'}}\end{equation} which is similar to the definition of $\hat{\mathbb{J}}'_+$ given in the statement of Lemma \ref{lemma:vanishing_quantities} but with the couplings set to $\lambda_{n,n'} = 1$. Now, in the last group of terms, we write the projectors in their partial spectral decomposition $\hat{P}_{X_n} = \sum_{\alpha = 0}^{r_n - 1} \ket{\pi_{X_n}^{(\alpha)}} \bra{\pi_{X_n}^{(\alpha)}} \otimes \hat{\mathbb{I}}_{\bar{X}_{n}}$, which gives:
\begin{eqnarray} \| \hat{\mathbb{H}}_+ \ket{\mathcal{A}} \|^2 &=& \bra{\mathcal{A}} \hat{\mathbb{H}}_{+} \ket{\mathcal{A}} + \bra{\mathcal{A}} \hat{\mathbb{J}}_+ \ket{\mathcal{A}} \\ && \qquad + \sum_{\substack{n < n' \\  X_n \cap X_{n'} \neq \emptyset}} \sum_{\alpha, \alpha'} \bra{\mathcal{A}} \Big( |\pi_{X_{n}}^{\alpha} \rangle\langle \pi_{X_{n}}^{\alpha} | \cdot | \pi_{X_{n'}}^{\alpha'} \rangle \langle \pi_{X_{n'}}^{\alpha'} | + |\pi_{X_{n'}}^{\alpha'} \rangle\langle \pi_{X_{n'}}^{\alpha'} | \cdot | \pi_{X_{n}}^{\alpha} \rangle \langle \pi_{X_{n}}^{\alpha} | \Big) \ket{\mathcal{A}} \stackrel{N \to \infty}{\longrightarrow} 0 \label{eq:H_+_norm_almost_there} \end{eqnarray}
Rewrite the term in round brackets as:
\begin{eqnarray} |\pi_{X_{n}}^{\alpha} \rangle\langle \pi_{X_{n}}^{\alpha} | \cdot | \pi_{X_{n'}}^{\alpha'} \rangle \langle \pi_{X_{n'}}^{\alpha'} | &+& |\pi_{X_{n'}}^{\alpha'} \rangle\langle \pi_{X_{n'}}^{\alpha'} | \cdot | \pi_{X_{n}}^{\alpha} \rangle \langle \pi_{X_{n}}^{\alpha} | \nonumber \\ &=& \Big( |\pi_{X_{n}}^{\alpha} \rangle\langle \pi_{X_{n}}^{\alpha} | + | \pi_{X_{n'}}^{\alpha'} \rangle \langle \pi_{X_{n'}}^{\alpha'} | \Big)^2 - |\pi_{X_{n'}}^{\alpha'} \rangle\langle \pi_{X_{n'}}^{\alpha'} | - | \pi_{X_{n}}^{\alpha} \rangle \langle \pi_{X_{n}}^{\alpha} | . \label{eq:cross_terms} \end{eqnarray}
Substituting back into Eq. \ref{eq:H_+_norm_almost_there} gives:
\begin{eqnarray} \| \hat{\mathbb{H}}_+ \ket{\mathcal{A}} \|^2 = \bra{\mathcal{A}} \hat{\mathbb{H}}_{+} \ket{\mathcal{A}} + \bra{\mathcal{A}} \hat{\mathbb{J}}_+ \ket{\mathcal{A}} + \bra{\mathcal{A}} \hat{\mathbb{F}}_+ \ket{\mathcal{A}} -  \bra{\mathcal{A}} \hat{\mathbb{H}}'_{+} \ket{\mathcal{A}}  \stackrel{N \to \infty}{\longrightarrow} 0 , \label{eq:H_+_norm_final_form} \end{eqnarray} where we have used the definition of $\hat{\mathbb{F}}_+$ given in the statement of the lemma, and also: \begin{equation} \hat{\mathbb{H}}'_+ = \sum_{\substack{n < n' \\  X_n \cap X_{n'} \neq \emptyset}} \sum_{\alpha, \alpha'} ( |\pi_{X_{n'}}^{\alpha'} \rangle\langle \pi_{X_{n'}}^{\alpha'} | + | \pi_{X_{n}}^{\alpha} \rangle \langle \pi_{X_{n}}^{\alpha} | ) = \sum_{\substack{n \neq n' \\  X_n \cap X_{n'} \neq \emptyset}} \sum_{\alpha, \alpha'} | \pi_{X_{n}}^{\alpha} \rangle \langle \pi_{X_{n}}^{\alpha} |  = \sum_{n} r_n c_n \hat{P}_{X_n} , \end{equation} which is in the form of $\hat{\mathbb{H}}'_+$ in the first part of the lemma. Here, $0 \leq c_n < \infty$ is the number of regions $X_{n' \neq n}$ that overlap with $X_{n}$.

Now, since we know that $\bra{\mathcal{A}} \hat{\mathbb{H}}_{+} \ket{\mathcal{A}} \stackrel{N \to \infty}{\longrightarrow} 0$ by assumption, $\bra{\mathcal{A}} \hat{\mathbb{H}}'_{+} \ket{\mathcal{A}} \stackrel{N \to \infty}{\longrightarrow} 0$ (by the first part of the lemma) and also that $\hat{\mathbb{F}}_+$ and $\hat{\mathbb{J}}_+$ are both non-negative Hermitian operators, the only way to have $\| \hat{\mathbb{H}}_+ \ket{\mathcal{A}} \|^2 \stackrel{N \to \infty}{\longrightarrow} 0$ in Eq. \ref{eq:H_+_norm_final_form} is if we have: \begin{equation} \bra{\mathcal{A}} \hat{\mathbb{F}}_{+} \ket{\mathcal{A}} \stackrel{N \to \infty}{\longrightarrow} 0 , \label{eq:F_lim} \end{equation} and: \begin{equation} \bra{\mathcal{A}} \hat{\mathbb{J}}_{+} \ket{\mathcal{A}} \stackrel{N \to \infty}{\longrightarrow} 0 . \label{eq:J_lim} \end{equation} Eq. \ref{eq:F_lim} completes part 3 of the Lemma. 

Finally, since the operators $\hat{P}_{X_n} \hat{P}_{X_{n'}}$ are non-negative when $X_n$ and $X_{n'}$ do not intersect, and the numbers $\lambda_{n,n'}$ are also non-negative, we have: \begin{equation} \bra{\mathcal{A}} \hat{\mathbb{J}}'_+ \ket{\mathcal{A}} = \sum \limits_{\substack{n \neq n' \\  X_n \cap X_{n'} = \emptyset}} \lambda_{n,n'} \bra{\mathcal{A}} \hat{P}_{X_n} \hat{P}_{X_{n'}} \ket{\mathcal{A}} \leq \lambda_{\rm max} \sum \limits_{\substack{n \neq n' \\  X_n \cap X_{n'} = \emptyset}} \bra{\mathcal{A}} \hat{P}_{X_n} \hat{P}_{X_{n'}} \ket{\mathcal{A}}  = \lambda_{\rm max} \bra{\mathcal{A}}\hat{\mathbb{J}}_+ \ket{\mathcal{A}} , \label{eq:H_prime_+2} \end{equation} where $\lambda_{\rm max} = \max_{n,n'} \{ \lambda_n \}$. But since $\lambda_{\rm max}$ is finite and $\bra{\mathcal{A}}\hat{\mathbb{J}}_+ \ket{\mathcal{A}} \stackrel{N \to \infty}{\longrightarrow} 0$ (Eq. \ref{eq:J_lim}), the right hand side vanishes in the thermodynamic limit, which implies that $\bra{\mathcal{A}} \hat{\mathbb{J}}'_+ \ket{\mathcal{A}} \stackrel{N \to \infty}{\longrightarrow} 0$. This completes part 2 of the Lemma. \unskip\nobreak\hfill $\square$

\subsection{Proof of Proposition \ref{proposition:hamiltonian_AQMBS}} \label{sec:prop_ham_AQMBS}

To prove Proposition \ref{proposition:hamiltonian_AQMBS} we will show that the norm of the vector $\hat{\mathbb{H}} \ket{\mathcal{A}}$ vanishes in the thermodynamics limit. To do this, let us first write the squared norm as: \begin{equation} \| \hat{\mathbb{H}} \ket{\mathcal{A}} \|^2 = \bra{\mathcal{A}} \hat{\mathbb{H}} \hat{\mathbb{H}} \ket{\mathcal{A}} = \sum_{n,n'} \bra{\mathcal{A}} \hat{P}_{X_n} \hat{h}_{X_n} \hat{P}_{X_n}  \hat{P}_{X_{n'}} \hat{h}_{X_{n'}} \hat{P}_{X_{n'}} \ket{\mathcal{A}} . \label{eq:H_norm} \end{equation} Now, since $\hat{P}_{X_n} \hat{h}_{X_n} \hat{P}_{X_n}$ and $\hat{P}_{X_n}$ commute with each other they have a common eigenbasis. Previously we wrote the spectral decomposition of the projector as $\hat{P}_{X_n} = \sum_{\alpha = 0}^{r_n - 1} \ket{\pi_{X_{n}}^{\alpha}} \bra{\pi_{X_{n}}^{\alpha}} \otimes \hat{\mathbb{I}}_{\bar{X}_{n}}$, so let us write the spectral decomposition of the commuting projected term as $\hat{P}_{X_n} \hat{h}_{X_n} \hat{P}_{X_n} = \sum_{\alpha = 0}^{r_n - 1} \eta_{X_n}^\alpha \ket{\pi_{X_{n}}^{\alpha}} \bra{\pi_{X_{n}}^{\alpha}} \otimes \hat{\mathbb{I}}_{\bar{X}_{n}}$, where $\eta_{X_n}^\alpha \in \mathbb{R}$. Substituting into Eq. \ref{eq:H_norm} gives: \begin{equation} \| \hat{\mathbb{H}} \ket{\mathcal{A}} \|^2 = \sum_{n,n'} \sum_{\alpha, \alpha'} \eta_{X_n}^{\alpha} \eta_{X_{n'}}^{\alpha'} \bra{\mathcal{A}} (\ket{\pi_{X_{n}}^{\alpha}} \bra{\pi_{X_{n}}^{\alpha}} \otimes \hat{\mathbb{I}}_{\bar{X}_{n}}) (\ket{\pi_{X_{n'}}^{\alpha'}} \bra{\pi_{X_{n'}}^{\alpha'}} \otimes \hat{\mathbb{I}}_{\bar{X}_{n'}}) \ket{\mathcal{A}} . \end{equation}
Now we break up the sum $\sum_{n,n'}$ into three groups of terms: those for which $n=n'$, those for which $n \neq n'$ and $X_n$ does not overlap with $X_{n'}$, and those for which $n \neq n'$ and $X_n$ overlaps with $X_{n'}$. This gives:

\begin{eqnarray} \| \hat{\mathbb{H}} \ket{\mathcal{A}} \|^2 &=& \sum_{n} \sum_{\alpha} (\eta_{X_n}^\alpha)^2 \bra{\mathcal{A}} (\ket{\pi_{X_{n}}^{\alpha}} \bra{\pi_{X_{n}}^{\alpha}} \otimes \hat{\mathbb{I}}_{\bar{X}_{n}}) \ket{\mathcal{A}} \\ && + \sum_{\substack{n \neq n' \\  X_n \cap X_{n'} = \emptyset}} \sum_{\alpha, \alpha'} \eta_{X_n}^\alpha \eta_{X_{n'}}^{\alpha'} \bra{\mathcal{A}} \left( \ket{\pi_{X_{n}}^{\alpha}} \bra{\pi_{X_{n}}^{\alpha}} \cdot \ket{\pi_{X_{n'}}^{\alpha'}} \bra{\pi_{X_{n'}}^{\alpha'}} \right) \ket{\mathcal{A}} \\ && + \sum_{\substack{n < n' \\  X_n \cap X_{n'} \neq \emptyset}} \sum_{\alpha, \alpha'} \eta_{X_n}^\alpha \eta_{X_{n'}}^{\alpha'} \bra{\mathcal{A}} \left( \ket{\pi_{X_{n}}^{\alpha}} \bra{\pi_{X_{n}}^{\alpha}} \cdot \ket{\pi_{X_{n'}}^{\alpha'}} \bra{\pi_{X_{n'}}^{\alpha'}} + \ket{\pi_{X_{n'}}^{\alpha'}} \bra{\pi_{X_{n'}}^{\alpha'}} \cdot \ket{\pi_{X_{n}}^{\alpha}} \bra{\pi_{X_{n}}^{\alpha}}  \right) \ket{\mathcal{A}} . \end{eqnarray} 
Next, we rewrite the term in round brackets in the last line using Eq. \ref{eq:cross_terms}, to obtain:

\begin{eqnarray} \| \hat{\mathbb{H}} \ket{\mathcal{A}} \|^2 &=& \underbrace{\sum_{n} \sum_{\alpha} (\eta_{X_n}^\alpha)^2 \bra{\mathcal{A}} \left( \ket{\pi_{X_{n}}^{\alpha}} \bra{\pi_{X_{n}}^{\alpha}} \otimes \hat{\mathbb{I}}_{\bar{X}_{n}} \right) \ket{\mathcal{A}}}_{\equiv A} \\ && + \underbrace{\sum_{\substack{n \neq n' \\  X_n \cap X_{n'} = \emptyset}} \sum_{\alpha, \alpha'} \eta_{X_n}^\alpha \eta_{X_{n'}}^{\alpha'} \bra{\mathcal{A}} \left( \ket{\pi_{X_{n}}^{\alpha}} \bra{\pi_{X_{n}}^{\alpha}} \cdot \ket{\pi_{X_{n'}}^{\alpha'}} \bra{\pi_{X_{n'}}^{\alpha'}} \right) \ket{\mathcal{A}} }_{\equiv B} \\ && + \underbrace{\sum_{\substack{n < n' \\  X_n \cap X_{n'} \neq \emptyset}} \sum_{\alpha, \alpha'} \eta_{X_n}^\alpha \eta_{X_{n'}}^{\alpha'} \bra{\mathcal{A}} \left( \ket{\pi_{X_{n}}^{\alpha}} \bra{\pi_{X_{n}}^{\alpha}} + \ket{\pi_{X_{n'}}^{\alpha'}} \bra{\pi_{X_{n'}}^{\alpha'}} \right)^2 \ket{\mathcal{A}} }_{\equiv C} \\ && - \underbrace{\sum_{\substack{n \neq n' \\  X_n \cap X_{n'} \neq \emptyset}} \sum_{\alpha, \alpha'} \eta_{X_n}^\alpha \eta_{X_{n'}}^{\alpha'} \bra{\mathcal{A}} \Big( \ket{\pi_{X_{n}}^{\alpha}} \bra{\pi_{X_{n}}^{\alpha}} \otimes \hat{\mathbb{I}}_{\bar{X}_n} \Big) \ket{\mathcal{A}} }_{\equiv D} , \end{eqnarray} where we have grouped terms together into $A$, $B$, $C$, $D$ as indicated above. By the triangle inequality, we must have: \begin{equation} \| \hat{\mathbb{H}} \ket{\mathcal{A}} \|^2 \leq |A| + |B| + |C| + |D| . \end{equation}
To show that $\| \hat{\mathbb{H}} \ket{\mathcal{A}} \|^2 \stackrel{N \to \infty}{\longrightarrow} 0$ we will now show that each term on the right hand side vanishes in the thermodynamic limit.

The absolute value of the first term is: \begin{eqnarray} |A| = \Big| \sum_{n, \alpha} (\eta_{X_n}^\alpha)^2 \bra{\mathcal{A}} (\ket{\pi_{X_{n}}^{\alpha}} \bra{\pi_{X_{n}}^{\alpha}} \otimes \hat{\mathbb{I}}_{\bar{X}_{n}}) \ket{\mathcal{A}} \Big| &\leq& |\eta|_{\rm max}^2 \sum_{n, \alpha} \bra{\mathcal{A}} (\ket{\pi_{X_{n}}^{\alpha}} \bra{\pi_{X_{n}}^{\alpha}} \otimes \hat{\mathbb{I}}_{\bar{X}_{n}}) \ket{\mathcal{A}} = |\eta|_{\rm max}^2 \bra{\mathcal{A}} \hat{\mathbb{H}}_+ \ket{\mathcal{A}} \label{eq:A_norm} \end{eqnarray} where $|\eta|_{\rm max} \equiv \max_{n,\alpha} \{ |\eta_n^\alpha| \}$. Since we already know that $\bra{\mathcal{A}} \hat{\mathbb{H}}_+ \ket{\mathcal{A}} \stackrel{N \to \infty}{\longrightarrow} 0$ (by assumption), the right hand side of Eq. \ref{eq:A_norm} vanishes and so we have $|A| \stackrel{N \to \infty}{\longrightarrow} 0$ .

The absolute value of the second term is, using the triangle inequality, bounded by: \begin{eqnarray} |B| &=& \Big| \sum_{\substack{n \neq n' \\  X_n \cap X_{n'} = \emptyset}} \sum_{\alpha, \alpha'} \eta_{X_n}^\alpha \eta_{X_{n'}}^{\alpha'} \bra{\mathcal{A}} \left( \ket{\pi_{X_{n}}^{\alpha}} \bra{\pi_{X_{n}}^{\alpha}} \cdot \ket{\pi_{X_{n'}}^{\alpha'}} \bra{\pi_{X_{n'}}^{\alpha'}} \right) \ket{\mathcal{A}} \Big| \\ &\leq& \sum_{\substack{n \neq n' \\  X_n \cap X_{n'} = \emptyset}} \sum_{\alpha, \alpha'} |\eta_{X_n}^\alpha| |\eta_{X_{n'}}^{\alpha'}| \bra{\mathcal{A}} \left( \ket{\pi_{X_{n}}^{\alpha}} \bra{\pi_{X_{n}}^{\alpha}} \cdot \ket{\pi_{X_{n'}}^{\alpha'}} \bra{\pi_{X_{n'}}^{\alpha'}} \right) \ket{\mathcal{A}} \\ &\leq& |\eta|_{\rm max}^2 \sum_{\substack{n \neq n' \\  X_n \cap X_{n'} = \emptyset}} \sum_{\alpha, \alpha'} \bra{\mathcal{A}} \left( \ket{\pi_{X_{n}}^{\alpha}} \bra{\pi_{X_{n}}^{\alpha}} \cdot \ket{\pi_{X_{n'}}^{\alpha'}} \bra{\pi_{X_{n'}}^{\alpha'}} \right) \ket{\mathcal{A}} \\ &=& |\eta|_{\rm max}^2 \sum_{\substack{n \neq n' \\  X_n \cap X_{n'} = \emptyset}} \bra{\mathcal{A}} \hat{P}_{X_n} \hat{P}_{X_{n'}} \ket{\mathcal{A}} \\ &=& |\eta|_{\rm max}^2 \bra{\mathcal{A}} \hat{\mathbb{J}}_+ \ket{\mathcal{A}} .\end{eqnarray} Since we know that $\bra{\mathcal{A}} \hat{\mathbb{J}}_+ \ket{\mathcal{A}} \stackrel{N \to \infty}{\longrightarrow} 0$ (by Lemma \ref{lemma:vanishing_quantities}), the last line vanishes and so we have $|B| \stackrel{N \to \infty}{\longrightarrow} 0$.

The absolute value of the third term is, using the triangle inequality, bounded by: \begin{eqnarray} |C| &=& \Big| \sum_{\substack{n < n' \\  X_n \cap X_{n'} \neq \emptyset}} \sum_{\alpha, \alpha'} \eta_{X_n}^\alpha \eta_{X_{n'}}^{\alpha'} \bra{\mathcal{A}} \left( \ket{\pi_{X_{n}}^{\alpha}} \bra{\pi_{X_{n}}^{\alpha}} + \ket{\pi_{X_{n'}}^{\alpha'}} \bra{\pi_{X_{n'}}^{\alpha'}} \right)^2 \ket{\mathcal{A}} \Big| \\ &\leq& \sum_{\substack{n < n' \\  X_n \cap X_{n'} \neq \emptyset}} \sum_{\alpha, \alpha'} |\eta_{X_n}^\alpha| |\eta_{X_{n'}}^{\alpha'}| \bra{\mathcal{A}} \left( \ket{\pi_{X_{n}}^{\alpha}} \bra{\pi_{X_{n}}^{\alpha}} + \ket{\pi_{X_{n'}}^{\alpha'}} \bra{\pi_{X_{n'}}^{\alpha'}} \right)^2 \ket{\mathcal{A}} \\ &\leq& |\eta|_{\rm max}^2 \sum_{\substack{n < n' \\  X_n \cap X_{n'} \neq \emptyset}} \sum_{\alpha, \alpha'} \bra{\mathcal{A}} \left( \ket{\pi_{X_{n}}^{\alpha}} \bra{\pi_{X_{n}}^{\alpha}} + \ket{\pi_{X_{n'}}^{\alpha'}} \bra{\pi_{X_{n'}}^{\alpha'}} \right)^2 \ket{\mathcal{A}} \\ &=& |\eta|_{\rm max}^2 \bra{\mathcal{A}} \hat{\mathbb{F}}_+ \ket{\mathcal{A}} . \end{eqnarray} Since we know that $\bra{\mathcal{A}} \hat{\mathbb{F}}_+ \ket{\mathcal{A}} \stackrel{N \to \infty}{\longrightarrow} 0$ (by Lemma \ref{lemma:vanishing_quantities}), the last line vanishes and so we have $|C| \stackrel{N \to \infty}{\longrightarrow} 0$.

Finally, the absolute value of the last term is, using the triangle inequality, bounded by: \begin{eqnarray} |D| &=& \Big| \sum_{n} \sum_{\alpha} \eta_{X_n}^\alpha \Big( \sum_{\alpha'} \sum_{\substack{n' \\ n' \neq n,  X_n \cap X_{n'} \neq \emptyset}} \eta_{X_{n'}}^{\alpha'}  \Big) \bra{\mathcal{A}} \left( \ket{\pi_{X_{n}}^{\alpha}} \bra{\pi_{X_{n}}^{\alpha}} \otimes \hat{\mathbb{I}}_{\bar{X}_{n}} \right) \ket{\mathcal{A}} \Big| \\ &\leq& \sum_{n} \sum_{\alpha} |\eta_{X_n}^\alpha| \Big| \Big( \sum_{\alpha'} \sum_{\substack{n' \\ n' \neq n,  X_n \cap X_{n'} \neq \emptyset}} \eta_{X_{n'}}^{\alpha'}  \Big) \Big| \bra{\mathcal{A}} \left( \ket{\pi_{X_{n}}^{\alpha}} \bra{\pi_{X_{n}}^{\alpha}} \otimes \hat{\mathbb{I}}_{\bar{X}_{n}} \right) \ket{\mathcal{A}} \\ &\leq& |\eta|_{\rm max}^2 \sum_n \sum_\alpha r_{\rm max} c_n \bra{\mathcal{A}} (\ket{\pi_{X_{n}}^{\alpha}} \bra{\pi_{X_{n}}^{\alpha}} \otimes \hat{\mathbb{I}}_{\bar{X}_{n}}) \ket{\mathcal{A}} \\ &=& |\eta|_{\rm max}^2 \bra{\mathcal{A}} \hat{\mathbb{H}}'_+ \ket{\mathcal{A}} . \end{eqnarray} Since we know that $\bra{\mathcal{A}} \hat{\mathbb{H}}'_+ \ket{\mathcal{A}} \stackrel{N \to \infty}{\longrightarrow} 0$ (by Lemma \ref{lemma:vanishing_quantities}), the last line vanishes and we have $|D| \stackrel{N \to \infty}{\longrightarrow} 0$.

Putting everything together, we have $\| \hat{\mathbb{H}} \ket{\mathcal{A}} \|^2 \leq |A| + |B| + |C| + |D| \stackrel{N \to \infty}{\longrightarrow} 0$. \unskip\nobreak\hfill $\square$

\subsection{Proof of Proposition \ref{proposition:circuit_AQMBS}} \label{sec:prop_circuit_AQMBS}

Let $\hat{\mathbb{U}}$ be a circuit of depth $d$, which means that it can be expressed as a product of $d$ layers: \begin{equation} \hat{\mathbb{U}} = \hat{\mathbb{U}}^{(d)} \hat{\mathbb{U}}^{(d-1)} \hdots \hat{\mathbb{U}}^{(2)} \hat{\mathbb{U}}^{(1)} , \end{equation} where the $\ell$'th layer is a product of \emph{commuting} local unitary gates: \begin{eqnarray} \hat{\mathbb{U}}^{(\ell)} &=& \prod_{n \in L^{(\ell)}} \exp (i\hat{P}_{X_n} \hat{h}_{X_n} \hat{P}_{X_n}) \\ &=& \exp \left(i \sum_{n \in L^{(\ell)}} \hat{P}_{X_n} \hat{h}_{X_n} \hat{P}_{X_n} \right) \\ &=& \exp \left( i \hat{\mathbb{H}}^{(\ell)} \right) . \end{eqnarray} Here we have defined the Hamiltonian $\hat{\mathbb{H}}^{(\ell)} \equiv \sum_{n \in L^{(\ell)}} \hat{P}_{X_n} \hat{h}_{X_n} \hat{P}_{X_n}$ which generates the unitary $\hat{\mathbb{U}}^{(\ell)}$, and $L^{(\ell)}$ is the subset of indices for the commuting gates in the $\ell$'th layer (i.e., $X_n \cap X_{n'} = \emptyset$ for any $n,n' \in L^{(\ell)}$).

Now consider the non-negative Hamiltonian: \begin{equation} \hat{\mathbb{H}}_+^{(\ell)} \equiv \sum_{n \in L^{(\ell)}} \hat{P}_{X_n} . \end{equation} Since we have assumed that $\hat{\mathbb{H}}_+ \ket{\mathcal{A}} \stackrel{N \to \infty}{\longrightarrow} 0$, we can use Lemma \ref{lemma:vanishing_quantities} to see that:

\begin{eqnarray} \bra{\mathcal{A}} \hat{\mathbb{H}}_+^{(\ell)} \hat{\mathbb{H}}_+^{(\ell)} \ket{\mathcal{A}} &=& \bra{\mathcal{A}} \left( \sum_{n\in L^{(\ell)}} \hat{P}_{X_n} \right)  \ket{\mathcal{A}} + \bra{\mathcal{A}} \left( \sum_{n \neq n' \in L^{(\ell)}} \hat{P}_{X_n} \hat{P}_{X_{n'}} \right)  \ket{\mathcal{A}} \\ & \stackrel{N \to \infty}{\longrightarrow} & 0 , \end{eqnarray} where the first term vanishes due to the part 1 of Lemma \ref{lemma:vanishing_quantities} and the second term vanishes due to part 2 of Lemma \ref{lemma:vanishing_quantities}. Since $\hat{\mathbb{H}}_+^{(\ell)} \ket{\mathcal{A}} \stackrel{N \to \infty}{\longrightarrow} 0$, by Proposition \ref{proposition:hamiltonian_AQMBS} we also have $\hat{\mathbb{H}}^{(\ell)} \ket{\mathcal{A}} \stackrel{N \to \infty}{\longrightarrow} 0$. This means that: \begin{equation} e^{i\hat{\mathbb{H}}^{(\ell)}} \ket{\mathcal{A}} = \sqrt{1 - |\epsilon_N^{(\ell)}|^2} \ket{\mathcal{A}} + \epsilon_N^{(\ell)} \ket{\psi_N^{(\ell)}} , \label{eq:U^l|A>} \end{equation} where $\epsilon_N^{(\ell)} \stackrel{N \to \infty}{\longrightarrow} 0$. So we have:

\begin{eqnarray} \hat{\mathbb{U}} \ket{\mathcal{A}} &=& e^{i\hat{\mathbb{H}}^{(d)}} e^{i\hat{\mathbb{H}}^{(d-1)}} \hdots e^{i\hat{\mathbb{H}}^{(1)}} \ket{\mathcal{A}} \label{eq:U|A>} \\ &=& \sqrt{1 - |\epsilon_N^{(d)}|^2} \sqrt{1 - |\epsilon_N^{(d-1)}|^2} \hdots \sqrt{1 - |\epsilon_N^{(1)}|^2} \ket{\mathcal{A}} \nonumber \\ && + \sum_{\ell = 1}^d \epsilon_N^{(\ell)} \sqrt{1 - |\epsilon_N^{(\ell-1)}|^2} \sqrt{1 - |\epsilon_N^{(\ell-2)}|^2} \hdots \sqrt{1 - |\epsilon_N^{(1)}|^2} e^{i\hat{\mathbb{H}}^{(d)}} e^{i\hat{\mathbb{H}}^{(d)-1}} \hdots e^{i\hat{\mathbb{H}}^{(\ell+1)}} \ket{\psi_N^{(\ell)}} , \label{eq:U|A>_expanded} \end{eqnarray} where to get from Eq. \ref{eq:U|A>} to Eq. \ref{eq:U|A>_expanded} we repeatedly use Eq. \ref{eq:U^l|A>}. Now, if the depth $d$ of the circuit is finite, only the first term of Eq. \ref{eq:U|A>_expanded} remains in the thermodynamic limit, and we have: \begin{equation} \hat{\mathbb{U}} \ket{\mathcal{A}} \stackrel{N \to \infty}{\longrightarrow} \ket{\mathcal{A}} , \end{equation} as we wished to show. \unskip\nobreak\hfill $\square$

\section{Boundary QMBS} \label{app: edge_localised_state}

\subsection{The boundary state is a ground state of the PVBS reference Hamiltonian}

In this section, we verify that the boundary QMBS from Eq. \ref{eq:BQMBS}:
\begin{equation} \ket{\mathcal{B}^{(g)}} = \frac{1}{\mathcal{N}^{(g)}} \sum_{n=0}^{N-1} g^{-n} \ket{n} ,  \end{equation} 
is annihilated by the local projectors defined in Eq. \ref{eq:two_qubit_projector}: 
\begin{equation} 
\hat{P}^{(g)} = \ket{11}\bra{11} + \ket{\psi^{(g)}} \bra{\psi^{(g)}}, \quad \ket{\psi^{(g)} } = \frac{1}{\sqrt{1 + |g|^2}} (g^*\ket{01} - \ket{10}), \quad \ket{\psi^{(g)}_\perp} = \frac{1}{\sqrt{1 + |g|^2}} (\ket{01} + g\ket{10})
\end{equation}
and is therefore a zero-energy ground state of the PVBS reference Hamiltonian $\hat{\mathbb{H}}_+^{(g)} = \sum_{n} \hat{P}_{n,n+1}^{(g)}$. Recall that, by construction, $\hat{P}_{n,n+1}^{(g)}\ket{\psi^{(g)}_\perp} = 0$. Using this:
\begin{eqnarray}
    \hat{P}^{(g)}_{n,n+1} \ket{\mathcal{B}^{(g)}} &=& \hat{P}^{(g)}_{n,n+1} \frac{1}{\mathcal{N}^{(g)}} \sum_{n'=0}^{N-1} g^{-n'} \ket{n'} \notag = \frac{1}{\mathcal{N}^{(g)}}\hat{P}^{(g)}_{n,n+1} (g^{-n}\ket{n} + g^{-n-1}\ket{n+1}) \\ &=& \frac{1}{\mathcal{N}^{(g)}} \frac{\sqrt{1+|g|^2}}{g^{-n-1}} \hat{P}^{(g)}_{n,n+1} (\ket{0}^{\otimes n-1} \otimes\ket{\psi^{(g)}_\perp}_{n,n+1} \otimes \ket{0}^{\otimes N-n-1}) = 0.
\end{eqnarray}
In the second equality we have used the fact that $\hat{P}^{(g)}_{n,n+1} \ket{n'} = 0$ if $n' \ne n, n+1$ and in the last equality we used $\ket{0}^{\otimes n-1}\otimes\ket{\psi^{(g)}_\perp}_{n,n+1} \otimes \ket{0}^{\otimes N-n-1} = \frac{g^{-n-1}}{\sqrt{1+|g|^2}} [\ket{n+1} + g\ket{n}]$. Therefore, the state $\ket{\mathcal{B}^{(g)}}$ is annihilated by $\hat{P}_{n, n+1}^{(g)}$ for any $n$. As such, it is a zero energy eigenstate of any Hamiltonian of the form $\hat{\mathbb{H}} = \sum_{n = 0}^{N-2} \hat{H}_{n, n+1}^{(g)}$, where $\hat{H}_{n, n+1}^{(g)} = \hat{P}^{(g)}_{n,n+1}\hat{h}_{n,n+1}\hat{P}^{(g)}_{n,n+1}$, and of any circuit where the local unitaries have the form $\hat{U}_{n,n+1} = \text{exp}[i\hat{H}^{(g)}_{n,n+1}]$.

\subsection{Critical properties of the boundary QMBS}

To quantify the degree and position of the localisation, we introduce an ``imbalance'' operator, defined as: \begin{equation} \hat{\mathcal{I}} = \sum_{i = 0}^{N/2-1} \hat{n}_i - \sum_{i = N/2}^{N-1} \hat{n}_i, \qquad \hat{n}_i = \frac{\hat{I}_i - \hat{Z}_i}{2}, \end{equation} where $\hat{n}_i$ is the occupation operator at site $i$. The expectation value of the imbalance with respect to the state $\ket{\mathcal{B} ^{(g)}} $ can be computed exactly in terms of $g$ and $N$: $\langle  \mathcal{B} ^{(g)}  |\; \hat{\mathcal{I}} \; |\mathcal{B} ^{(g)} \rangle(N)  = \frac{1 - |g^{-1}|^N}{1 + |g^{-1}|^N}$. Let us introduce $q = g^{-1}$ in order to simplify the calculations:
\begin{eqnarray}
    \notag \bra{\mathcal{B}^{(g)}} \; \hat{\mathcal{I}} \; \ket{\mathcal{B}^{(g)}} &=& \frac{1}{\mathcal{N}^2} (\sum_{n = 0}^{N/2 - 1}|q|^{2n} - \sum_{n = N/2}^{N - 1}|q|^{2n}) 
    = \frac{1}{\mathcal{N}^2} (|q|^2 \frac{1-|q|^N}{1-|q|^{2}} - |q|^{N+2} \frac{1-|q|^N}{1-|q|^{2}}) \\ \notag
    &=& \frac{1}{|q|^2} \frac{1- |q|^2}{1 - |q|^{2N}}\frac{1-|q|^N}{1 - |q|^2} (|q|^2 - |q|^{N+2})  = \frac{(1-|q|^N)^2}{1-|q|^{2N}} = \frac{1 - |g^{-1}|^N}{1 + |g^{-1}|^N}
\end{eqnarray}

\begin{figure}[ht]
    \centering
    \begin{minipage}{0.48\textwidth}
        \centering
        \includegraphics[width=\linewidth]{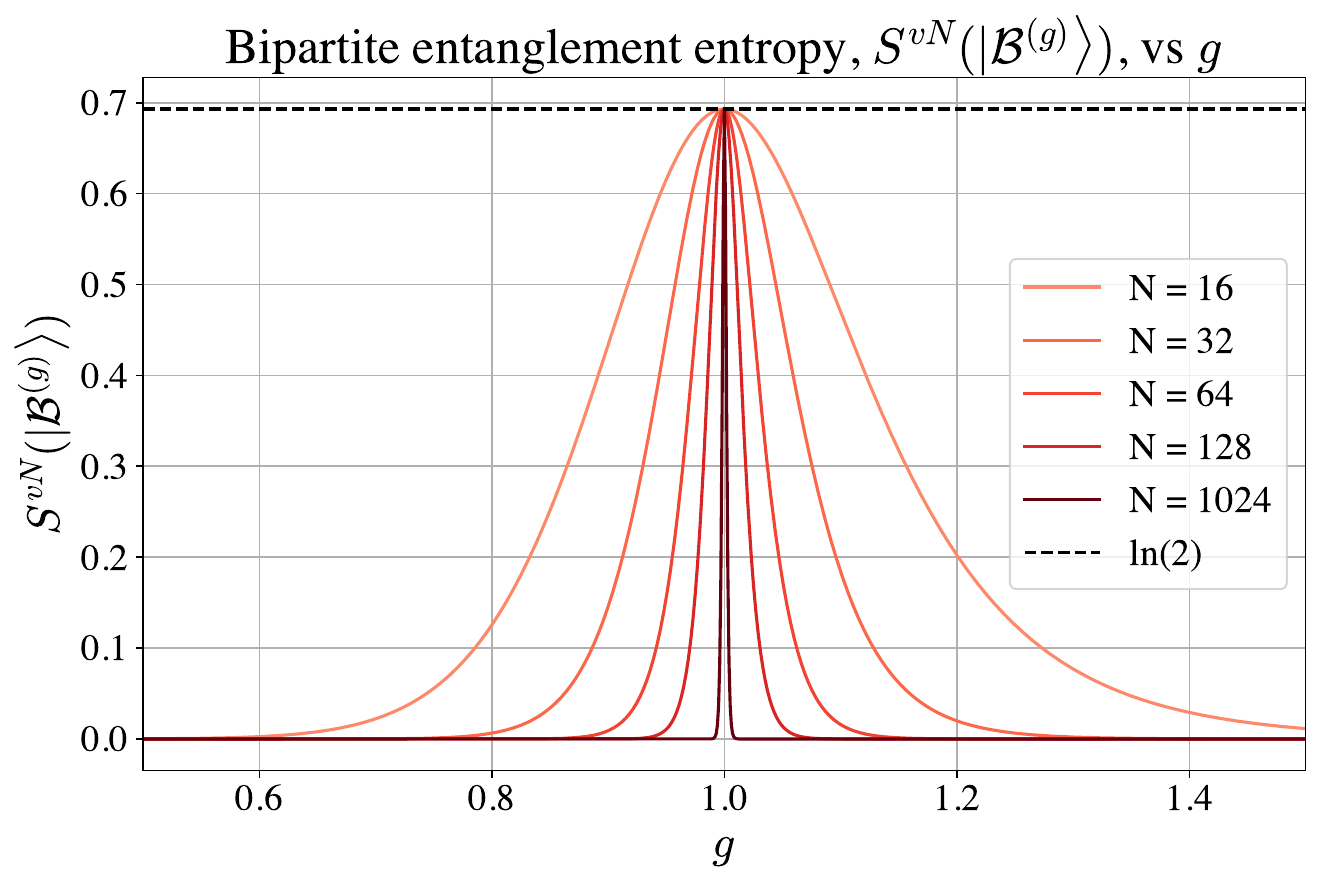}
    \end{minipage}
    \hfill
    \begin{minipage}{0.48\textwidth}
        \centering
        \includegraphics[width=\linewidth]{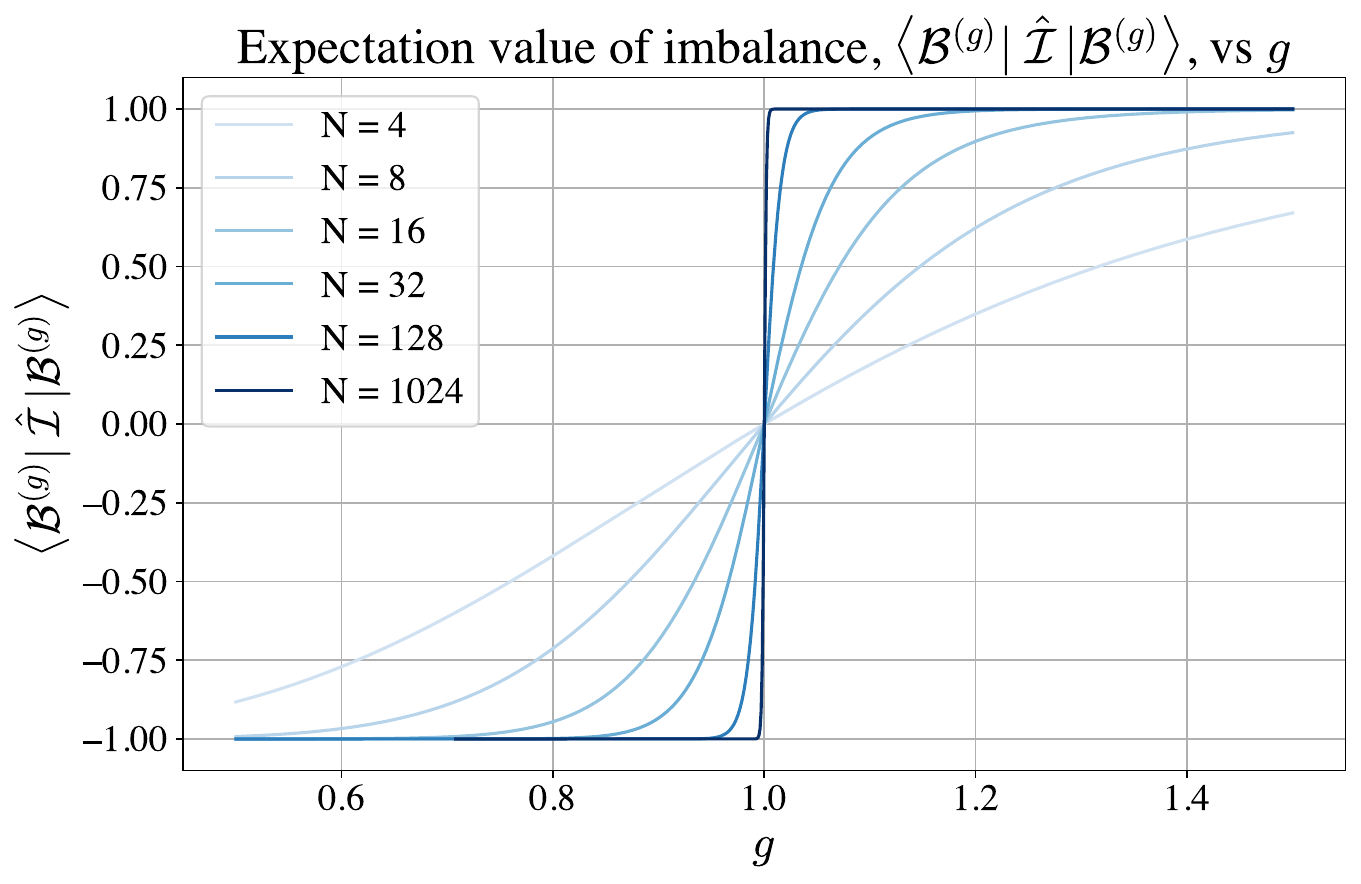}
    \end{minipage}
    \caption{LEFT: Bipartite entanglement entropy, $S^{vN}(\ket{\mathcal{B}^{(g)}})$, as a function of $g$ for different system sizes $N$. Again we observe a sharper and sharper peak as $N$ is increased, displaying a drastic change as $g$ is tuned across the critical point $g_c = 1$. RIGHT: Expectation value of imbalance, $\langle  \mathcal{B} ^{(g)}  |\; \hat{\mathcal{I}} \; |\mathcal{B} ^{(g)} \rangle(N)$, as a function of $g$, plotted for various system sizes, $N = 4, \; 8, \; 16, \; 32, \; 128, \; 1024$. As can be predicted analytically, the form of the curve is approaching the a step function as $N$ is increased. This change in the imbalance directly shows the localisation properties of $|\mathcal{B} ^{(g)} \rangle $.}
    \label{fig: edge_scar_properties}
\end{figure}

This expectation value is plotted in Fig. \ref{fig: edge_scar_properties}, showing a sharper transition at $g =1$, as the system size $N$ is increased. In the limit $N \rightarrow \infty$, this becomes a step function $\lim_{N \rightarrow \infty }\langle  \mathcal{B} ^{(g)}  |\; \hat{\mathcal{I}} \; |\mathcal{B} ^{(g)} \rangle (N) =  2\;\Theta(g) - 1$, where $\Theta (g)$ is the Heaviside step function. This showcases critical behaviour in the properties of the embedded QMBS state. For very large values of $N$, $\ket{\mathcal{B}^{(g)}}$ will be exponentially localised at either boundary, except at $g = 1$, where it becomes an equal superposition of all single excitation states, i.e. the $N$-qubit $W$ state, delocalised over the whole chain. 

We can also characterize this transition via the localization length $\xi _{\text{loc}} (g)$, and its divergence around the critical point. First, we note that due to the particularly simple structure of $| \mathcal{B} ^{(g)} \rangle$, constrained in the single excitation subspace, the localization length scale is the same as the correlation length. Using the exact form of the ground state: $|\mathcal{B}^{(g)}\rangle = \frac{1}{\mathcal{N}} \sum_{n=0}^{N-1} g^{-n} |0...0 1_n 0...0\rangle$, and rewriting $g^{-n} = e^{-\ln(g)n}$, we can immediately identify the localisation length as $e^{-n/\xi_{\text{loc}}} = e^{-\ln(g)n} \rightarrow \xi_{\text{loc}} = \frac{1}{|\ln (g)|}$. Evidently the localization length diverges at $g_c = 1$. By expanding around the latter, $\ln(g) \approx g - 1 = g - g_c \implies \xi_{\text{loc}} = \frac{1}{|g - g_c|}$, we see that the critical exponent associated with the localisation (and consequently correlation) length is $\nu = 1$.

Finally, we note that this critical behaviour can also be seen in the bipartite entanglement entropy of the boundary QMBS. We will again use $q = g^{-1}$ for brevity. The $\ket{\mathcal{B} ^{(g)}}$ state can be split across the bipartition as:
\begin{eqnarray}
    \ket{\mathcal{B}^{(g)}} &=& \frac{1}{\mathcal{N}} \sum_{n = 0}^{N-1}q^n \hat{X}_n\ket{0}^{\otimes N} = \frac{1}{\mathcal{N}}[(\sum_{n = 0}^{N-1} q^n \hat{X}_n\ket{0}^{\otimes N/2}) \otimes \ket{0}^{\otimes N/2} + \ket{0}^{\otimes N/2} \otimes (\sum_{n = 0}^{N-1} q^{N/2+n} \hat{X}_n\ket{0}^{\otimes N/2}) ].
\end{eqnarray}
We now introduce the two states:
\begin{equation}
    \ket{\chi^{(q)}} = \sum_{n = 0}^{N-1} q^n \hat{X}_n\ket{0}^{\otimes N/2}, \quad \ket{\varnothing} = \ket{0}^{\otimes [N/2]}.
\end{equation}
Note that $\ket{\chi^{(q)}}$ is not normalized, $\langle \chi^{(q)}\ket{\chi^{(q)}} = |q|^2 \frac{1 - |q|^N}{1 - |q|^2}$. Let $\ket{\tilde{\chi} ^{(q)}} = \alpha \ket{\chi ^{(q)}} \, (\text{with }\alpha = \frac{1}{|q|} \sqrt{\frac{1- |q|^2}{1-|q|^N}}),$ be the corresponding normalized state. We refer to the first $N/2$ sites as subsystem $A$, and the rest of the chain as subsystem $B$. Now:
\begin{eqnarray}
    \ket{\mathcal{B}^{(g)}} &=& \frac{1}{\mathcal{N}}[\ket{\chi ^{(q)}}_A \otimes \ket{\varnothing}_B + q^{N/2} \ket{\varnothing}_A \otimes \ket{\chi ^{(q)}}_B] \implies \notag \\ \hat{\rho}^{(g)} &=& \ket{\mathcal{B}^{(g)}} \bra{\mathcal{B}^{(g)}} = \frac{1}{\mathcal{N}^2} [\ket{\chi ^{(q)}}\bra{\chi ^{(q)}}_A\otimes \ket{\varnothing}\bra{\varnothing}_B + |q| ^{N}\ket{\varnothing}\bra{\varnothing}_A \otimes \ket{\chi ^{(q)}}\bra{\chi ^{(q)}}_B \notag \\&&+ (q^*)^{N/2} \ket{\chi ^{(g)}}\bra{\varnothing}_A \otimes \ket{\varnothing}\bra{\chi ^{(q)}}_B + q^{N/2} \ket{\varnothing}\bra{\chi ^{(q)}}_A \otimes  \ket{\chi ^{(q)}}\bra{\varnothing}_B ]. \label{eq: density_mat_psi_g}
\end{eqnarray}
Since $\bra{\chi ^{(q)}} \varnothing \rangle = 0$, it is evident that, when tracing over subsystem $B$, the last two terms in Eq.~\eqref{eq: density_mat_psi_g} will vanish. Inserting the normalized state $\ket{\tilde{\chi}^{(q)}}$ into the obtained expression, we have:

\begin{eqnarray}
    \hat{\rho}^{(g)} &=& \frac{1}{\mathcal{N}^2}[\frac{1}{\alpha^2} \ket{\tilde{\chi}^{(q)}} \bra{\tilde{\chi}^{(q)}}_A \otimes \ket{\varnothing} \bra{\varnothing}_B +  \frac{|q|^N}{\alpha^2} \ket{\varnothing} \bra{\varnothing}_A \otimes \ket{\tilde{\chi}^{(q)}} \bra{\tilde{\chi}^{(q)}}_B + ...] \implies \notag \\
    \hat{\rho}_A^{(g)}&=& \text{Tr}_B[\hat{\rho}^{(g)}] =  \frac{1}{\mathcal{N}^2 \alpha^2}[  \ket{\tilde{\chi}^{(q)}} \bra{\tilde{\chi}^{(q)}}_A + |q|^N \ket{\varnothing} \bra{\varnothing}_A ] = p_1 \ket{\tilde{\chi}^{(q)}} \bra{\tilde{\chi}^{(q)}}_A + p_2 \ket{\varnothing} \bra{\varnothing}_A \implies \notag \\
     S^{vN}(\ket{\mathcal{B}^{(g)}}) &=& -p_1 \ln p_1 -p_2 \ln p_2, \; \text{where}: \quad p_1 = \frac{1}{1+|g^{-1}|^N}, \quad p_2 = \frac{|g^{-1}|^N}{1+|g^{-1}|^N}.
\end{eqnarray}

From this general expression we can also read of $S^{vN}(\ket{\mathcal{B}^{(|g| = 1)}}) = \ln 2$. It is also evident that in the limit $N \rightarrow \infty$, we have either $p_1 \rightarrow 1, \; p_2 \rightarrow 0$ (in the case $|g| < 1$) or $p_1 \rightarrow 0, \; p_2 \rightarrow 1$ (for $|g| > 1$). In both cases we have $S^{vN}(\ket{\Psi ^{(|g| \neq 1)}}) \rightarrow 0$, which can also be seen clearly in Fig. \ref{fig: edge_scar_properties}.

\section{Gapless excitations of the PVBS reference Hamiltonian}
\label{app: aqmbs_concrete_model}

In this section, we derive the form of the AQMBS presented in Eqs. \ref{eq: aqmbs_amplitudes} and \ref{eq: aqmbs_energies}. First we recall the form of the PVBS reference Hamiltonian:
\begin{equation} 
\hat{\mathbb{H}}^{(g)}_+ = \sum_{n=0}^{N-2} \hat{P}^{(g)}_{n,n+1} . 
\end{equation} 
This Hamiltonian is positive semi-definite and also frustration-free since the product state $\ket{\mathcal{V}}$ and the boundary state $\ket{\mathcal{B}^{(g)}}$ are zero-energy eigenstates of all the individual projectors. It was shown in Ref. \cite{Bra-15} that it is also gapless at $|g|= 1$, hosting a set of low-lying energy states $\ket{\mathcal{A}_k}, \, k \in \mathbb{Z}_+$, satisfying $\hat{\mathbb{H}}_+ \ket{\mathcal{A}_k} \stackrel{N\to\infty}{\longrightarrow} 0$. Therefore, as shown in \ref{app: general_proof_aqmbs}, these states will form asymptotic QMBS for $\hat{\mathbb{H}}$.


In order to obtain the exact form of the $\ket{\mathcal{A}_k}$ states, we note that $\hat{\mathbb{H}}_+$ commutes with the total magnetisation: 

\begin{equation}
    \big [ \hat{\mathbb{H}}_+, \,  \sum_{n = 0}^{N-1} \hat{Z}_n \big ],
\end{equation}
since each individual projector $\hat{P}^{(g)}_{n, \, n+1}$ cannot change the total magnetisation. Therefore $\hat{\mathbb{H}}_+$ will be block diagonal in the computational basis. In the sub-block in the single excitation subspace, we obtain a tight-binding model with energy off-sets at the boundary sites. Note that we will write the latter in a different basis, where $\ket{n}$ corresponds to the state $\hat{X}_n \ket{0}^{\otimes N}$ of the original computational basis, which we used for the full Hamiltonian. In the single excitation subspace, we then have:

\begin{eqnarray}
    \hat{\mathbb{H}}_+^{\text{SE}} = \epsilon_0 \sum_{n = 1}^{N-2}\ket{n}\bra{n} +\sum_{n = 0}^{N-2}(t\ket{n}\bra{n+1} + t^*\ket{n + 1}\bra{n}) \notag + \Delta_0 \ket{0}\bra{0} + \Delta_{N-1} \ket{N-1}\bra{N-1},
\end{eqnarray}
where $\epsilon_0 = 1$ is the on-site energy, $t = -\frac{g}{1+ |g|^2}$ is the hopping term, and $\Delta_0 = \frac{1}{1+ |g|^2}, \Delta_{N-1} = \frac{|g|^2}{1+ |g|^2}$ are the energy off-sets at the left/right hand-side boundaries. We note that the ground state energy gap, $\Delta \equiv \varepsilon_1 - \varepsilon_0$, for $\hat{\mathbb{H}}_+$ is closing at $|g| = 1$, as shown in Fig. \ref{fig: gap_v_g}.

\begin{figure}
    \centering
    \includegraphics[width=0.5 \linewidth]{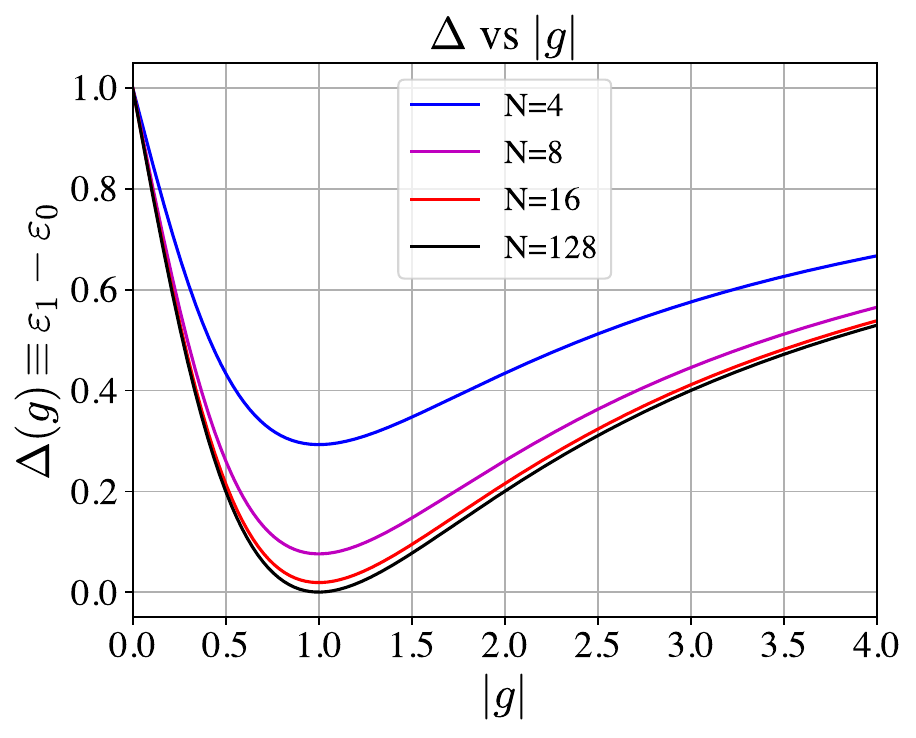}
    \caption{Gap between the first excited state and the ground state of $\hat{\mathbb{H}}_{+}^{(g)}$ as a function of $g$ in the range $g \in [0, 4]$, for system sizes $N = 4, \; 8, \; 16, \; 256$. The gap is closing at $|g| = 1.0$, due to the gapless excitation $\ket{\mathcal{A}_1}$.}
    \label{fig: gap_v_g}
\end{figure}

At $g = e^{i\phi}$, we can also find the eigenstates and corresponding eigenvalues of $\hat{\mathbb{H}}_+^{\text{SE}}$ in a closed-form expression. First we note that for $g = 1$, the eigenstates and eigenvalues of $\hat{\mathbb{H}}_+^{\text{SE}, \,(g= 1)}$ have the form \cite{Doo-20}:  $\ket{\mathcal{A}_k'} = \frac{1}{\mathcal{N}_k} \sum_{n=0}^{N-1} \cos\left[\left(N-n-\frac{1}{2}\right)\frac{k \pi}{N} \right] \ket{n}, \quad\varepsilon_k' = 1 - \cos\frac{k\pi}{N}$. Now, for $g = e^{i\phi}$, we notice that the resulting Hamiltonian satisfies $\hat{U} \,\hat{\mathbb{H}}_+^{\text{SE}, \,(g= \text{exp}[i\phi])} \hat{U}^{\dagger} = \hat{\mathbb{H}}_+^{\text{SE}, \,(g= 1)}$ with $\hat{U} = \text{exp}[i\phi \sum_{n = 0}^{N-1}n\ket{n}\bra{n}]$. Hence, $\hat{\mathbb{H}}_+^{\text{SE}, \,(g= \text{exp}[i\phi])}$ and $\hat{\mathbb{H}}_+^{\text{SE}, \,(g= 1)}$ will have the same eigenvalues, and the eigenstates of the former will be $\ket{\mathcal{A}_k} = \hat{U} \ket{\mathcal{A}_k'}$. We can thus write the eigenstates and eigenvalues of $\hat{\mathbb{H}}_+^{\text{SE}, \,(g= \text{exp}[i\phi])}$ as:
\begin{eqnarray}
    \ket{\mathcal{A}_k} &=& \frac{1}{\mathcal{N}_k} \sum_{n=0}^{N-1} e^{in\phi} \cos\left[\left(N-n-\frac{1}{2}\right)\frac{k \pi}{N} \right] \ket{n}
    \\
    \varepsilon_k &=& 1 - \cos\frac{k\pi}{N},
\end{eqnarray}
where $k \in \{ 0, 1,..., N - 1\}$ and $\mathcal{N}_k$ is a normalisation factor. The lowest energy eigenstate ($k=0$) in this subspace corresponds to the boundary state $\ket{\mathcal{A}_0} =  \ket{\mathcal{B}^{(|g|= 1)}}$, which is a ground state of $\hat{\mathbb{H}}_+$ with $\varepsilon_0 = 0$. Importantly, if $k > 0$ is held fixed as $N$ increases, the energy gap $\varepsilon_k$ closes, corresponding to gapless excitations of $\hat{\mathbb{H}}_+^{(|g|=1)}$ in the thermodynamic limit. Hence, $\ket{\mathcal{A}_k}$ with fixed $k > 0$, will form AQMBS of $\hat{\mathbb{H}}$.

\section{State Preparation Details}
\label{app: state_prep}

We presented the AQMBS state preparation circuit structure for $N = 8$ in the main text, Fig. \ref{fig:circuits_and_initial_states}(c). For $N = 2^k, \, k \in \mathbb{Z}_+, \, k > 3$, we can recursively construct the state preparation circuits from the $N = 8$ case. For each subsequent $k$, we perform $X$ on the first qubit, and append the entangling gates from the $k-1$ state preparation construction to the first $2^{k-1}$ qubits, as well as the last $2^{k-1}$ qubits separately, adjusting the rotation angles appropriately. Hence, doubling the system size requires only one additional layer of entangling gates, leading to logarithmic scaling of the state preparation process. This can already be seen in the $N = 8$ case: The $U^{\text{sp}}_{0,2}, U^{\text{sp}}_{0,1}, U^{\text{sp}}_{2,3}$ gates are precisely the entangling operations required for $N = 4$, plus the $X$ gate at the start.  
In the case where $N \ne 2^k$, we can use an adapted version of the state preparation circuit for $N' = 2^{\lceil \log _2 N\rceil}$. This will result in a circuit depth $\lceil \log _2 N\rceil$, using $N-1$ entangling gates. 

We now turn to the concrete protocols used in the performed experiments. 
For clarity and convenience, we will write the $2$-local gates used for the state preparation circuits in terms of the rotation angle within the $\{\ket{01}, \, \ket{10} \}$ subspace:

\begin{equation}
    {U}_{\text{q1, q2}}^{\text{sp}} (\gamma_i) = \begin{bmatrix}
1 & 0 & 0 & 0 \\
0 & \cos (\gamma_i) & -\sin (\gamma_i) & 0 \\
0 & \sin (\gamma_i) & \cos (\gamma_i) & 0 \\
0 & 0 & 0 & 1
\end{bmatrix},
\end{equation}
where the q1, q2 indices denote the labels of the two qubits in the register acted on by $U^{\text{sp}}_{\text{q1, q2}} (\gamma_i)$. Note that $\gamma_i$ is distinct from, but related to the parameters depicted in Fig. \ref{fig:circuits_and_initial_states} ($\{\eta_j, \vec{\theta}_k  \}$).

Using this convention, the state preparation protocols in our work can be written out concisely, displayed below. Note that we separate the gates from different layers using ``;'' for clarity. In practice, the $X$ gate at the start of the protocol and the $Z_i$ gates at the end can be absorbed into the decomposition of $\hat{U}^{\text{sp}}_{\text{q1, q2}}$ shown in Fig. \ref{fig:circuits_and_initial_states}

\begin{itemize}
    \item $N = 8$: $X_0, \, U^{\text{sp}}_{0, 4}(\frac{\pi}{2}); \quad U^{\text{sp}}_{0, 2} (0.429), \, U^{\text{sp}}_{4,6}(1.141); \quad U^{\text{sp}}_{0,1}(0.703), \, U^{\text{sp}}_{2,3}(0.337), \, U^{\text{sp}}_{4,5}(1.233), \, U^{\text{sp}}_{6,7} (0.867)$, $Z_i$ for $i \in \mathbb{Z}, \, 4 \le i \le 7$
    \item $N = 12$: $X_0, \, U^{\text{sp}}_{0, 6}(\frac{\pi}{2}); \quad U^{\text{sp}}_{0, 3} (0.436), \, U^{\text{sp}}_{6,9}(1.135); \quad U^{\text{sp}}_{0,2}(0.523), \, U^{\text{sp}}_{3,5}(0.180), \, U^{\text{sp}}_{6,8}(0.985), \, U^{\text{sp}}_{9,11} (0.683);$ $U^{\text{sp}}_{0,1}(0.750), \, U^{\text{sp}}_{3,4}(0.561), \, U^{\text{sp}}_{6,7}(1.242), \, U^{\text{sp}}_{9,10} (0.861)$, $Z_i$ for $i \in \mathbb{Z}, \, 6 \le i \le 11$
    \item $N = 16$: $X_0, \, U^{\text{sp}}_{0, 8}(\frac{\pi}{2}); \quad U^{\text{sp}}_{0, 4} (0.438), \, U^{\text{sp}}_{8,12}(1.133); \quad U^{\text{sp}}_{0,2}(0.704), \, U^{\text{sp}}_{4,6}(0.370), \, U^{\text{sp}}_{8,10}(1.201), \, U^{\text{sp}}_{12,14} (0.867);$ $ U^{\text{sp}}_{0,1}(0.766),$ $ \, U^{\text{sp}}_{2,3}(0.720), \, U^{\text{sp}}_{4,5}(0.639), \, U^{\text{sp}}_{6,7} (0.326),  U^{\text{sp}}_{8,9}(1.245), \, U^{\text{sp}}_{10,11}(0.931), \, U^{\text{sp}}_{12,13} (0.851), \, U^{\text{sp}}_{14,15}(0.805)$, $Z_i$ for $i \in \mathbb{Z}, \, 8 \le i \le 15$
    \item $N = 20$: $X_0, \, U^{\text{sp}}_{0, 10}(\frac{\pi}{2}); \quad U^{\text{sp}}_{0, 5} (0.439), \, U^{\text{sp}}_{10, 15}(1.132); \quad U^{\text{sp}}_{0,3}(0.600), \, U^{\text{sp}}_{5,8}(0.262), \, U^{\text{sp}}_{10,13}(1.071), \, U^{\text{sp}}_{15,18} (0.759); $ $ U^{\text{sp}}_{0,2}(0.586), \, U^{\text{sp}}_{3,4}(0.728), \, U^{\text{sp}}_{5,7}(0.430), \, U^{\text{sp}}_{8,9} (0.324),  U^{\text{sp}}_{10,12}(0.999), \, U^{\text{sp}}_{13,14}(0.893), \, U^{\text{sp}}_{15,17} (0.680), \, U^{\text{sp}}_{18,19}(0.798); $ $\, U^{\text{sp}}_{0,1}(0.773), \, U^{\text{sp}}_{5,6}(0.677), \, U^{\text{sp}}_{10,11} (1.247), \, U^{\text{sp}}_{15,16}(0.843)$, $Z_i$ for $i \in \mathbb{Z}, \, 10 \le i \le 19$
\end{itemize}

\section{Symmetry of the Model at $g = \pm 1$}
\label{app:additional_symmetry_g1}

One caveat to note is that at $g = \pm 1$ our models derived from the PVBS reference Hamiltonian have an additional symmetry described by the operator $\hat{R}$ (for $g=-1$) or $\hat{\Pi} \hat{R}$ (for $g=1$), with:
\begin{eqnarray}
    \hat{R} : |i_0, \; i_1, ..., i_{N-2}, i_{N-1} \rangle \rightarrow |i_{N-1}, \; i_{N-2}, ..., i_{1}, i_{0}  \rangle, \quad
    \hat{\Pi} = e^{i\pi \sum_j \ket{1} \bra{1}_j}, \notag
\end{eqnarray}
where $i_m \in \{0, 1 \}$ denotes the computational basis state at site $m$. Here, $\hat{R}$ is the standard reflection operator, flipping the chain through its midpoint. The operator $\hat{\Pi}$ acts exactly as the identity for computational basis states with an even number of qubits in the $\ket{1}$ state, while an odd number of qubits in $\ket{1}$ leads to an extra $-1$ factor. 

To show these symmetries at $g = \pm 1$, recall that the Hamiltonian derived from the PVBS reference Hamiltonian has the form $\hat{\mathbb{H}}^{(g)} = \sum_{n=0}^{N-2} \hat{H}_{n,n+1}^{(g)} (h_{n+1})$ with local interactions $\hat{H}_{n,n+1}^{(g)} = \hat{P}^{(g)}_{n, n+1} \hat{h}_{n, n+1}\hat{P}^{(g)}_{n, n+1}$. We can write $\hat{H}_{n,n+1}^{(g)}$ as:
\[
\hat{H}_{n,n+1}^{(g)} = a\ket{\psi ^{(g)}}\bra{\psi ^{(g)}} + b\ket{11}\bra{11} + c\ket{\psi ^{(g)}}\bra{11} + c^*\ket{11}\bra{\psi ^{(g)}}, \quad |\psi^{(g)}\rangle = \frac{1}{\sqrt{1+|g|^2}} \big(g^*|01\rangle - |10\rangle \big),
\]
where we have omitted the possible site dependencies on $a, \, b$ and $c$ for brevity.

We note that the reflection operator $\hat{R}$ swaps the qubits at sites $j$ and $N - j-1$. Writing out the full Hamiltonian:
\begin{equation}
    \hat{\mathbb{H}} = \sum_{j = 0}^{N-2} \{ a|\psi^{(g)}\rangle \bra{\psi^{(g)}} + b \ket{11}\bra{11} \nonumber + c \ket{\psi^{(g)}} \bra{11} +  c^* \ket{11} \bra{\psi^{(g)}} \}_{j, j+1}, 
\end{equation}
and acting on it with the reflection operator we obtain:
\begin{eqnarray}
    \hat{R} \, \hat{\mathbb{H}} \, \hat{R}^{\dagger} &=& \sum_{j = 0}^{N-2} \{a |\psi^{(g)}\rangle \bra{\psi^{(g)}}  + b \ket{11}\bra{11} \nonumber + c \ket{\psi^{(g)}} \bra{11} +  c^* \ket{11} \bra{\psi^{(g)}} \}_{N-j-1,\, N-j-2} \\ &=&
    \sum_{j = 0}^{N-2} \{ a  \hat{S} |\psi^{(g)}\rangle \bra{\psi^{(g)}}\hat{S}^{\dagger}  + b \hat{S}\ket{11}\bra{11} \hat{S}^{\dagger} \nonumber + c \hat{S}\ket{\psi^{(g)}} \bra{11}\hat{S}^{\dagger} +  c^* \hat{S}\ket{11} \bra{\psi^{(g)}} \hat{S}^{\dagger} \}_{N-j-2,\, N-j-1},
\end{eqnarray}
where $\hat{S}$ denotes the $2$-qubit swap operator. We also note that:
\begin{eqnarray}
    \hat{S}\ket{\psi ^{(g)}} = \frac{1}{\sqrt{1 + |g|^2}} (g^*\ket{10} - \ket{01}) =
    -g^*\frac{1}{\sqrt{1 + |g|^2}} ((g^*)^{-1}\ket{01} - \ket{10}) = -g^*\sqrt{\frac{1 + |g^{-1}|^2}{1 + |g|^2}} \ket{\psi ^{(g^{-1})}}, 
\end{eqnarray}
and that $\hat{S}\ket{00} = \ket{00}$, $\hat{S}\ket{11} = \ket{11}$. Using these and relabelling $N - j - 2 = j'$:
\begin{eqnarray}
    \hat{R} \, \hat{\mathbb{H}} \, \hat{R}^{\dagger} &=& \sum_{j' = 0}^{N-2} \{ a  |\psi^{(g^{-1})}\rangle \bra{\psi^{(g^{-1})}} + b \ket{11}\bra{11} \nonumber \\ &&- g^*c \sqrt{\frac{1 + |g^{-1}|^2}{1 + |g|^2}} \ket{\psi^{(g^{-1})}} \bra{11} -  gc^* \sqrt{\frac{1 + |g^{-1}|^2}{1 + |g|^2}}\ket{11} \bra{\psi^{(g^{-1})}}  \}_{j',\, j' + 1}.
\end{eqnarray}
In the case that $g$ has unit modulus ($g = e^{i\phi}$), it further reduces to:
\begin{eqnarray}
    \hat{R} \, \hat{\mathbb{H}} \, \hat{R}^{\dagger} = \sum_{j' = 0}^{N-2} \{a  |\psi^{(g^{-1})}\rangle \bra{\psi^{(g^{-1})}} + b \ket{11}\bra{11} \nonumber  - e^{-i\phi}c \ket{\psi^{(g^{-1})}} \bra{11} - e^{i\phi} c^* \ket{11} \bra{\psi^{(g^{-1})}}  \}_{j',\, j' + 1}.
\end{eqnarray}
For $\phi = \pi$ we have $g = g^{-1} = -1$ and thus obtain:
\begin{eqnarray}
    \hat{R} \, \hat{\mathbb{H}} \, \hat{R}^{\dagger} = \sum_{j' = 0}^{N-2} \{a  |\psi^{(g^{-1})}\rangle \bra{\psi^{(g^{-1})}} + b \ket{11}\bra{11} \nonumber  + c  \ket{\psi^{(g)}} \bra{11} + c^* \ket{11} \bra{\psi^{(g)}}  \}_{j',\, j' + 1} = \hat{\mathbb{H}},
\end{eqnarray}
showing that $\hat{\mathbb{H}}$ is invariant under a reflection about the midpoint for $g = -1$. However, in the case that $\phi = 0$, we have $g = g^{-1} = 1$, leading to:
\begin{eqnarray}
    \hat{R} \, \hat{\mathbb{H}} \, \hat{R}^{\dagger} = \sum_{j' = 0}^{N-2} \{a  |\psi^{(g^{-1})}\rangle \bra{\psi^{(g^{-1})}} + b \ket{11}\bra{11} \nonumber  - c  \ket{\psi^{(g)}} \bra{11} - c^* \ket{11} \bra{\psi^{(g)}}  \}_{j',\, j' + 1}.
\end{eqnarray}
At $g = 1$ the system is therefore not invariant under the pure reflection. We can introduce a parity transformation $\hat{\Pi} = e^{i\pi \sum_j \ket{1} \bra{1}_j}$, which acts on the relevant states as:
\begin{eqnarray}
    \hat{\Pi} \ket{11} \bra{11}_{j, j+1} \hat{\Pi}^{\dagger} = \ket{11} \bra{11}_{j, j+1} ,&& \quad
    \hat{\Pi} \ket{\psi^{(g)}} \bra{\psi^{(g)}}_{j, j+1} \hat{\Pi}^{\dagger} = \ket{\psi^{(g)}} \bra{\psi^{(g)}}_{j, j+1} \nonumber \\
    \hat{\Pi} \ket{\psi^{(g)}} \bra{11}_{j, j+1} \hat{\Pi}^{\dagger} = -\ket{\psi^{(g)}} \bra{11}_{j, j+1} ,&& \quad
    \hat{\Pi} \ket{11} \bra{\psi^{(g)}}_{j, j+1} \hat{\Pi}^{\dagger} = -\ket{11} \bra{\psi^{(g)}}_{j, j+1}.
\end{eqnarray}
Now it is evident that at $g = 1$ we have:
\begin{equation} \label{eq: modified_reflection_symmetry}
    (\hat{\Pi} \hat{R}) \, \hat{\mathbb{H}} \,(\hat{\Pi} \hat{R})^{\dagger} = \hat{\mathbb{H}},
\end{equation}
showing that at $g = 1$, the Hamiltonian is invariant under the combined action of the parity and reflection transformations. Using the same arguments, one can show that the corresponding circuit model $\hat{\mathbb{U}}$ has the same symmetries at $g = \pm 1$, as it consists of the exponentiated local generators. 

\section{Level Spacing Statistics and Bipartite entanglement of eigenstates} \label{app: level_spacing}

As shown in SM, Sec. \ref{app:additional_symmetry_g1}, the Hamiltonian models ($\hat{\mathbb{H}}$) and circuit models ($\hat{\mathbb{U}}$) deformed from the PVBS reference Hamiltonian have an additional symmetry at $g = \pm 1$. More specifically, at $g = 1$, the eigenstates of $\hat{\mathbb{U}}$ (and $\hat{\mathbb{H}}$) will be either even, $\hat{\Pi}\hat{R}\ket{\varphi_\alpha} = \ket{\varphi_\alpha}$, or odd, $\hat{\Pi}\hat{R}\ket{\varphi_\alpha} = -\ket{\varphi_\alpha}$. Therefore, when performing level spacing statistics analysis, we need to resolve this symmetry. The level spacing statistics for two cases of local generators is displayed in Fig. \ref{fig: level_spacing_parity_symmetry}. From now on, we will write the local generators in terms of Pauli matrices acting in the $\{\ket{\psi ^{(g)}}, \, \ket{11} \}$ subspace: $\tilde{\sigma}_x = \ket{\psi^{(g)}}\bra{11} + \ket{11}\bra{\psi^{(g)}}, \tilde{\sigma}_y=i\ket{\psi^{(g)}}\bra{11} -i \ket{11}\bra{\psi^{(g)}}, \tilde{\sigma}_z= \ket{\psi^{(g)}}\bra{\psi^{(g)}} - \ket{11}\bra{11}$, for brevity. Both symmetry subsectors exhibit level spacing statistics in accordance with the Gaussian unitary ensemble, for both sets of probed generators, showing that the model is not integrable. In addition to the level spacing analysis, we also compute the half-system bipartite entanglement entropy for each eigenstate $\ket{\varphi_\alpha}$ of the Floquet unitary $\hat{\mathbb{U}}$. The results are shown in Fig. \ref{fig:bipartite_entanglement}.

\begin{figure}[h!]
    \centering
    \includegraphics[width=0.32\linewidth]{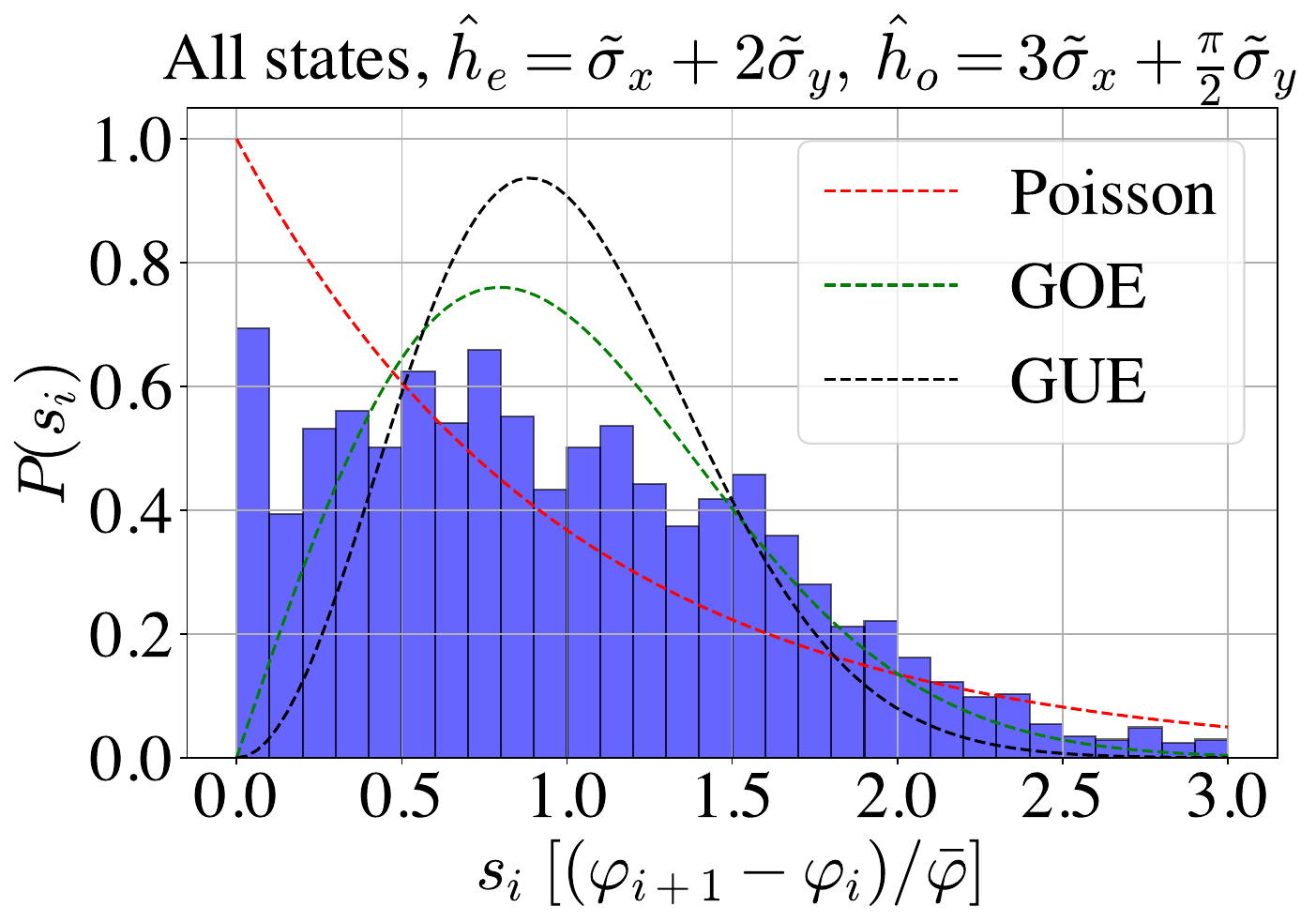}
    \includegraphics[width=0.32\linewidth]{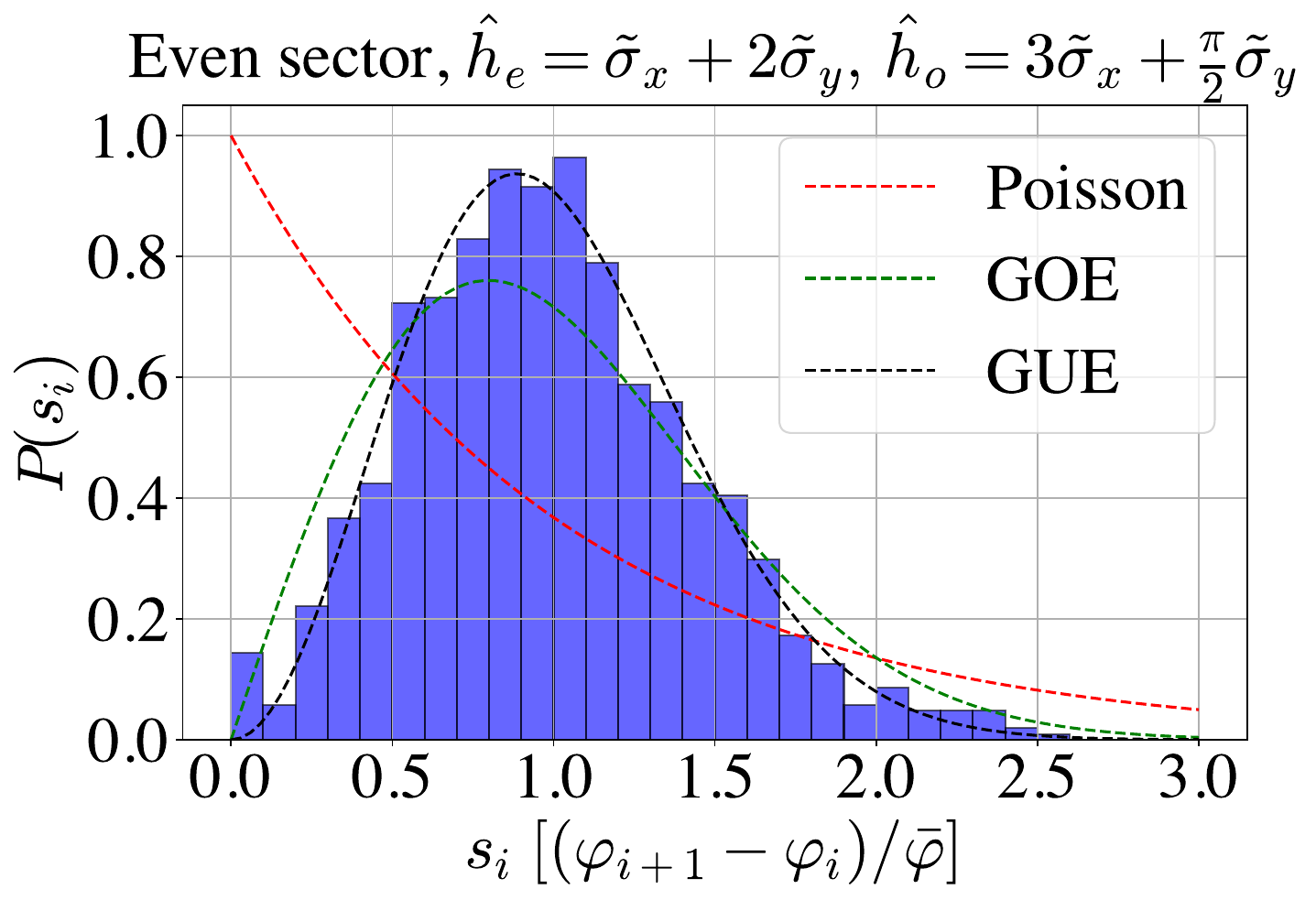}
    \includegraphics[width=0.32\linewidth]{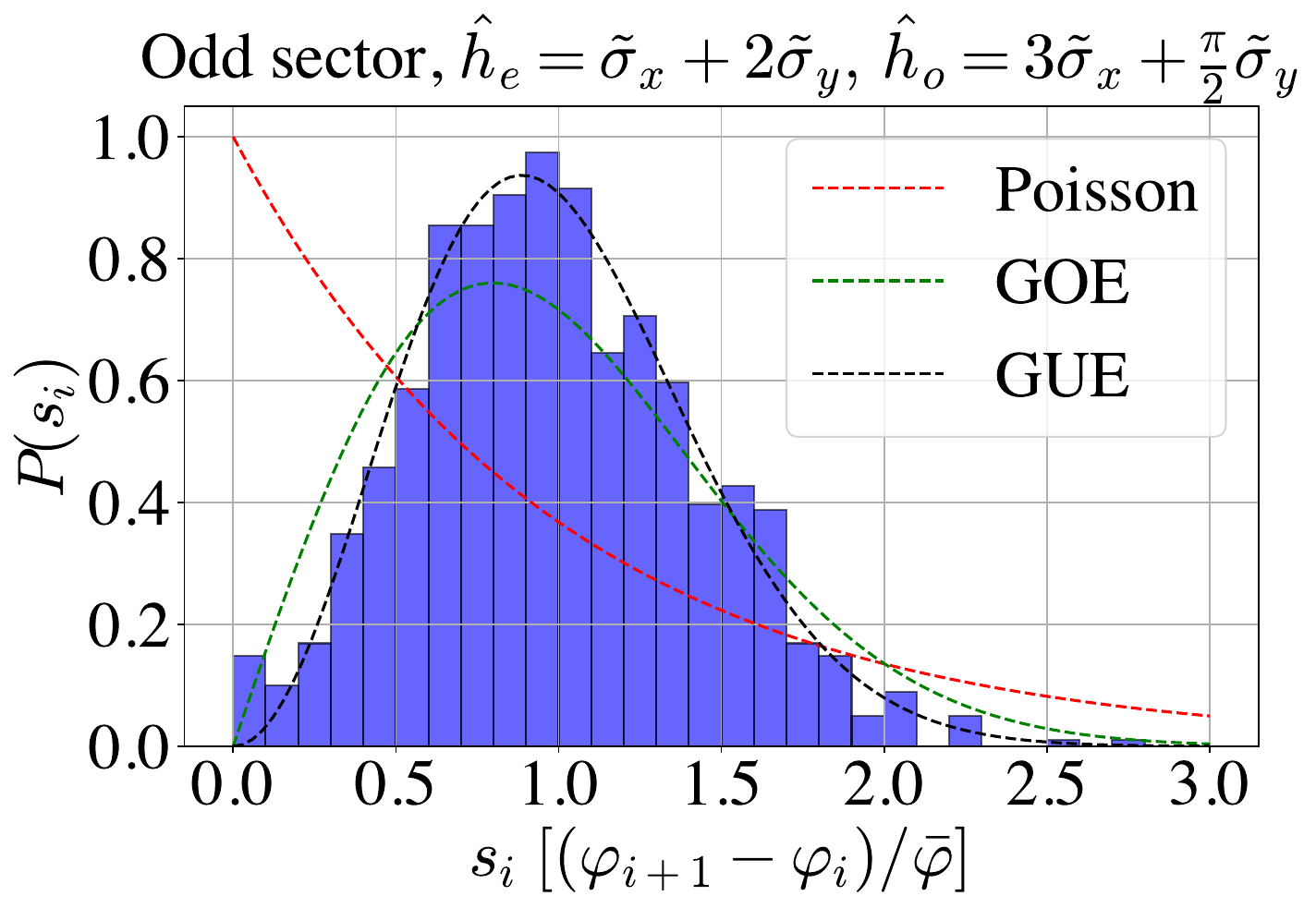}
    \includegraphics[width=0.32\linewidth]{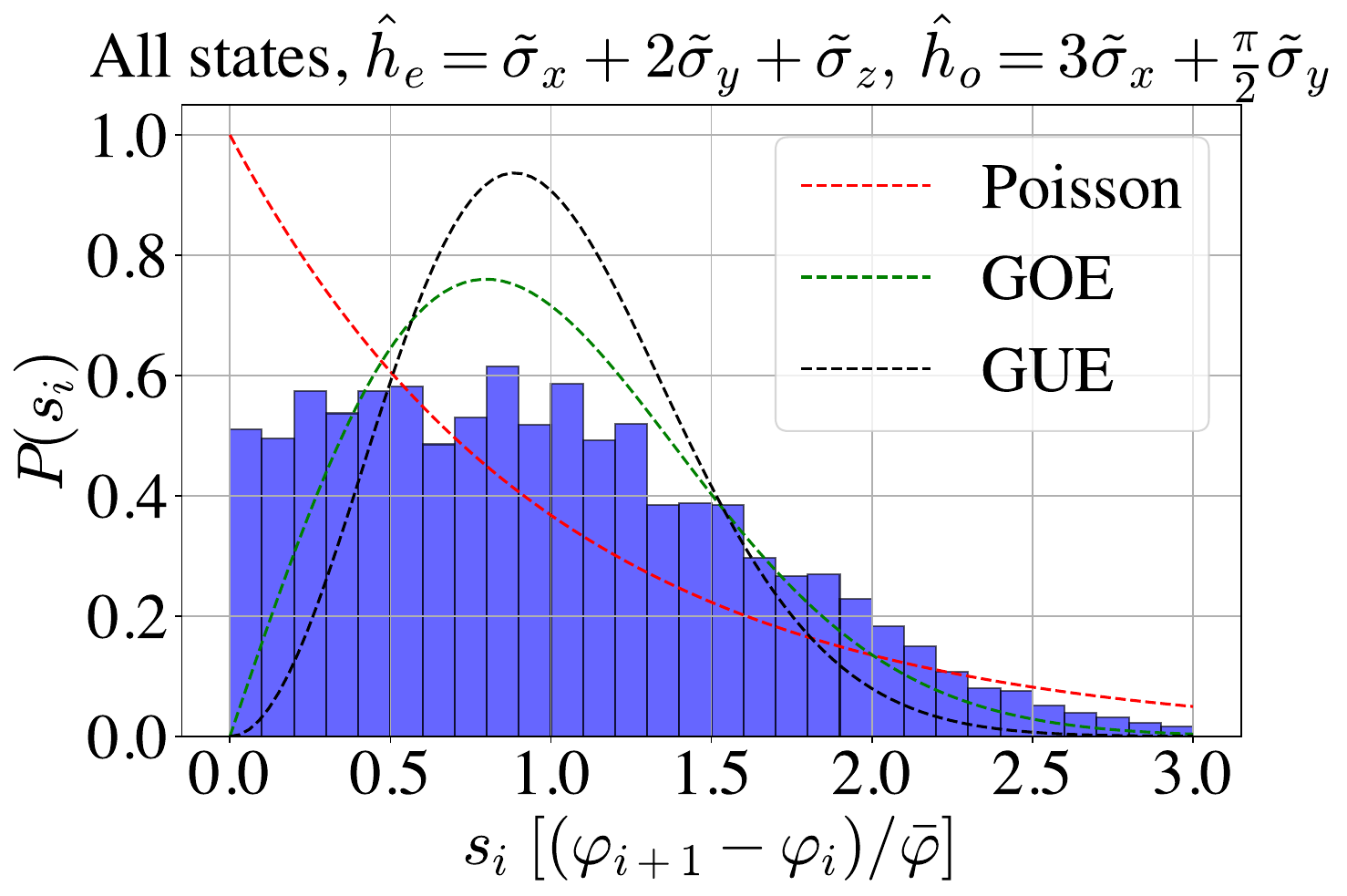}
    \includegraphics[width=0.32\linewidth]{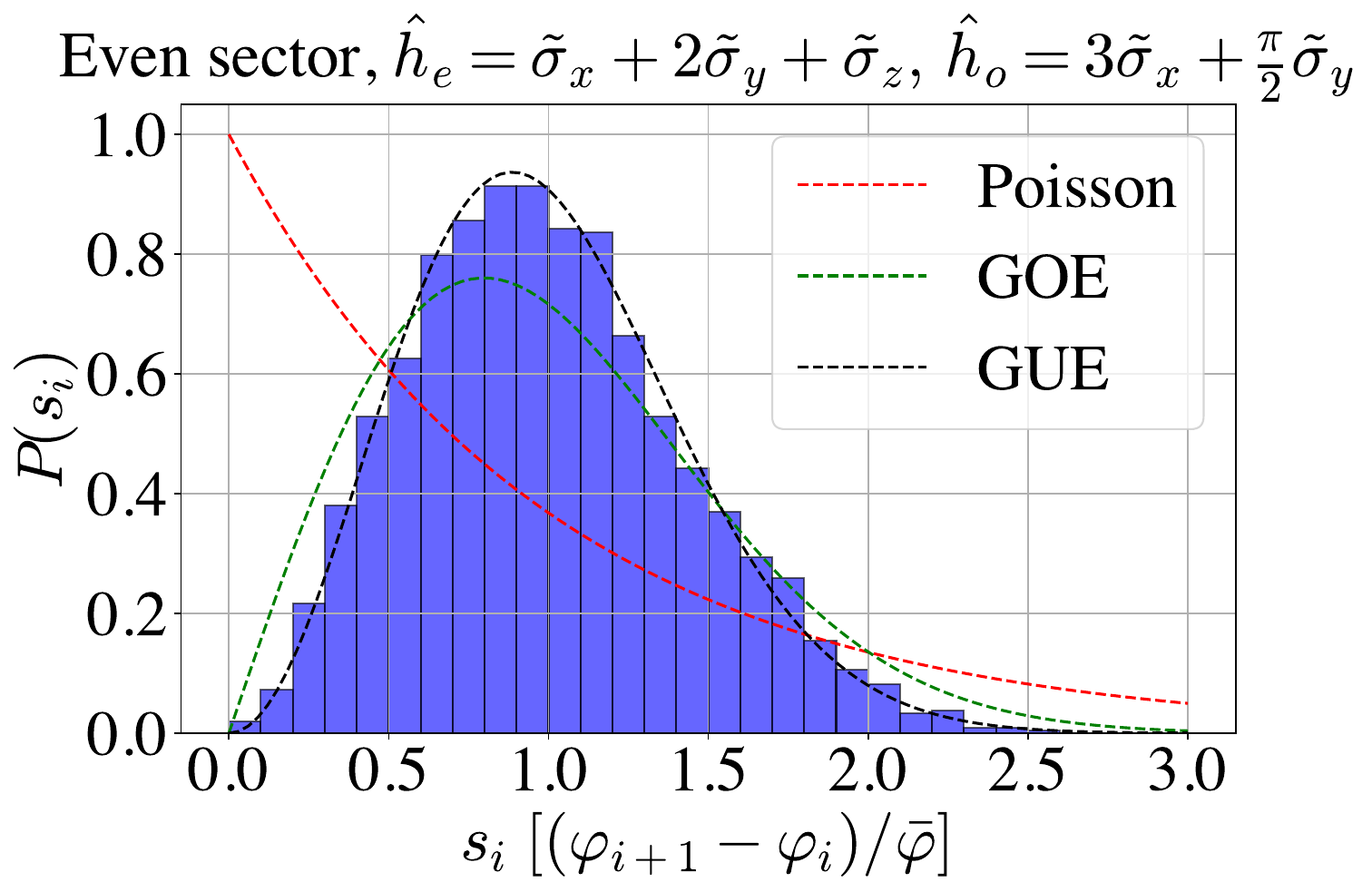}
    \includegraphics[width=0.32\linewidth]{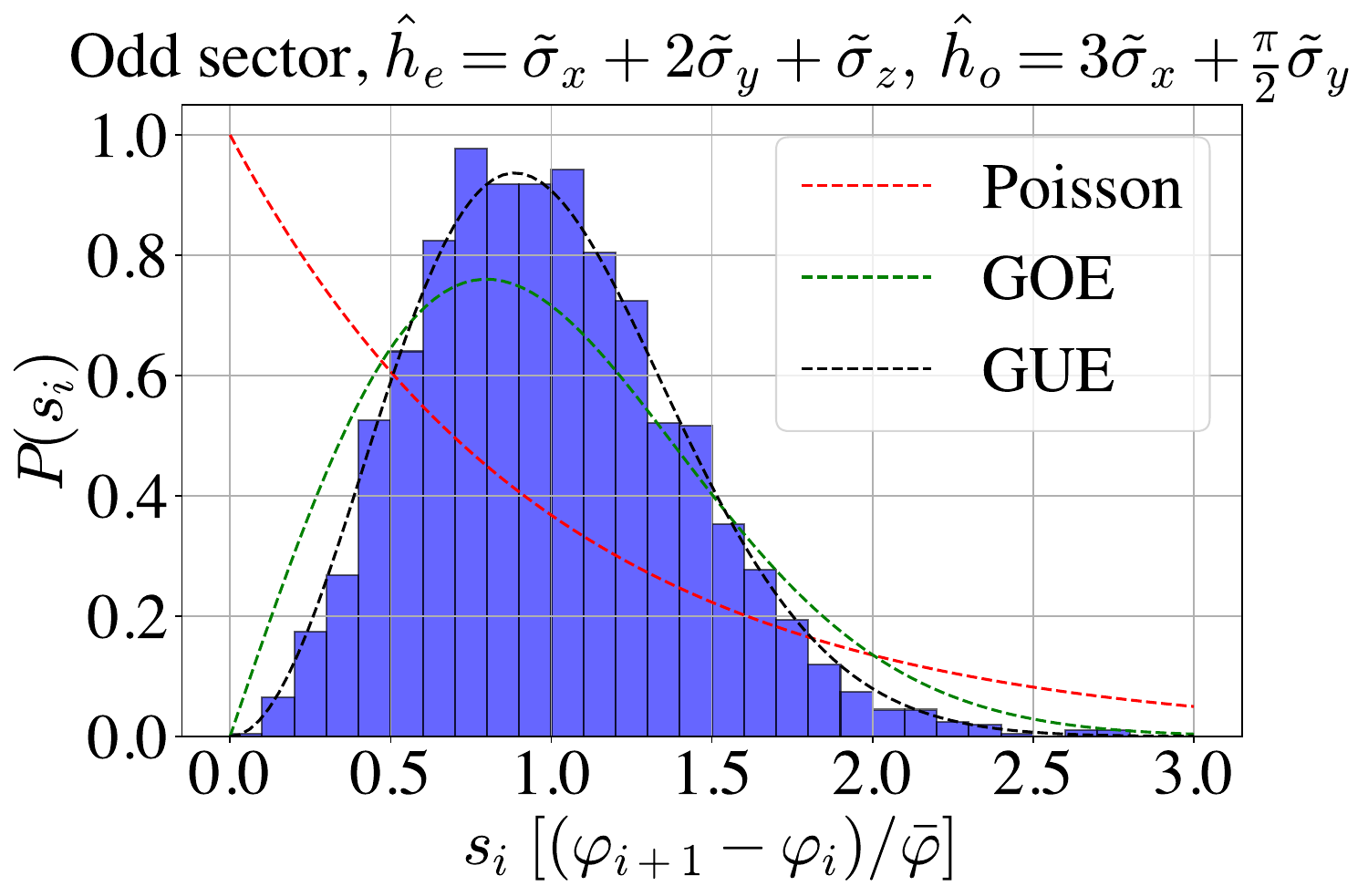}
    \caption{Level spacing statistics for the eigenstates of $\hat{\mathbb{U}}$, at parameter value $g = 1$. TOP: Local generators set to $\hat{h}_e = \tilde{\sigma}_x + 2\tilde{\sigma}_y$, $\hat{h}_o = 3\tilde{\sigma}_x +  \frac{\pi}{2}\tilde{\sigma}_y$, with the three panels displaying statistics over the full spectrum (left), as well as within the even (middle) and odd (right) parity sectors. We note that there are minor kinks close to $P(s_i) = 0$, due to the exponential degeneracy at $\varphi_i = 0$ BOTTOM: Local generators set to $\hat{h}_e = \tilde{\sigma}_x + 2\tilde{\sigma}_y + \tilde{\sigma}_z$, $\hat{h}_o = 3\tilde{\sigma}_y +  \frac{\pi}{2}\tilde{\sigma}_y$, such that there is no degeneracy in the middle of the spectrum.}
    \label{fig: level_spacing_parity_symmetry}
\end{figure}

\begin{figure}
    \centering
    \includegraphics[width=0.48\linewidth]{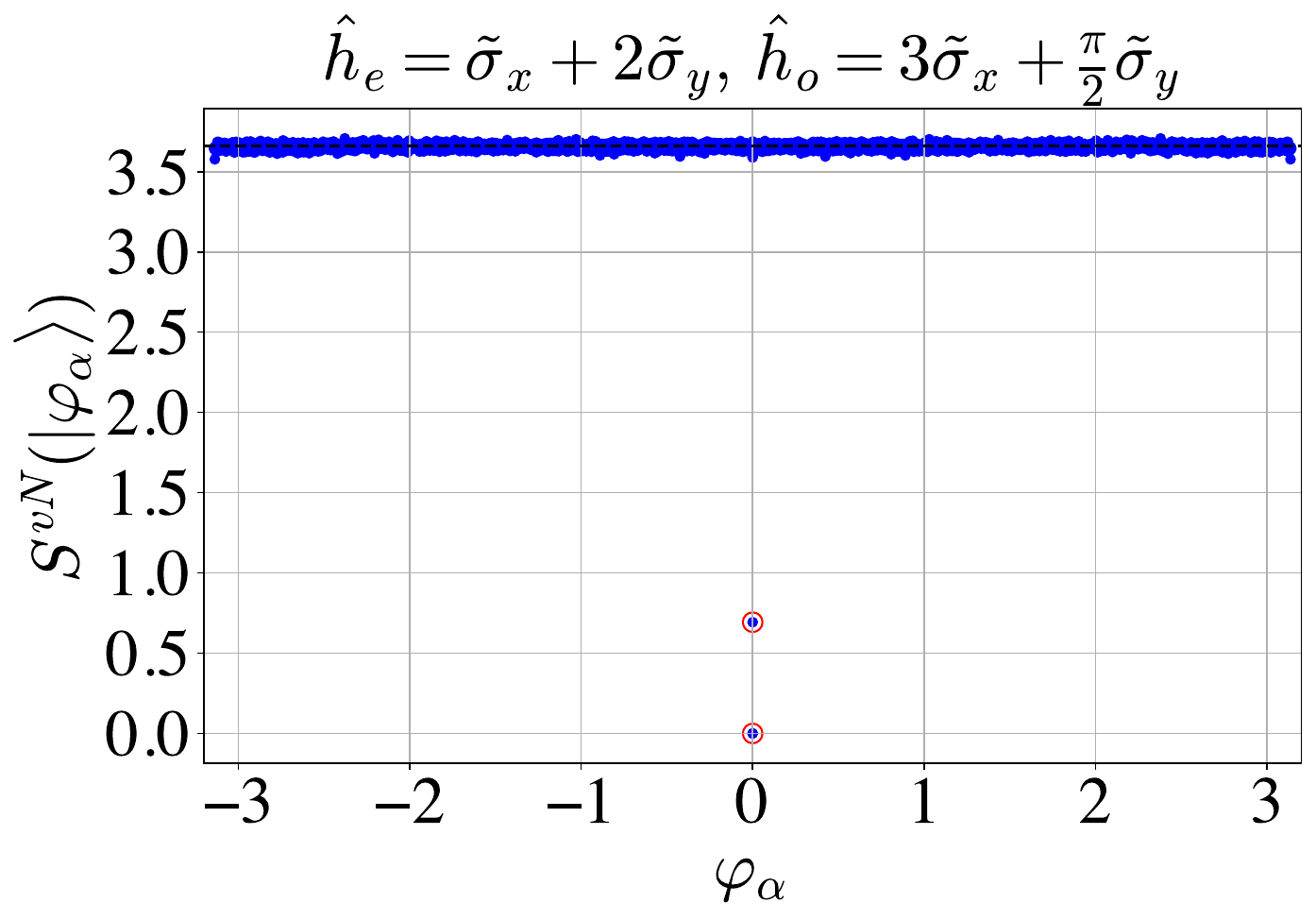}
    \includegraphics[width=0.48\linewidth]{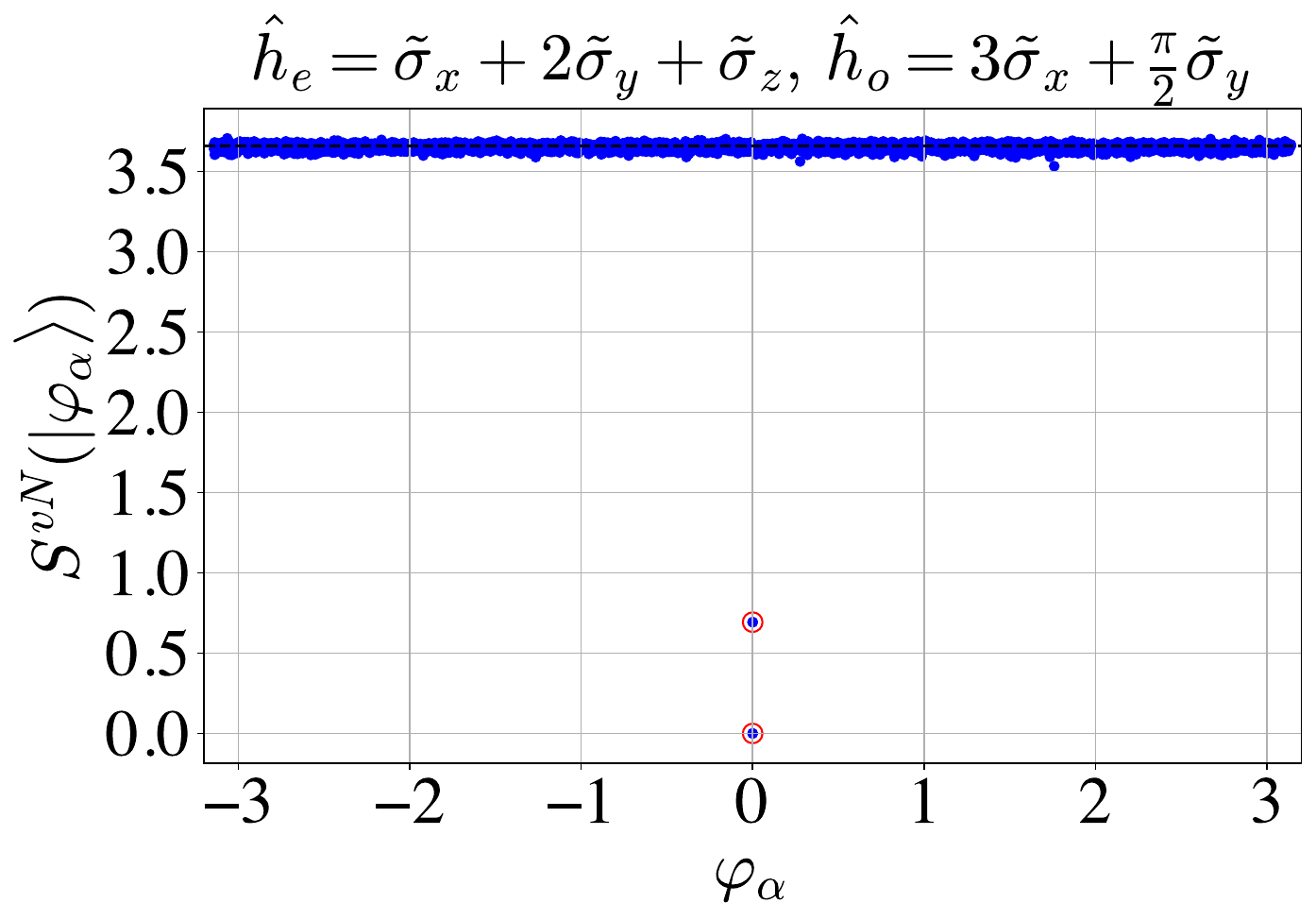}
    \caption{Half-system bipartite entanglement entropy of the eigenstates $\ket{\varphi_\alpha}$ of the Floquet unitary $\hat{\mathbb{U}}$. The entanglement entropy is computed as $\hat{S}(\ket{\varphi_\alpha}) = -{\rm Tr}(\hat{\rho}_\alpha \log \hat{\rho}_\alpha)$, where $\hat{\rho_\alpha} = {\rm Tr}_{1,\hdots,N/2}(\ket{\varphi_\alpha}\bra{\varphi_\alpha})$ is the half-chain reduced density matrix of the Floquet eigenstate $\ket{\varphi_\alpha}$. We probe two sets of generators, $\{ \hat{h}_e = \tilde{\sigma}_x + 2\tilde{\sigma}_y, \, \hat{h}_o = 3\tilde{\sigma}_x +  \frac{\pi}{2}\tilde{\sigma}_y \}$ (left) and $\{ \hat{h}_e = \tilde{\sigma}_x + 2\tilde{\sigma}_y + \tilde{\sigma}_z, \, \hat{h}_o = 3\tilde{\sigma}_x +  \frac{\pi}{2}\tilde{\sigma}_y \}$ (right), which were also considered for the level spacing analysis. We set the system size to $N = 12$ and $g = 1.0$. For both instances we can clearly distinguish the conventional QMBS states $\ket{\mathcal{V}}$ (at $\varphi_\alpha = 0, \, S^{vN} = 0$) and $\ket{\mathcal{B}^{(g)}}$ (at $\varphi_\alpha = 0, \, S^{vN} = \ln (2)$) which are separated from the bulk of the spectrum which is saturated at $S_\text{Page} = [N\ln (d) - 1]/2$}
    \label{fig:bipartite_entanglement}
\end{figure}

\section{Transition to strong ergodicity breaking}
\label{sec:ergodicity_breaking_transition}

We note that for $g = 0$ and $g \rightarrow \infty$, the considered Hamiltonian and circuit models reduce to the kinetically constrained quantum East/West models. Recalling the generic form of the used local interactions $\hat{H}_{n,n+1}^{(g)} = \hat{P}^{(g)}_{n, n+1} \hat{h}_{n, n+1}\hat{P}^{(g)}_{n, n+1}$, we can write $\hat{H}_{n,n+1}^{(g)}$ as:
\[
\hat{H}_{n,n+1}^{(g)} = a\ket{\psi ^{(g)}}\bra{\psi ^{(g)}} + b\ket{11}\bra{11} + c\ket{\psi ^{(g)}}\bra{11} + c^*\ket{11}\bra{\psi ^{(g)}}, \quad |\psi^{(g)}\rangle = \frac{1}{\sqrt{1+|g|^2}} \big(g^*|01\rangle - |10\rangle \big).
\]
For $g = 0$, we obtain $|\psi^{(g=0)}\rangle = - \ket{10}$, such that the local interactions reduce to: $\hat{H}_{n,n+1}^{(g=0)} = a\ket{10}\bra{10} + b\ket{11}\bra{11} - c\ket{10}\bra{11} - c^*\ket{11}\bra{10} = \ket{1}\bra{1} \otimes \hat{h}_{\text{East}}$, with the single qubit operator $\hat{h}_{\text{East}}$ determined by $a, \, b, \, c$. The model composed solely of such interactions will be an instance of so-called quantum East models, possessing kinetic constraints which lead to anomalous transport and ergodicity behaviour \cite{Hor-15, Pan-20}. 

Similarly, for $ g \rightarrow \infty$, we obtain $|\psi^{(g\rightarrow \infty)}\rangle = \ket{01}$, leading to $\hat{H}_{n,n+1}^{(g \rightarrow \infty)} = a\ket{01}\bra{01} + b\ket{11}\bra{11} + c\ket{01}\bra{11} + c^*\ket{01}\bra{10} = \hat{h}_{\text{West}} \otimes \ket{1}\bra{1}$. A model composed only of these types of interactions will be an instance of quantum West models, also dictated by strong kinetic constraints \cite{Bri-24}.

\section{Models with exponential mid-spectrum degeneracy at $g=1$}
\label{app: midspectrum degeneracy}

Note that, for particular choices of $\{ \hat{h}_e,\; \hat{h}_o \}$, the Hamiltonian or corresponding circuit model will have an exponential degeneracy at $E_\alpha (\varphi_\alpha) = 0$ at  $g = 1$. The number of states at $E_\alpha = 0$ is $2^{N/2}$, which was observed for any instance where the local generators have the form $\hat{h} = \alpha_x \tilde{\sigma}_x + \alpha_y \tilde{\sigma}_y$. One particular example of such a model is with $\hat{h}_e = \tilde{\sigma}_x + 2\tilde{\sigma}_y, \; \hat{h}_o = 2\tilde{\sigma}_x + \frac{\pi}{2}\tilde{\sigma}_y$. The plots of the bipartite entanglement entropy versus the energy of the eigenstates for different system sizes are displayed in Fig. \ref{fig:exponential_zero_mode_degeneracy}. Note that we display the results for the Hamiltonian model primarily for clarity, but the same degeneracy can also be observed in eigenstates of the Floquet circuit model.

\begin{figure}[h!]
    \centering
    \includegraphics[width=0.325\linewidth]{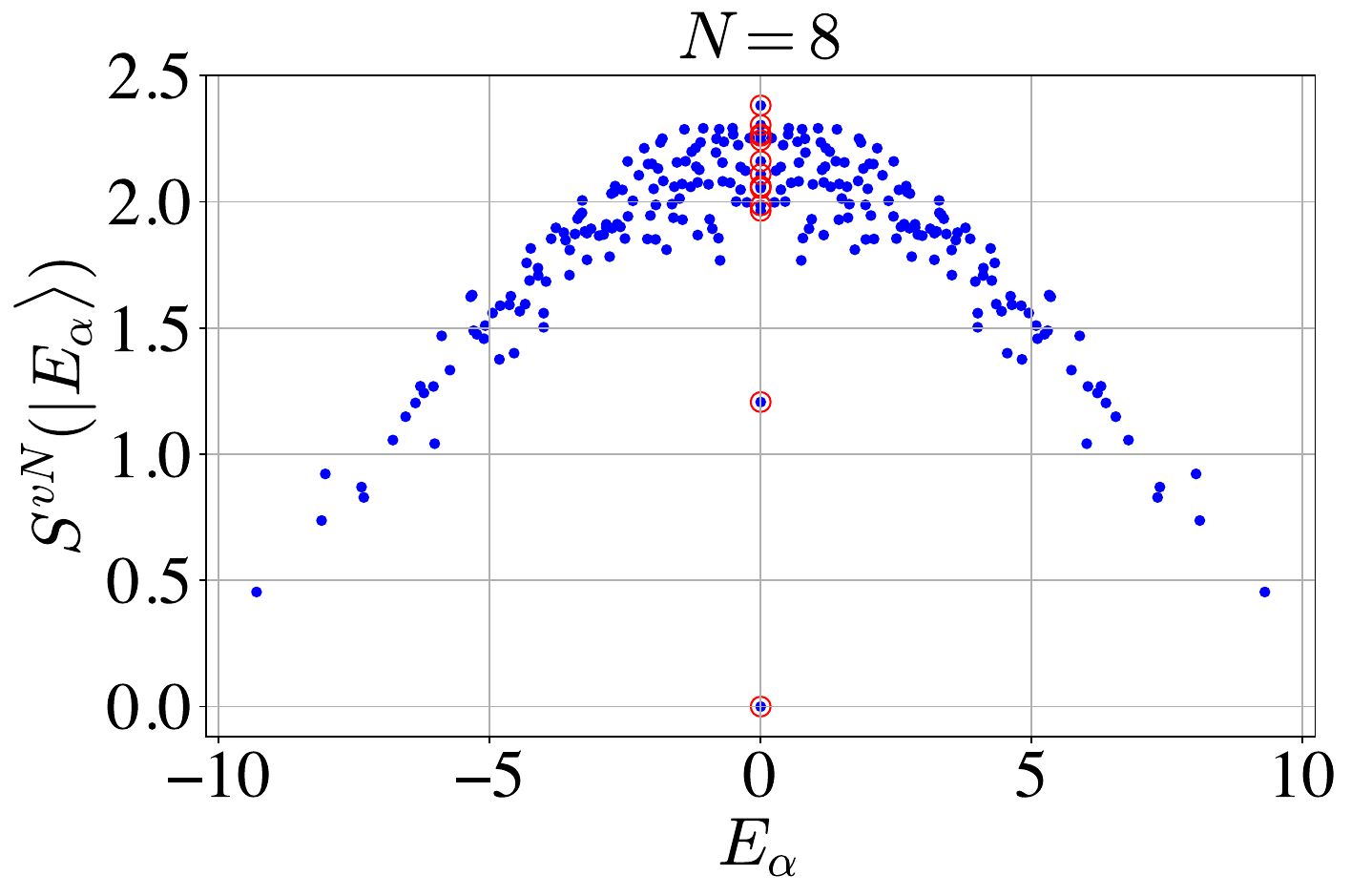}
    \includegraphics[width=0.325\linewidth]{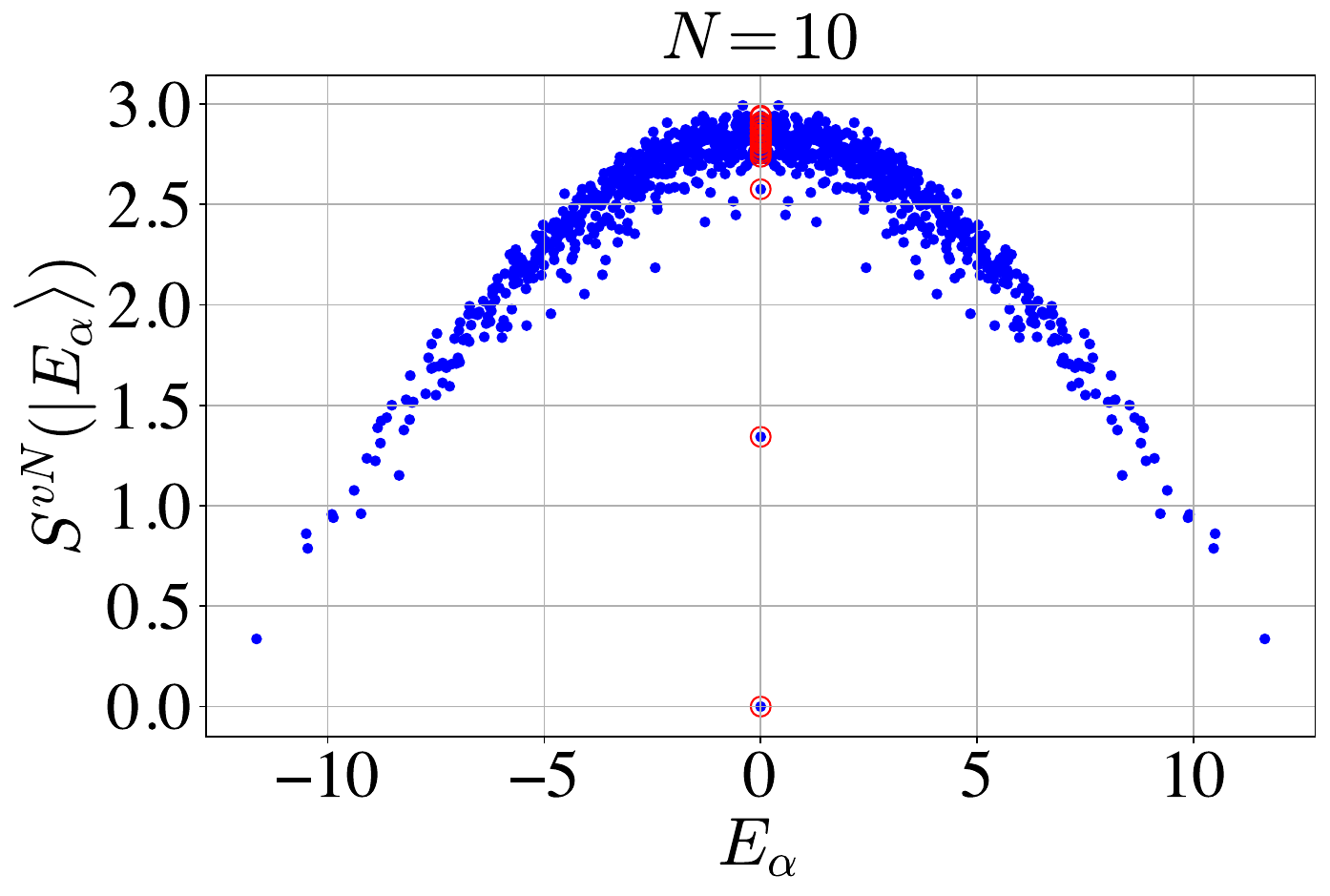}
    \includegraphics[width=0.325\linewidth]{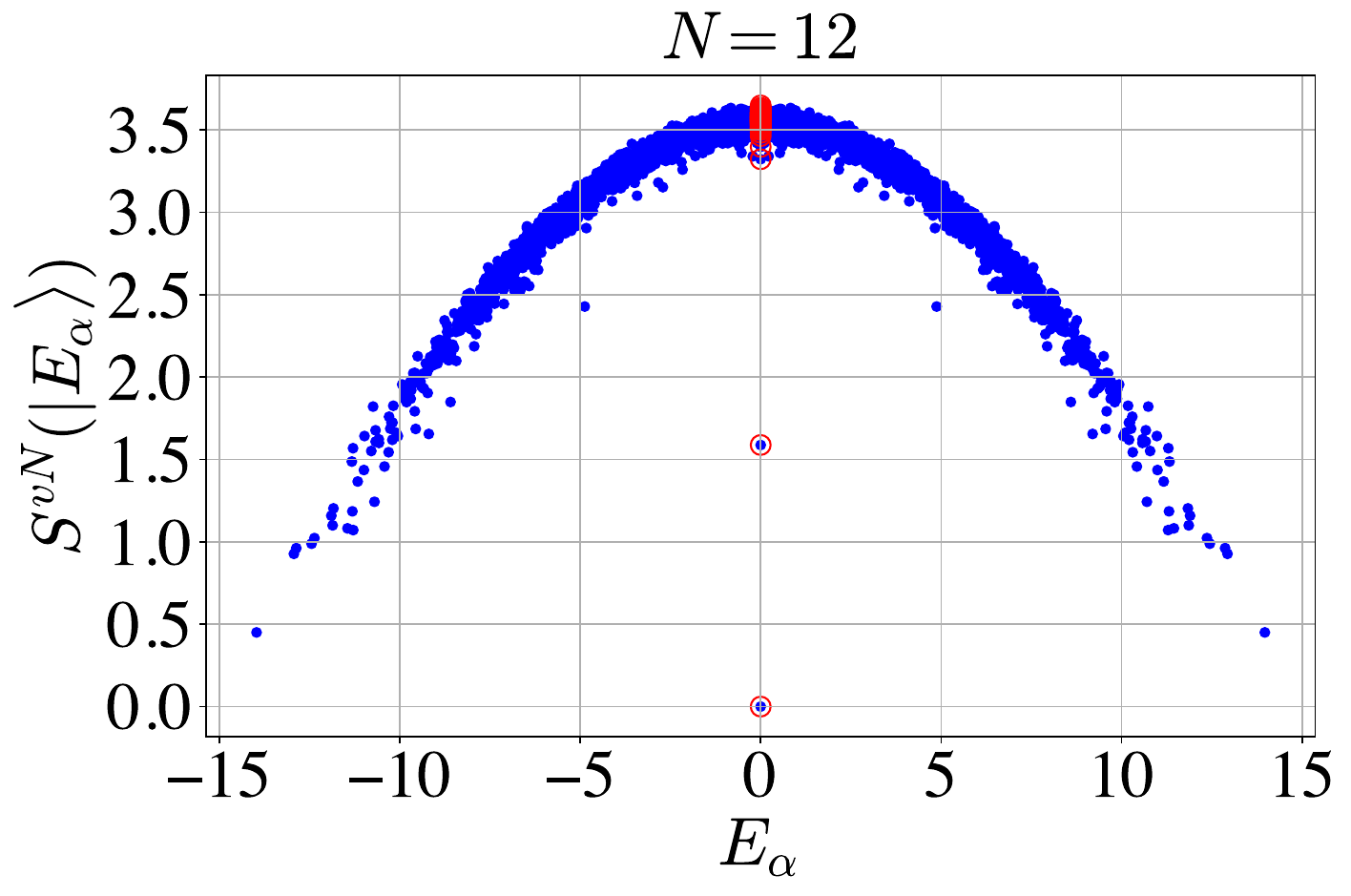}
    \caption{Bipartite entanglement entropy, $S^{vN}$, versus the energy $E_\alpha$, for the energy eigenstates $\ket{E _\alpha}$. We are working with the Hamiltonian model with alternating chains, where $\hat{h}_e = \tilde{\sigma}_x + 2\tilde{\sigma}_y, \; \hat{h}_o = 2\tilde{\sigma}_x + \frac{\pi}{2}\tilde{\sigma}_y$. The zero energy modes are marked with red circles. We use exact diagonalization, for system sizes $N = 8, \; 10, \; 12$.}
    \label{fig:exponential_zero_mode_degeneracy}
\end{figure}

\section{Additional Classical Simulations}
\label{app:additional_classical_results}

In this section we present classical numerical results demonstrating the presence of AQMBS states in the Hamiltonian and Floquet models, whose existence has been demonstrated in SM \ref{app: aqmbs_concrete_model}. We present the dynamics of the fidelity and total magnetisation, starting from three AQMBS states: $\{\ket{\mathcal{A}_1}, \, \ket{\mathcal{A}_2} \}$. We probe two instances of the Hamiltonian, with and without the exponential degeneracy at $E_\alpha = 0$, and display the corresponding results in Fig. \ref{fig: aqmbs_dynamics_hamiltonian}. Similarly, we also probe the dynamics for the Floquet circuit model, with and without the exponential degeneracy in the middle of the spectrum, in Fig. \ref{fig: aqmbs_dynamics_floquet}.

\begin{figure*}[h!]
    \centering

    \begin{tikzpicture}
        \node[inner sep=0] (a) {\includegraphics[width=0.38\linewidth]{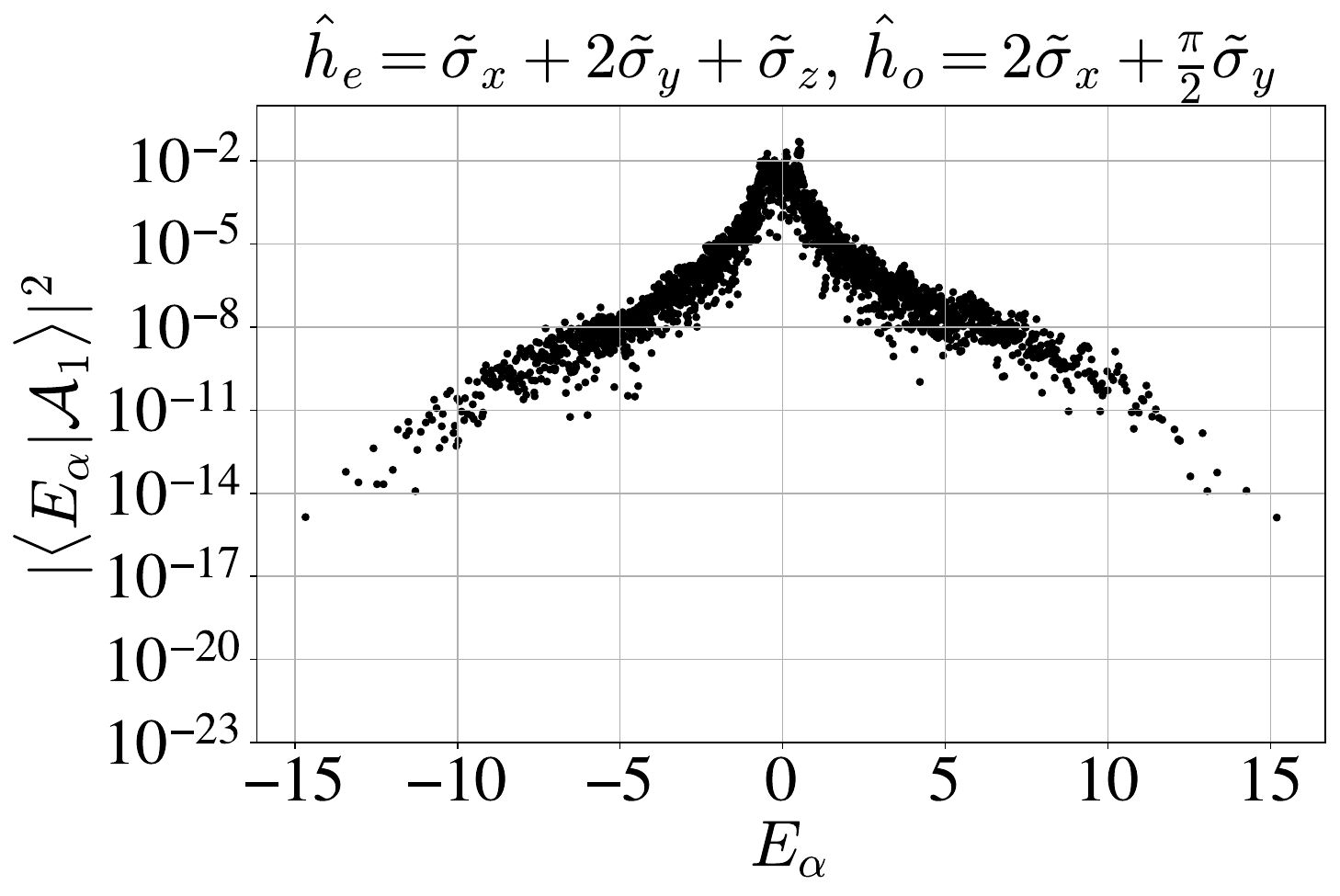}};
        \node[anchor=north west,font=\bfseries] at (a.north west) {(a)};
    \end{tikzpicture}
    \hspace{1cm}
    \begin{tikzpicture}
        \node[inner sep=0] (b) {\includegraphics[width=0.38\linewidth]{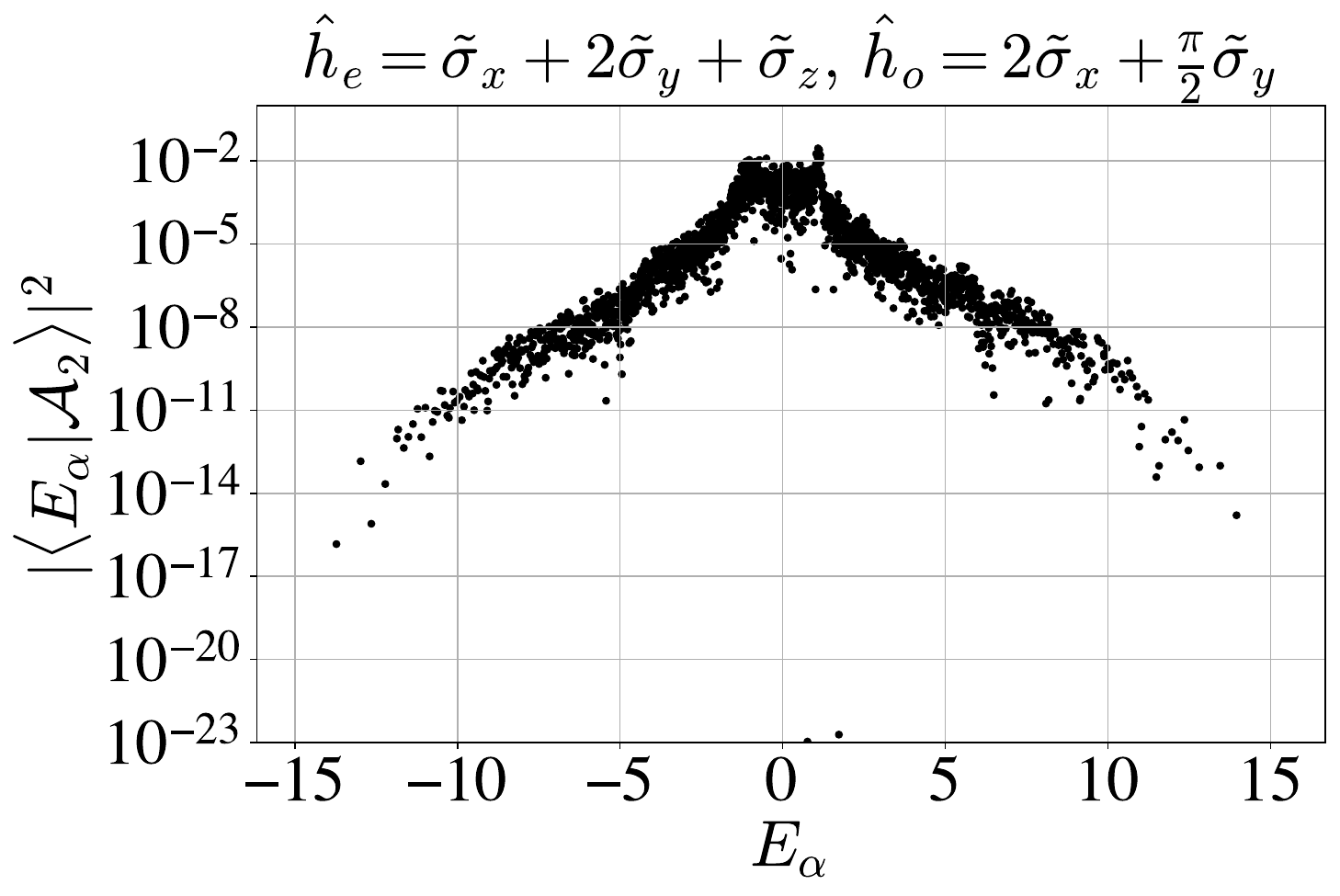}};
        \node[anchor=north west,font=\bfseries] at (b.north west) {(b)};
    \end{tikzpicture}


    \begin{tikzpicture}
        \node[inner sep=0] (c) {\includegraphics[width=0.38\linewidth]{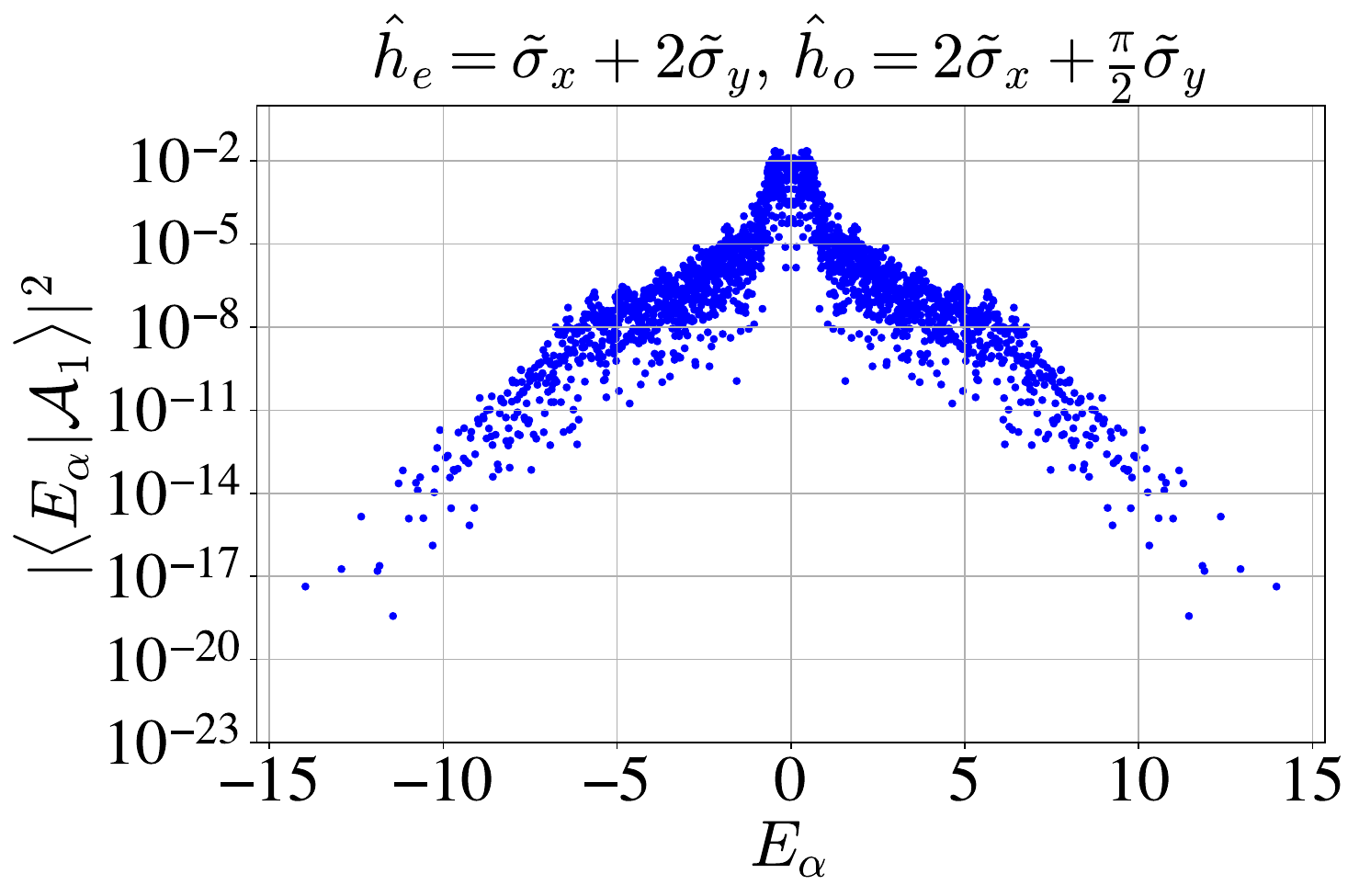}};
        \node[anchor=north west,font=\bfseries] at (c.north west) {(c)};
    \end{tikzpicture}
    \hspace{1cm}
    \begin{tikzpicture}
        \node[inner sep=0] (d) {\includegraphics[width=0.38\linewidth]{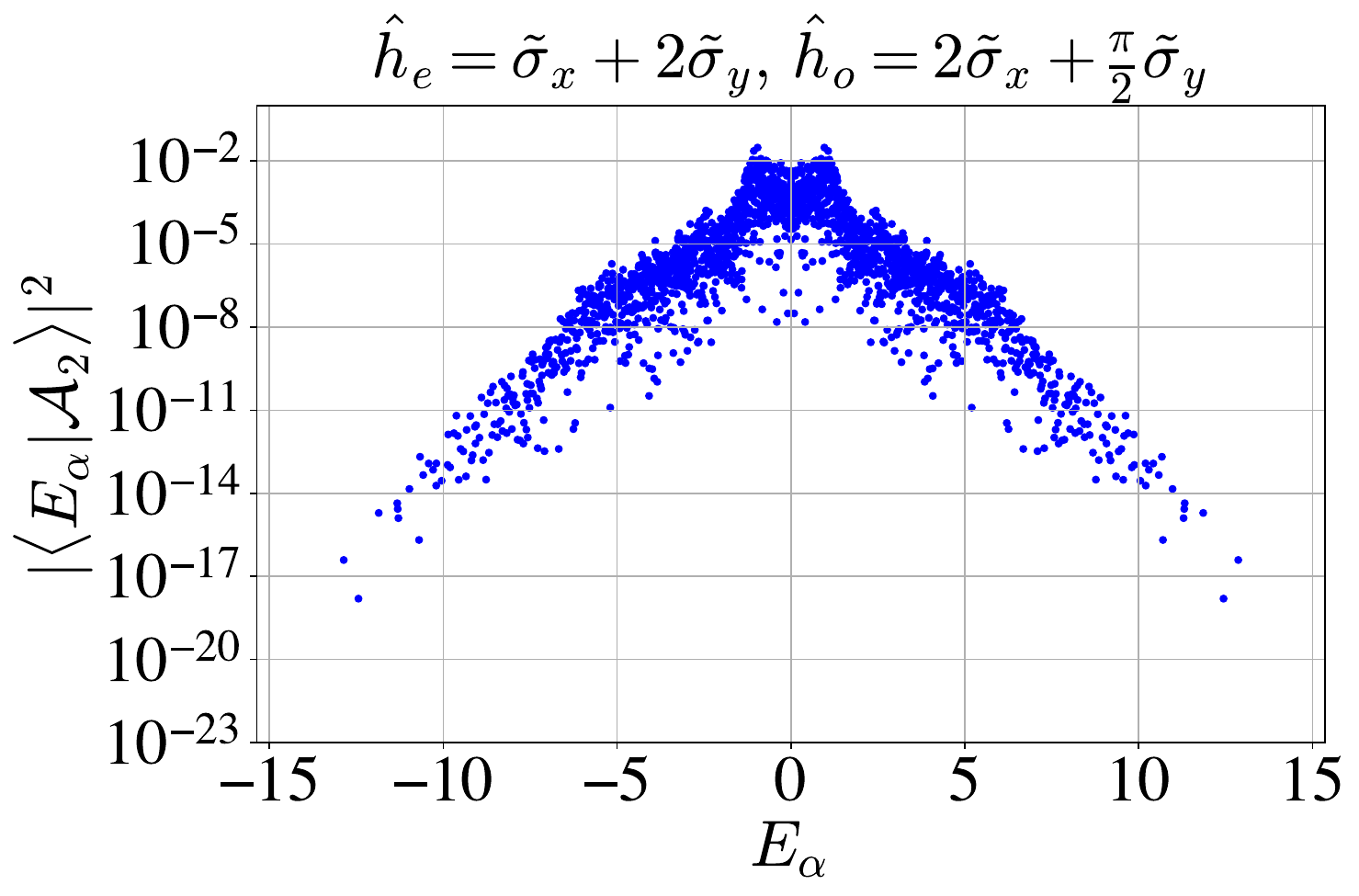}};
        \node[anchor=north west,font=\bfseries] at (d.north west) {(d)};
    \end{tikzpicture}

    \begin{tikzpicture}
        \node[inner sep=0] (e) {\includegraphics[width=0.38\linewidth]{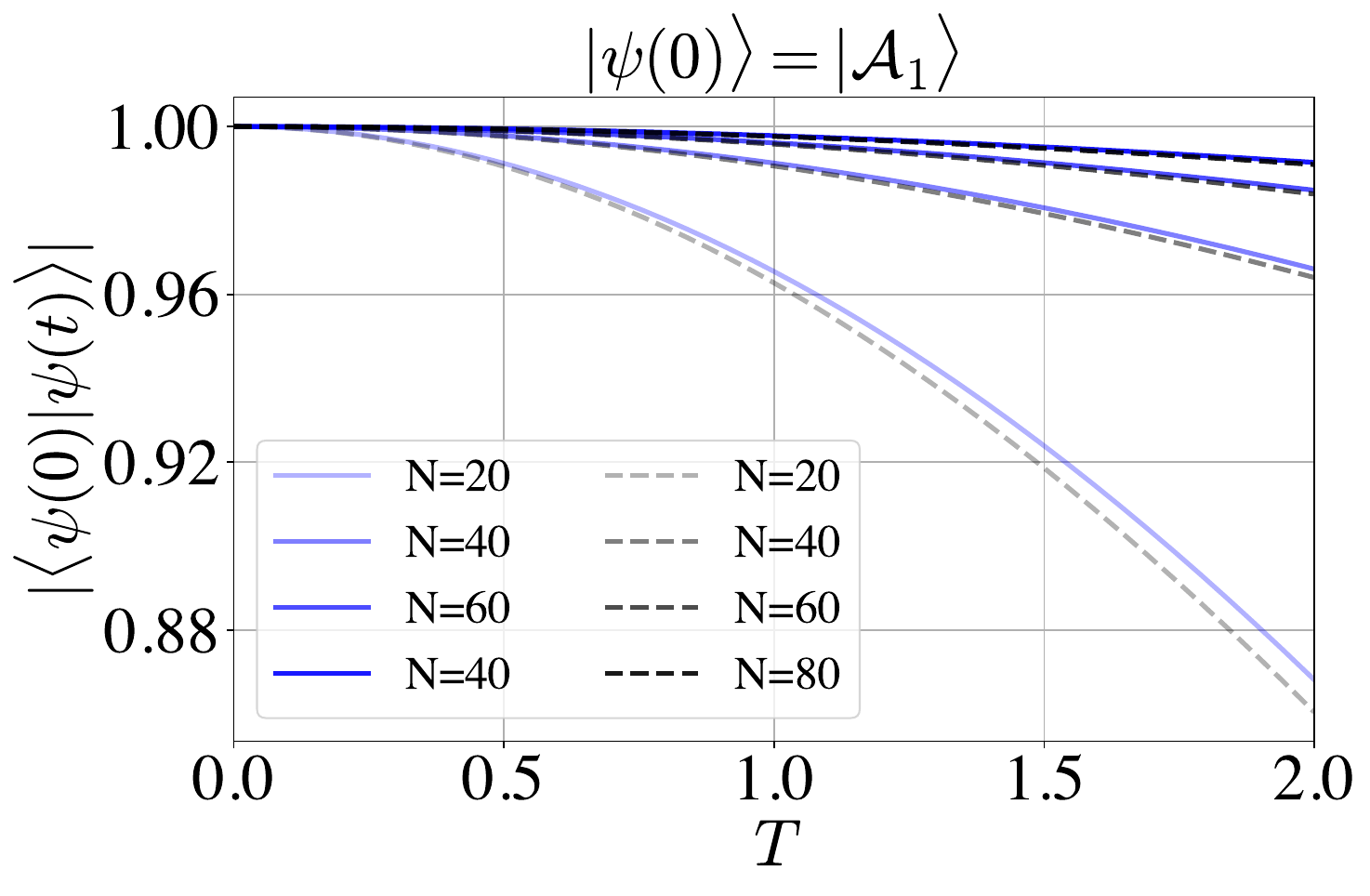}};
        \node[anchor=north west,font=\bfseries] at (e.north west) {(e)};
    \end{tikzpicture}
    \hspace{1cm}
    \begin{tikzpicture}
        \node[inner sep=0] (f) {\includegraphics[width=0.38\linewidth]{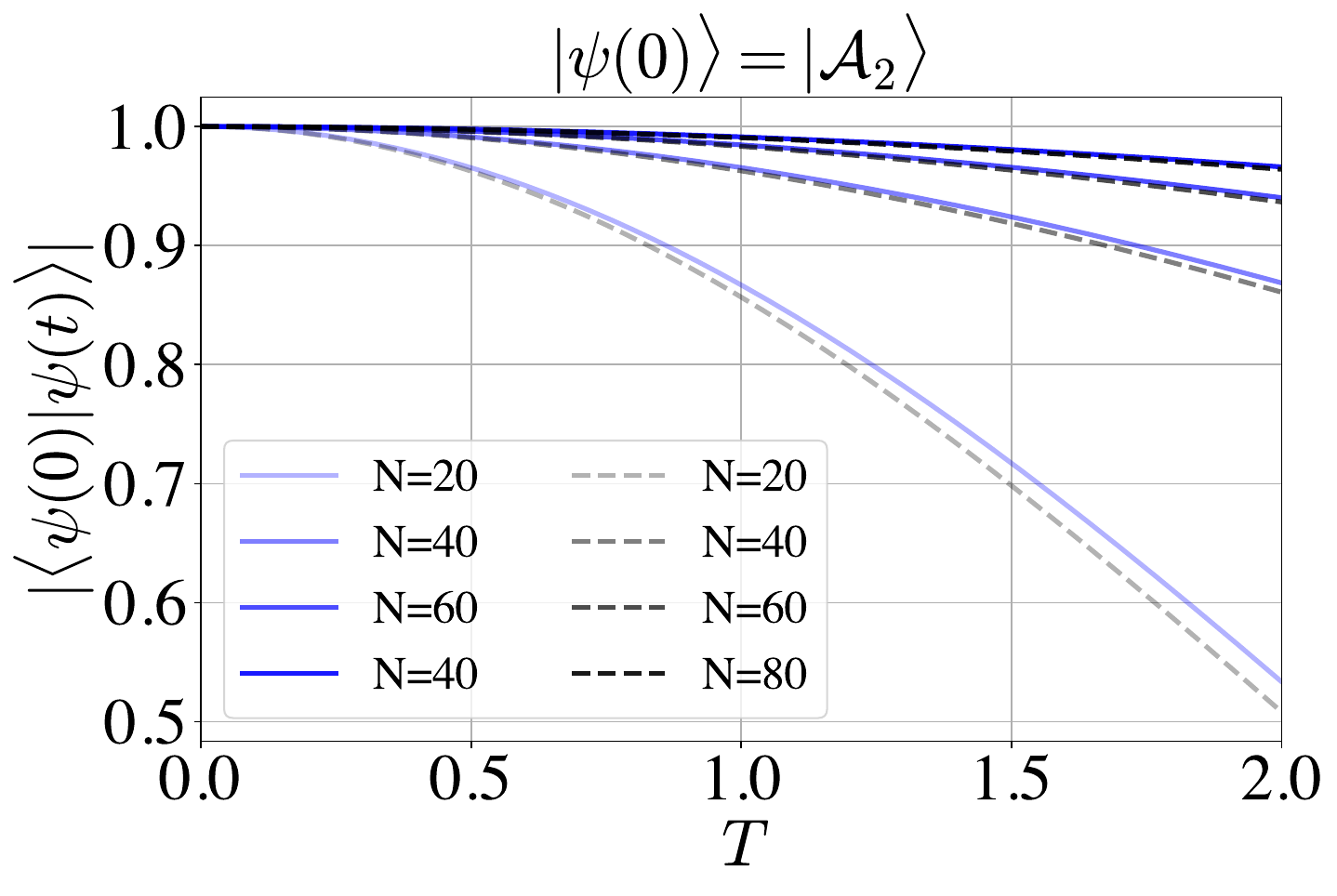}};
        \node[anchor=north west,font=\bfseries] at (f.north west) {(f)};
    \end{tikzpicture}

    \begin{tikzpicture}
        \node[inner sep=0] (g) {\includegraphics[width=0.38\linewidth]{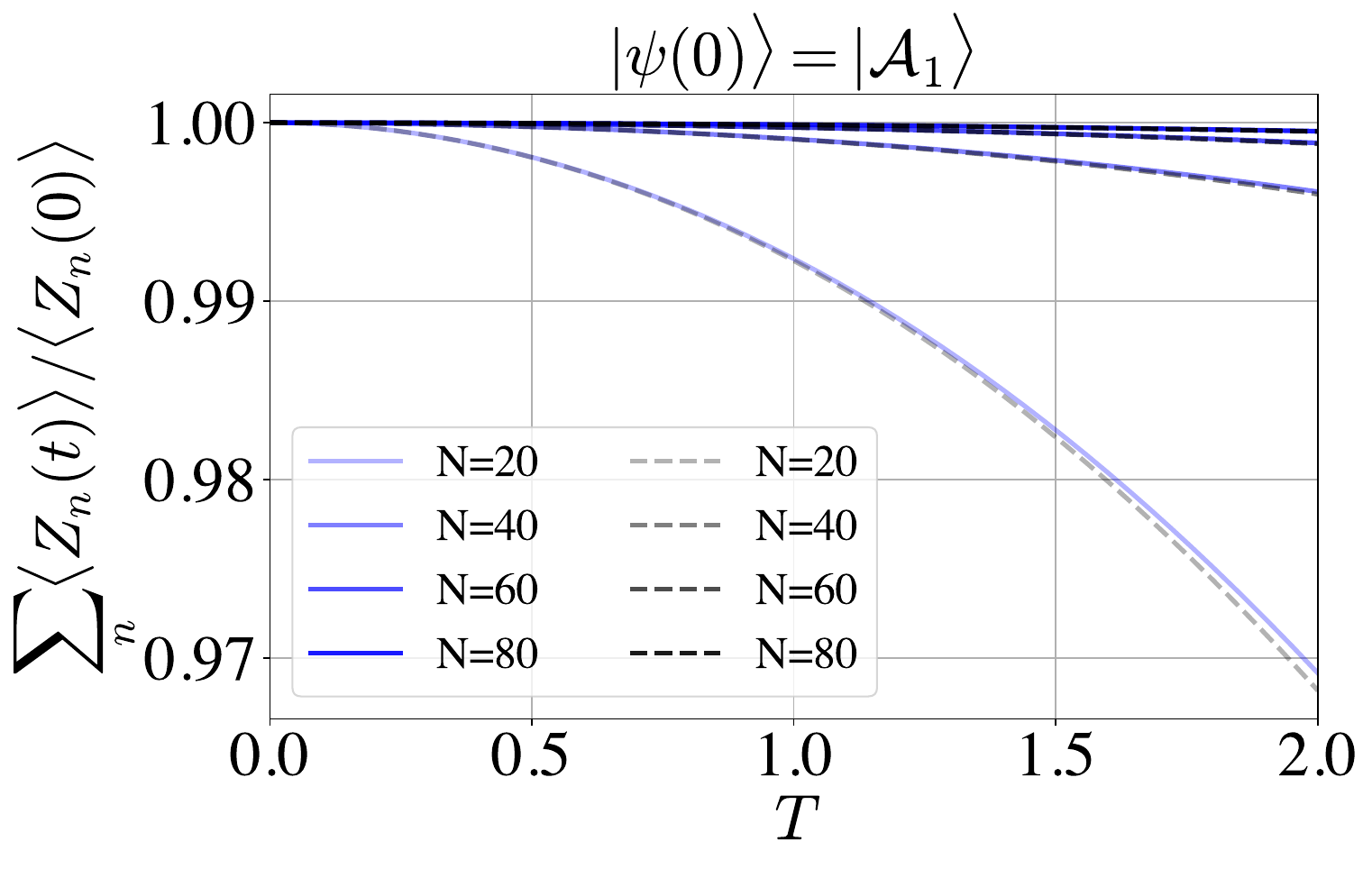}};
        \node[anchor=north west,font=\bfseries] at (g.north west) {(g)};
    \end{tikzpicture}
    \hspace{1cm}
    \begin{tikzpicture}
        \node[inner sep=0] (h) {\includegraphics[width=0.38\linewidth]{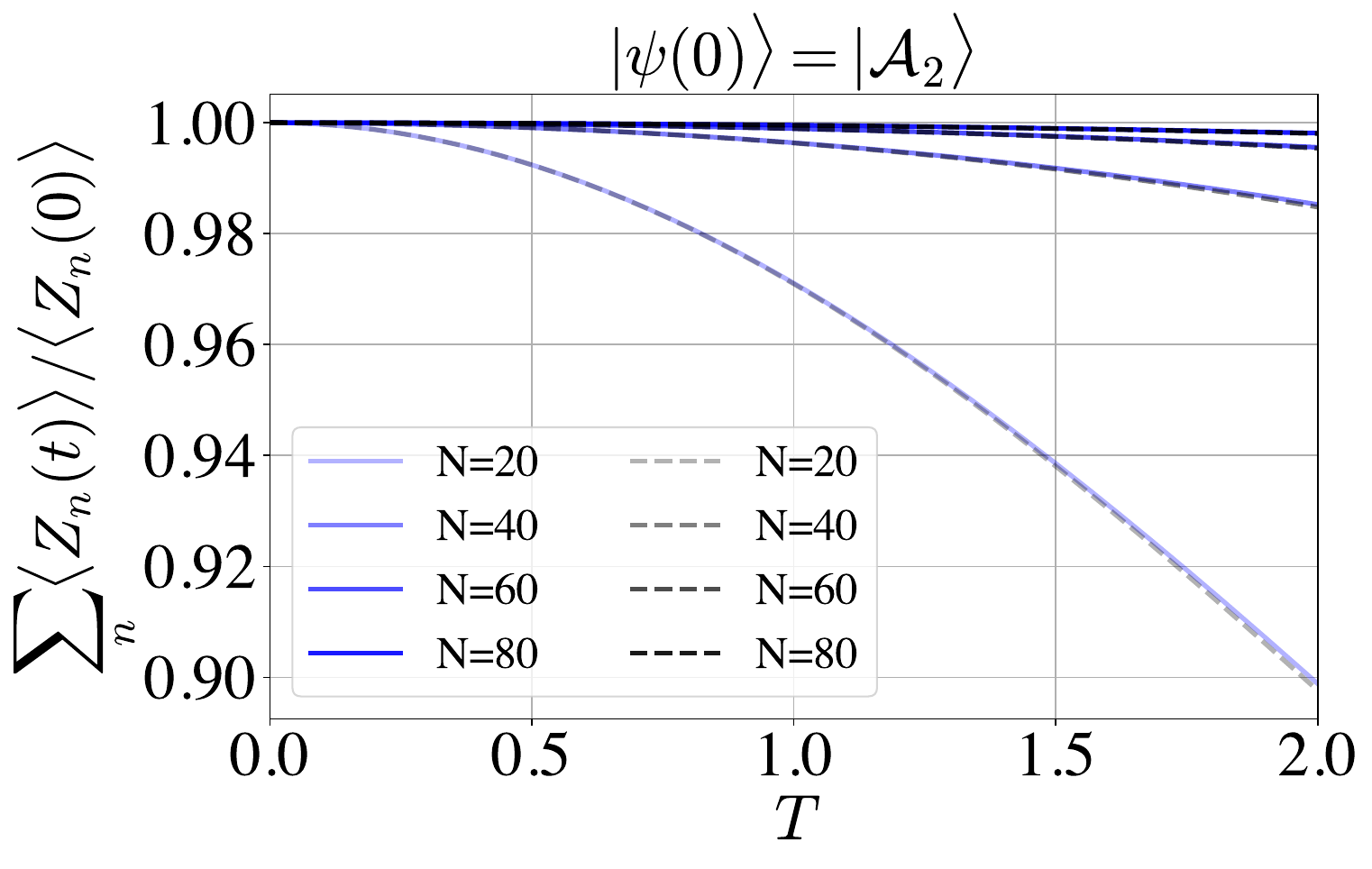}};
        \node[anchor=north west,font=\bfseries] at (h.north west) {(h)};
    \end{tikzpicture}

    \caption{
    Results for the Hamiltonian with and without exponential degeneracy at $E_\alpha = 0$. Panels \textbf{(a), (b)} show overlaps of $\{\ket{\mathcal{A}_1}, \ket{\mathcal{A}_2}\}$ with energy eigenstates for $\hat{h}_e = \tilde{\sigma}_x + 2\tilde{\sigma}_y + \tilde{\sigma}_z$, $\hat{h}_o = 2\tilde{\sigma}_x + \frac{\pi}{2}\tilde{\sigma}_y$, respectively. Panels \textbf{(c), (d)} show the same overlaps for $\hat{h}_e = \tilde{\sigma}_x + 2\tilde{\sigma}_y$, $\hat{h}_o = 2\tilde{\sigma}_y + \frac{\pi}{2}\tilde{\sigma}_y$, where an exponential mid-spectrum degeneracy is present, as discussed in Supplementary material \ref{app: midspectrum degeneracy}. Parameters used to obtain the overlaps: $g=1$, $N=12$. Black (blue) denotes results without (with) degeneracy.
    The dynamical results are obtained via TEBD (with bond dimension $\chi=512$ and time-step $\delta t=0.01$), using the Julia ITensor package \cite{itensor}. The fidelity with respect to the initial state is shown in \textbf{(e), (f)} and total magnetisation in \textbf{(g), (h)} for $N \in \{20,40,60,80\}$. Increasing $N$ slows the decay of both quantities, a clear indication of AQMBS. For any of the probed $N$, decay is faster at larger $k$, consistent with the rate of the gap closing in the reference Hamiltonian.}
    \label{fig: aqmbs_dynamics_hamiltonian}
\end{figure*}

\begin{figure*}
    \centering
    \begin{tikzpicture}
        \node[inner sep=0] (a) {\includegraphics[width=0.38\linewidth]{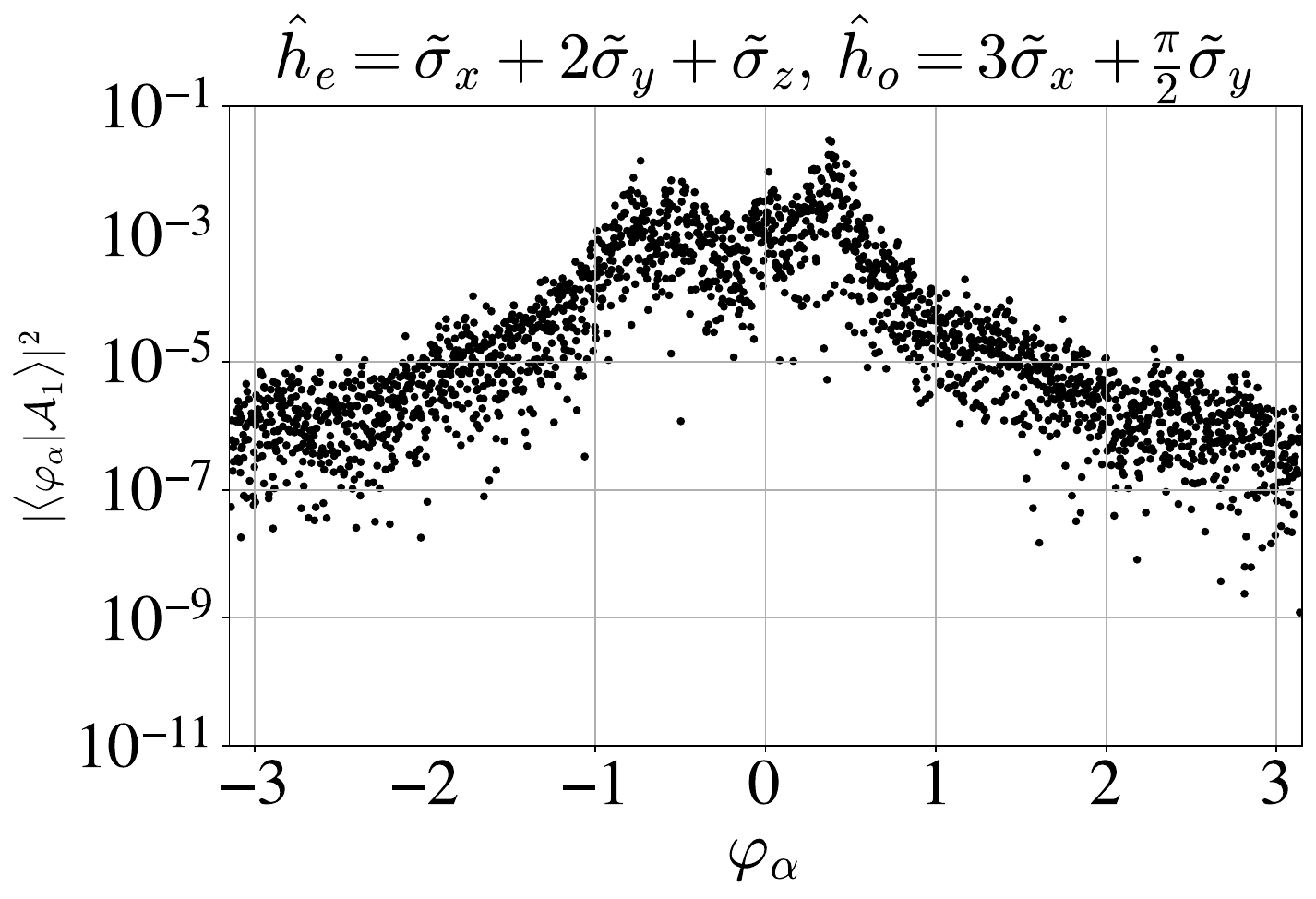}};
        \node[anchor=north west,font=\bfseries] at (a.north west) {(a)};
    \end{tikzpicture}
    \hspace{1cm}
    \begin{tikzpicture}
        \node[inner sep=0] (b) {\includegraphics[width=0.38\linewidth]{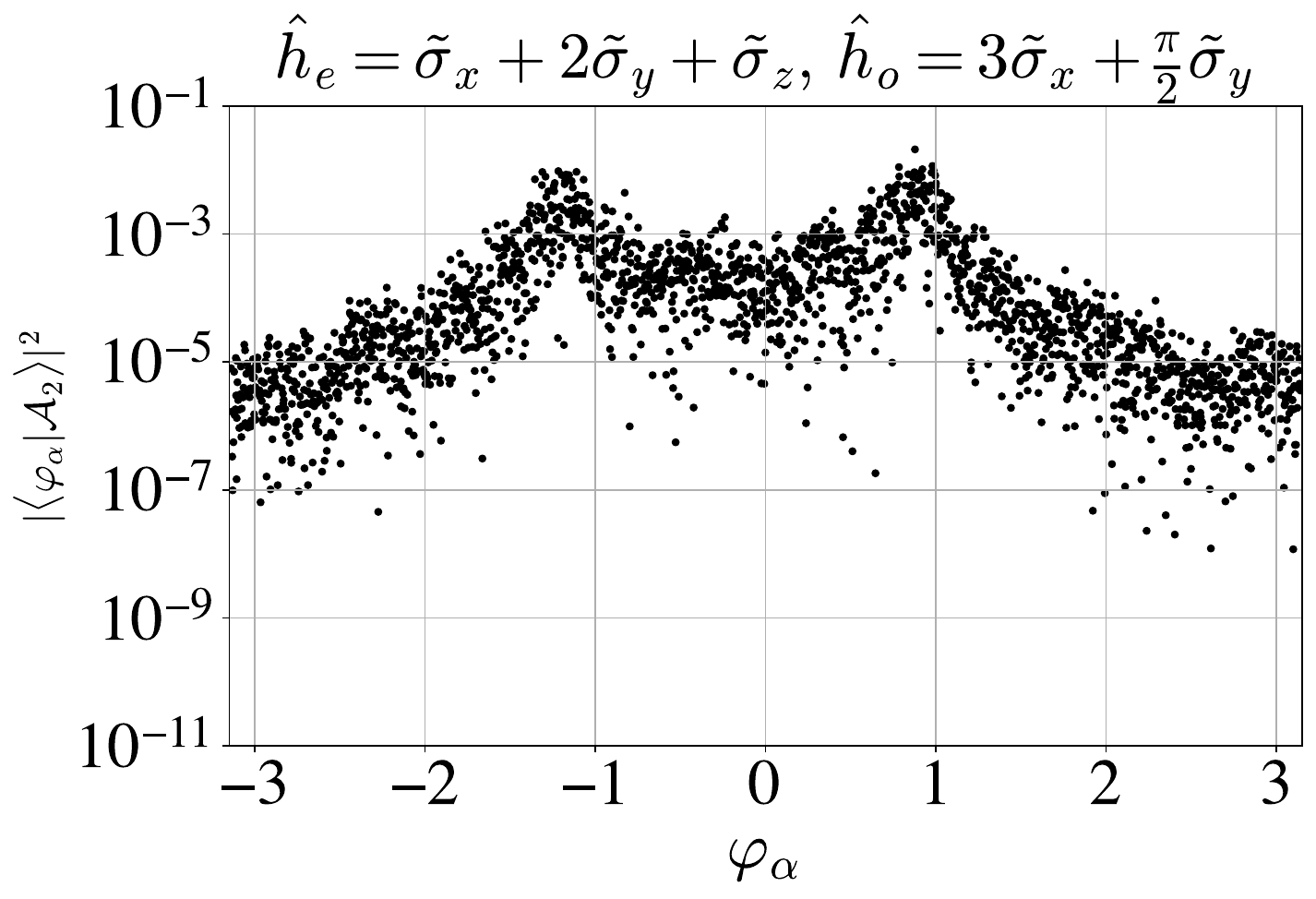}};
        \node[anchor=north west,font=\bfseries] at (b.north west) {(b)};
    \end{tikzpicture}

    \begin{tikzpicture}
        \node[inner sep=0] (c) {\includegraphics[width=0.38\linewidth]{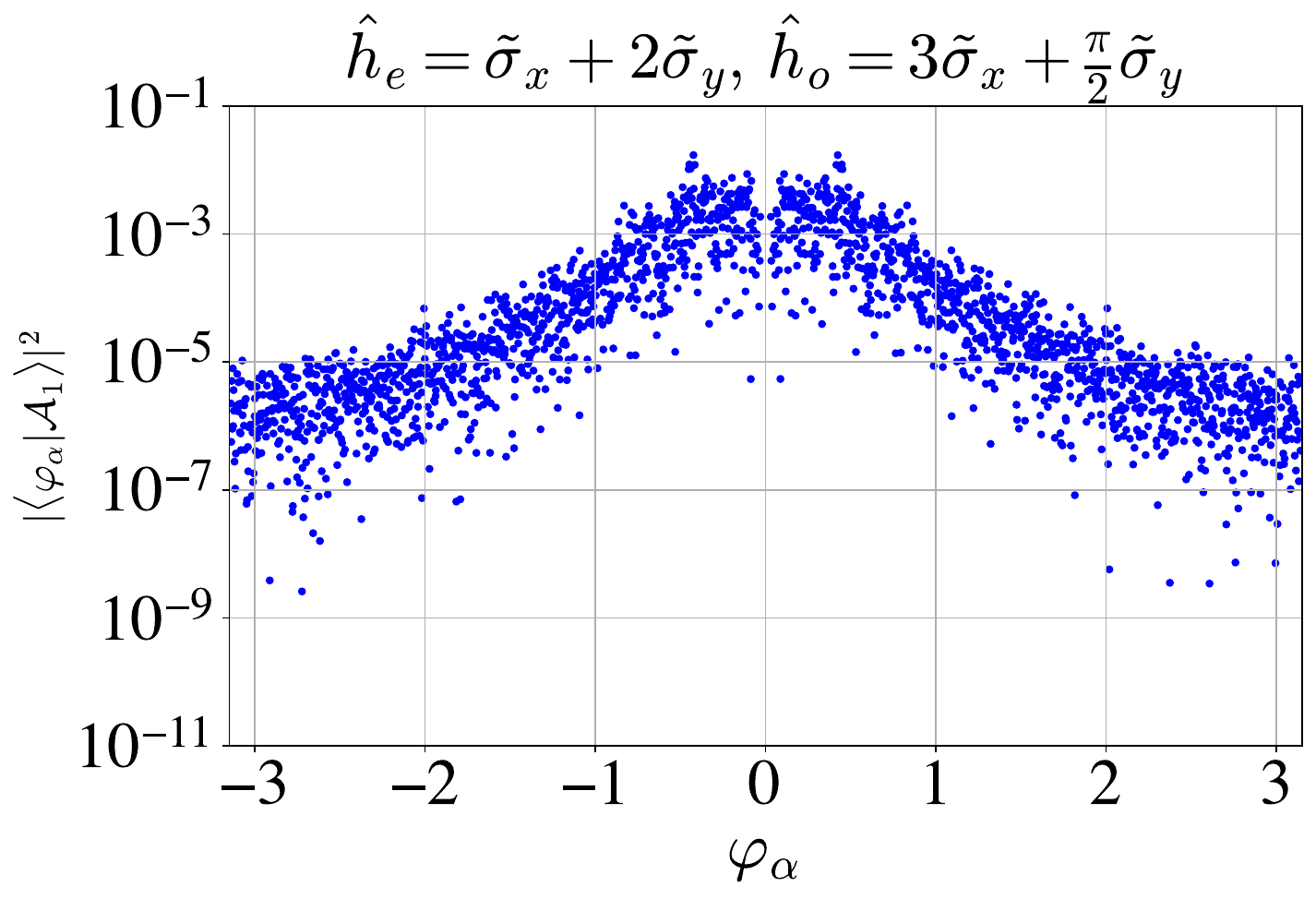}};
        \node[anchor=north west,font=\bfseries] at (c.north west) {(c)};
    \end{tikzpicture}
    \hspace{1cm}
    \begin{tikzpicture}
        \node[inner sep=0] (d) {\includegraphics[width=0.38\linewidth]{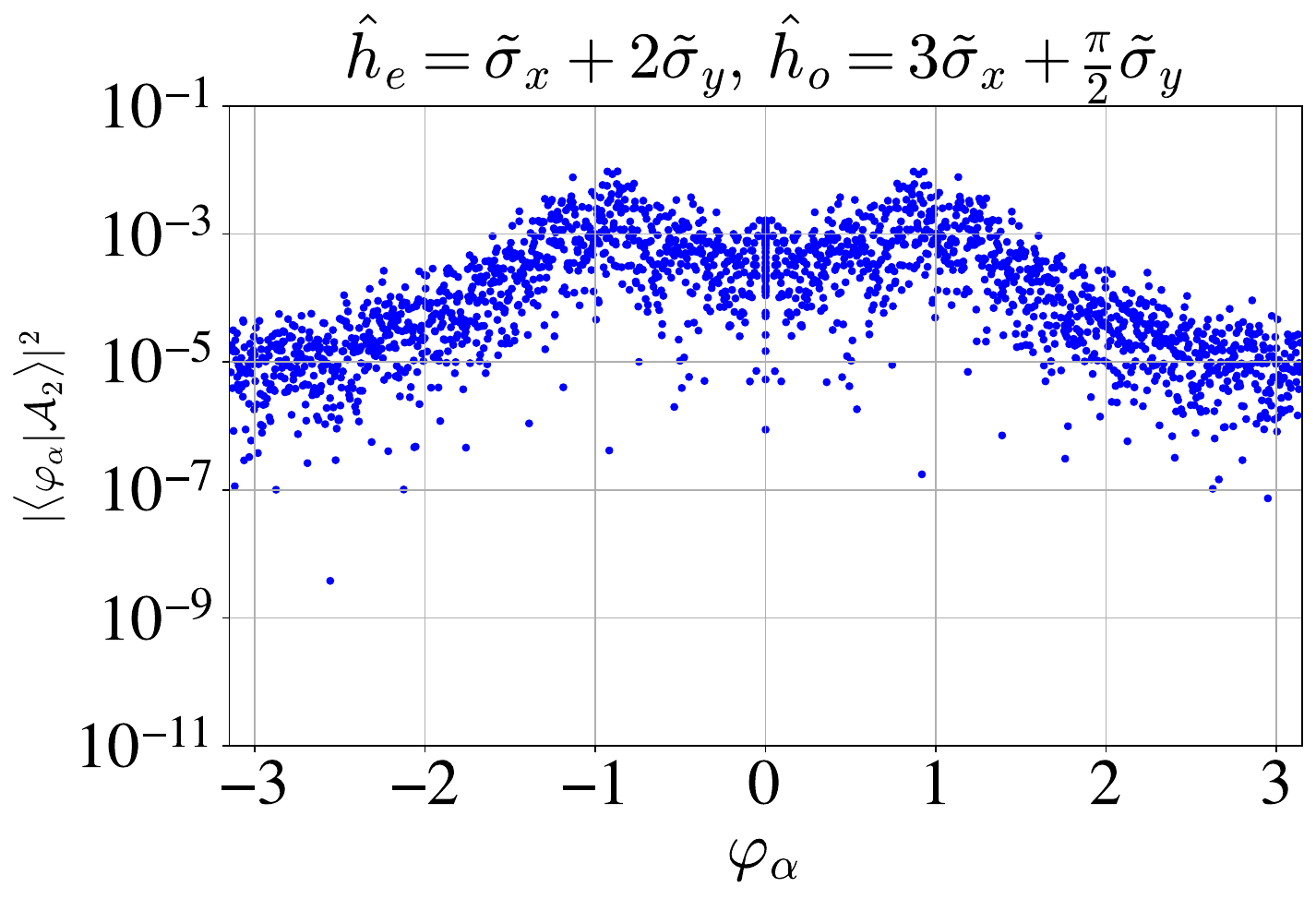}};
        \node[anchor=north west,font=\bfseries] at (d.north west) {(d)};
    \end{tikzpicture}

    \begin{tikzpicture}
        \node[inner sep=0] (e) {\includegraphics[width=0.38\linewidth]{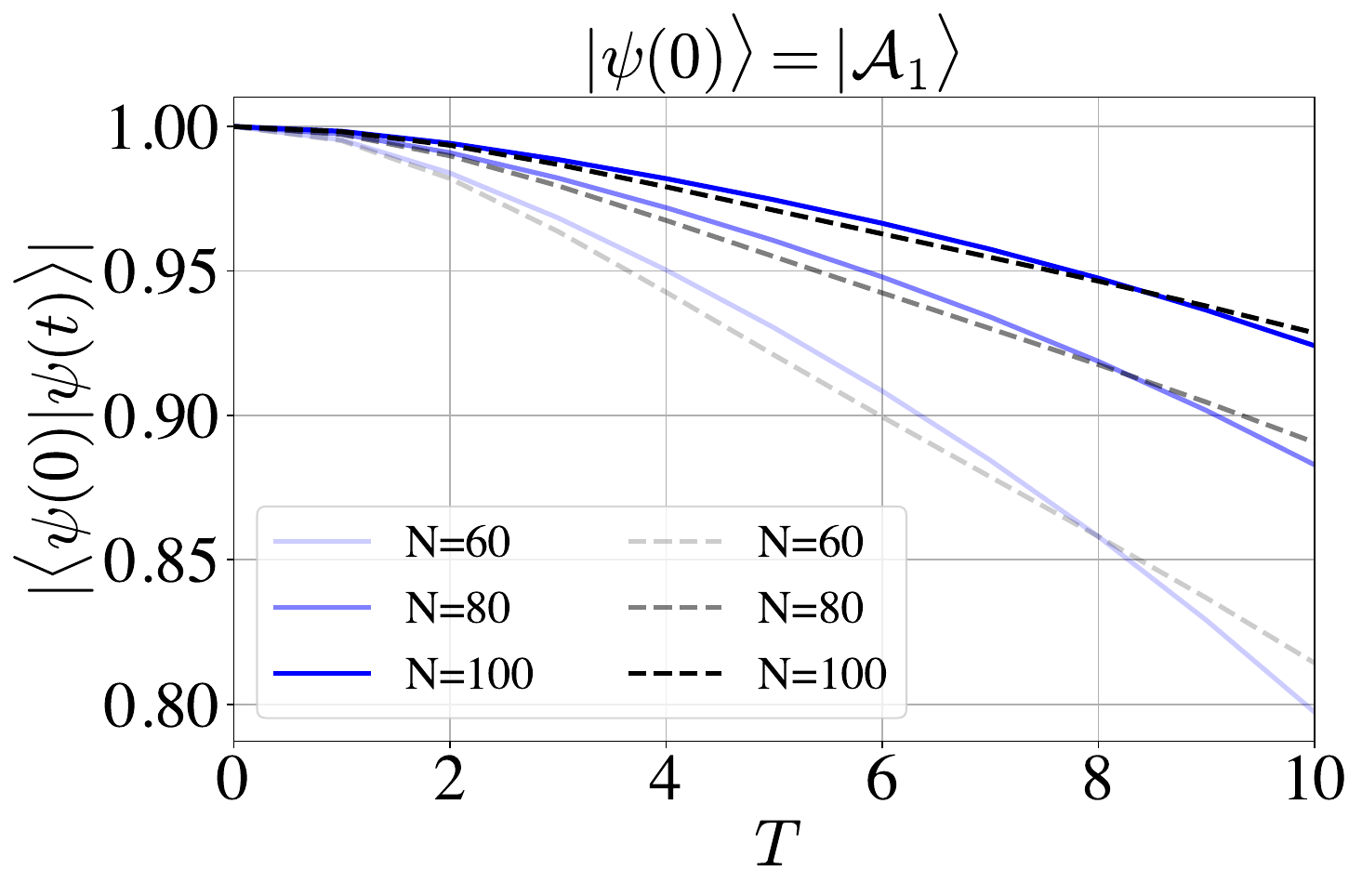}};
        \node[anchor=north west,font=\bfseries] at (e.north west) {(e)};
    \end{tikzpicture}
    \hspace{1cm}
    \begin{tikzpicture}
        \node[inner sep=0] (f) {\includegraphics[width=0.38\linewidth]{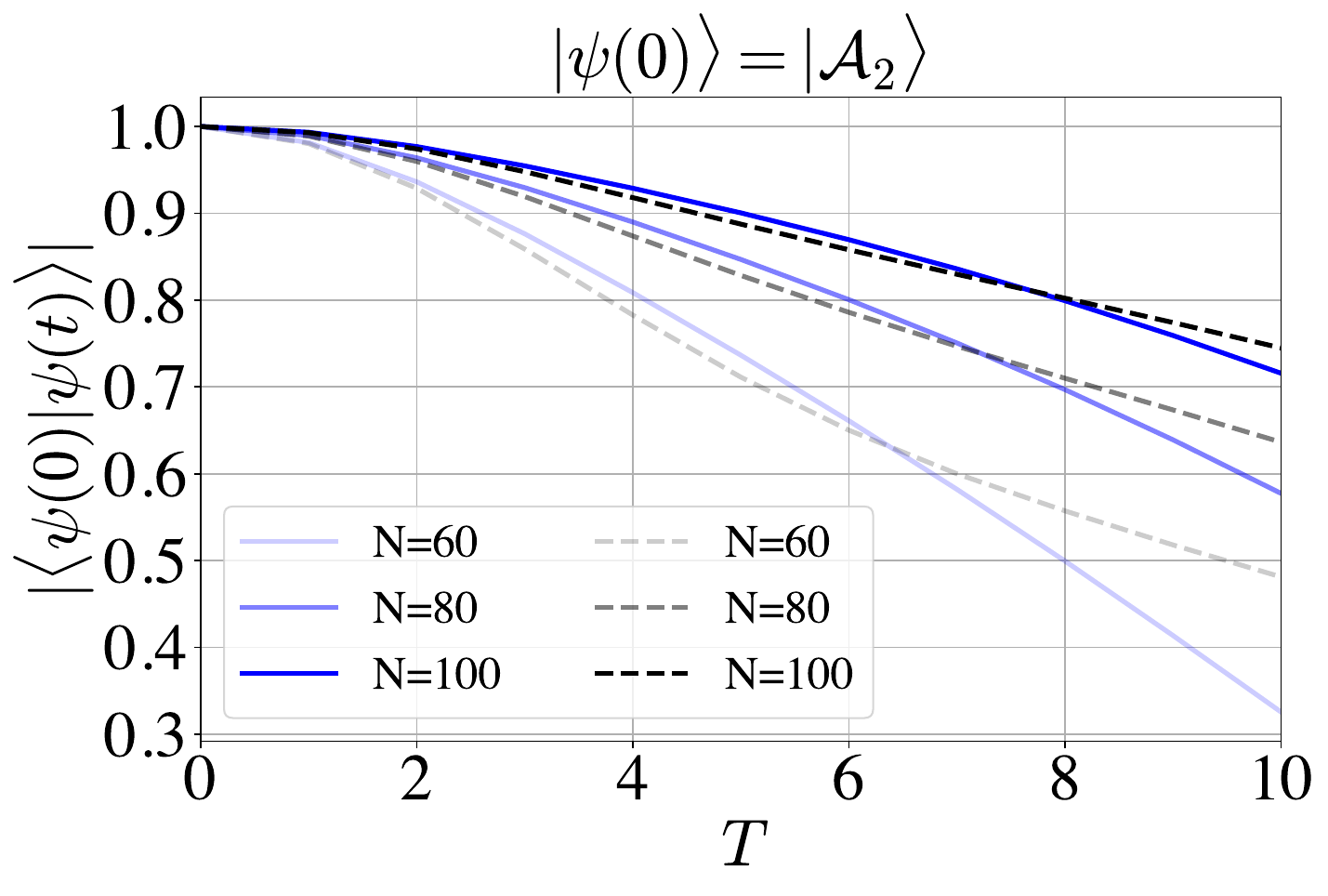}};
        \node[anchor=north west,font=\bfseries] at (f.north west) {(f)};
    \end{tikzpicture}

    \begin{tikzpicture}
        \node[inner sep=0] (g) {\includegraphics[width=0.38\linewidth]{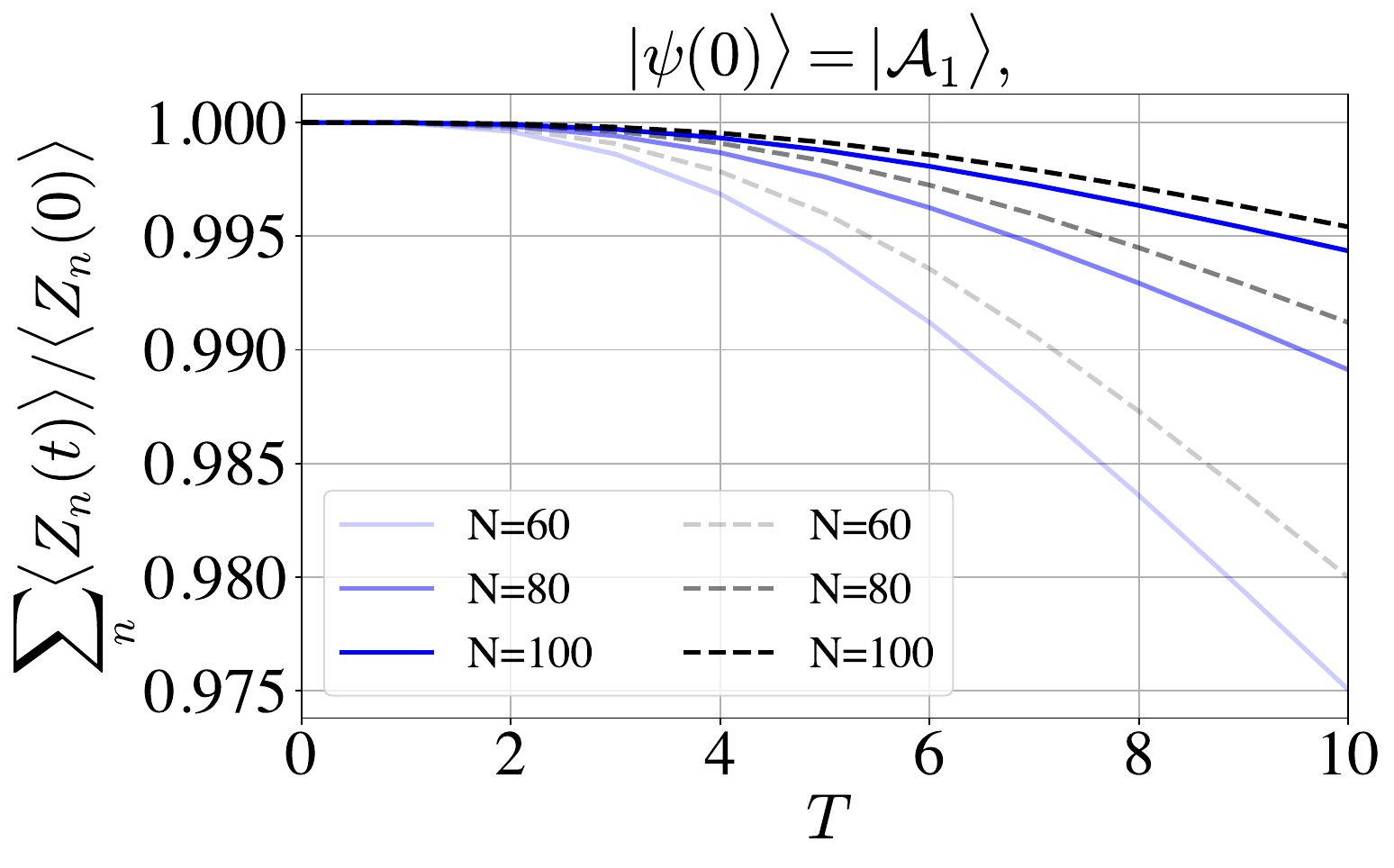}};
        \node[anchor=north west,font=\bfseries] at (g.north west) {(g)};
    \end{tikzpicture}
    \hspace{1cm}
    \begin{tikzpicture}
        \node[inner sep=0] (h) {\includegraphics[width=0.38\linewidth]{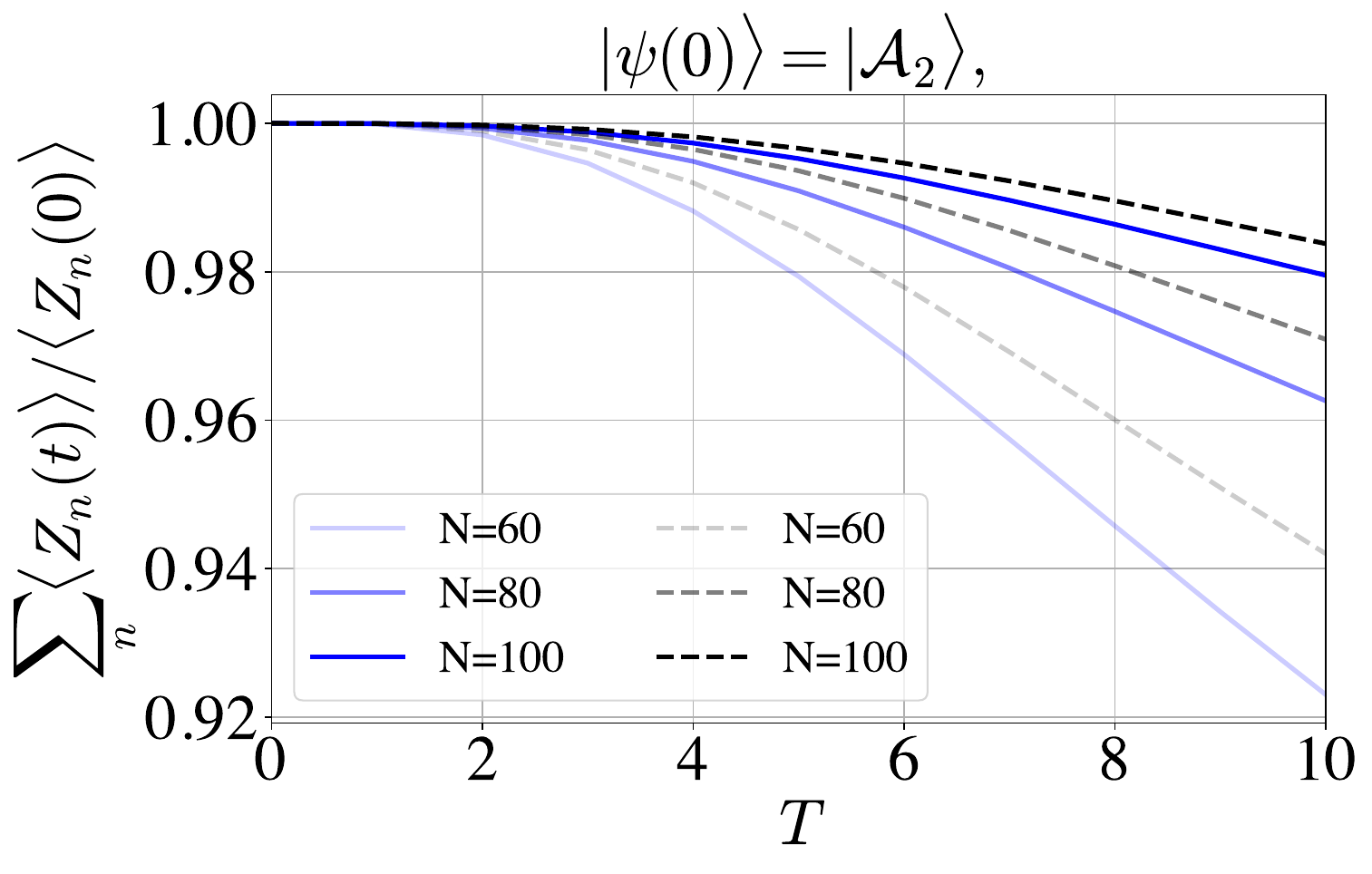}};
        \node[anchor=north west,font=\bfseries] at (h.north west) {(h)};
    \end{tikzpicture}

    \caption{Results for the Floquet circuit model with and without exponential degeneracy at $\varphi_\alpha = 0$. Panels \textbf{(a), (b)} show overlaps of $\{\ket{\mathcal{A}_1}, \ket{\mathcal{A}_2}\}$ with the Floquet eigenstates for $\hat{h}_e = \tilde{\sigma}_x + 2\tilde{\sigma}_y + \tilde{\sigma}_z$, $\hat{h}_o = 3\tilde{\sigma}_x + \frac{\pi}{2}\tilde{\sigma}_y$, respectively. Panels \textbf{(c), (d)} show the same overlaps for $\hat{h}_e = \tilde{\sigma}_x + 2\tilde{\sigma}_y$, $\hat{h}_o = 3\tilde{\sigma}_x + \frac{\pi}{2}\tilde{\sigma}_y$, where an exponential mid-spectrum degeneracy is present, as discussed in Supplementary material \ref{app: midspectrum degeneracy}. Parameters used to obtain the overlaps: $g=1$, $N=12$. Black (blue) denotes results without (with) degeneracy.
    The dynamical results are obtained via TEBD (with bond dimension $\chi=512$), using Julia ITensor  \cite{itensor}. The fidelity with respect to the initial state is shown in \textbf{(e), (f)} and total magnetisation in \textbf{(g), (h)} for $N \in \{60, \, 80, \, 100\}$. Increasing $N$ slows the decay of the fidelity as well as the magnetisation, a clear indication of AQMBS, similarly to the Hamiltonian model case.}
    \label{fig: aqmbs_dynamics_floquet}
\end{figure*}

\end{document}